\providecommand{\customgenericname}{}
\newcommand{\newcustomtheorem}[2]{%
  \newenvironment{#1}[1]
  {%
   \renewcommand\customgenericname{#2}%
   \renewcommand\theinnercustomgeneric{##1}%
   \innercustomgeneric
  }
  {\endinnercustomgeneric}
}
\newtheorem{thm}{Theorem}
\newtheorem{lem}{Lemma}
\newtheorem{defn}{Definition}
\newcommand{\dd}{\mathrm{d}}
\newcommand{\beq}{\begin{equation}}
\newcommand{\eeq}{\end{equation}}
\numberwithin{equation}{section}
\newcommand{\ket}[1]{| #1 \rangle}
\newcommand{\bra}[1]{\langle #1 |}
\newcommand{\braket}[2]{\left\langle #1 \mid #2 \right\rangle}
\def\b{\overline{b}}
\def\be{\beta}
\def\ma{\mathrm{max}}
\def\mi{\mathrm{min}}
\def\eps{\varepsilon}
\def\O{\mathcal{O}}
\def\h{\mathcal{H}}
\def\S{\mathcal{S}_=}
\def\s{\mathcal{S}_\leq}
\newcommand*{\tr}{\mathrm{Tr}}
\def\rrho{\tilde{\rho}}
\newcommand{\JZ}[1]{\textcolor{black}{#1}}
\title{The refined quantum extremal surface prescription from the asymptotic equipartition property}
\author{Jinzhao Wang }
\affiliation{\small \it Institute for Theoretical Physics, ETH 8093 Z\"urich, Switzerland}
\emailAdd{jinzwang@phys.ethz.ch}
\abstract{Information theoretic ideas have provided numerous insights in the progress of fundamental physics, especially in our pursuit of quantum gravity. In particular, the holographic entanglement entropy is a very useful tool in studying AdS/CFT, and its efficacy is manifested in the recent black hole page curve calculation. On the other hand, the one-shot information theoretic entropies, such as the smooth min/max-entropies, are less discussed in AdS/CFT.  They are however more fundamental entropy measures from the quantum information perspective and should also play pivotal roles in holography. We combine the technical methods from both quantum information and quantum gravity to put this idea on firm grounds. In particular, we study the quantum extremal surface (QES) prescription that was recently revised to highlight the significance of one-shot entropies in characterizing the QES phase transition. Motivated by the asymptotic equipartition property (AEP), we derive the refined quantum extremal surface prescription for fixed-area states via a novel AEP replica trick, demonstrating the synergy between quantum information and quantum gravity. We further prove that, when restricted to pure bulk marginal states, such corrections do not occur for the higher R\'enyi entropies of a boundary subregion in fixed-area states, meaning they always have sharp QES transitions. Our path integral derivation suggests that the refinement applies beyond AdS/CFT, and we confirm it in a black hole toy model by showing that the Page curve, for a black hole in superposition of two radiation stages, receives a large correction that is consistent with the refined QES prescription.}
\begin{document}
\maketitle

\section{Introduction}\label{sec:intro}

The notion of entanglement (von Neumann) entropy is getting popular in our pursuit of understanding fundamental physics. As an abstract information-theoretic measure of quantum correlation, its significance to physics was not recognized until the landmark discovery of black hole entropy~\cite{bekenstein1972black,Bekenstein1973,bekenstein1974generalized,hawking1975particle} and the associated information loss paradox~\cite{hawking1976breakdown}. This was a paradigm shift. High energy physicists were then motivated to examine the previously ignored kinematic aspect of the quantum field theory more carefully than ever~\cite{unruh1976notes,bisognano1976duality,hooft1985quantum,bombelli1986quantum,srednicki1993entropy,susskind1998some,callan1994geometric,holzhey1994geometric}. They are also joined by the condensed matter physicists~\cite{calabrese2004entanglement,calabrese2005evolution,calabrese2006entanglement,hastings2006solving,kitaev2006topological,wolf2008area,li2008entanglement}, to whom the entanglement entropy can serve as a useful probe of the complicated wave function of a many-body system. The entanglement entropy soon acquired independent interests in both communities. Progress has been made not only in developing the techniques that compute the entropy, such as the replica trick method~\cite{mezard1987spin,calabrese2004entanglement,lewkowycz2013generalized}, but also in discovering surprising seminal results, such as the entropic c/F/a-theorems~\cite{casini2004finite,casini2015mutual,casini2017markov} in quantum field theory (QFT) and the Ryu-Takayanagi (RT) formula in AdS/CFT~\cite{ryu2006holographic,ryu2006aspects}. 

In particular, the RT formula has ever since occupied the center stage in the research of holographic duality~\cite{hubeny2007covariant,headrick2007holographic,casini2011towards,swingle2012entanglement,lewkowycz2013generalized,faulkner2013quantum,engelhardt2015quantum,dong2016gravity,harlow2017ryu,dong2018entropy,dong2019flat,akers2019holographic} (cf. reviews by~\cite{rangamani2017holographic,nishioka2018entanglement,headrick2019lectures}). It has far-reaching implications such as the subregion-subregion duality~\cite{dhw,harlow2017ryu,faulkner2017bulk,harlow2018tasi} and the corresponding quantum error correction picture~\cite{almheiri2015bulk}. The modern understanding of the RT formula is encapsulated in the quantum extremal surface (QES) prescription~\cite{engelhardt2015quantum}, which is the statement that the von Neumann entropy of a boundary marginal state $\rho_B$ supported on a boundary subregion $B$ can be computed in terms of the \emph{generalized entropy} evaluated on the bulk QES:
\begin{equation}\label{nettaaron}
    S(B)_\rho = \mathrm{min\,ext}_\gamma \left[\frac{A[\gamma]}{4G_N} + H(b)_\rho\right]
\end{equation}
where the optimization identifies the minimal extermal\footnote{Here extremal for a codimension-two surface means the mean curvature vector (or the ingoing and outgoing expansions) vanishes. Throughout we shall work in the time-symmetric setting in regard of the QES prescription, so the extremization practically reduces to a minimization.} codimension-two surface $\gamma$ homologous to $B$ to be the QES and $b$ is the corresponding \emph{entanglement wedge} (EW) of $\rho_B$, defined as the region enclosed by $\gamma\cup B$. The first term denotes the Bekenstein-Hawking entropy of the surface $\gamma$. Throughout, the area $A[\gamma]$ should be implicitly interpreted as the expectation value of the area operator $\langle \hat A\rangle$ to account for the quantum fluctuations in the geometry~\cite{engelhardt2015quantum,dong2018entropy}.\footnote{We should also broadly interpret the term ``area'' broadly as a general quasilocal geometric quantity evaluated on $\gamma$ to account for the higher curvature corrections~\cite{wald1993black,iyer1994some,dong2014holographic,camps2014generalized}.} The second term denotes the von Neumann entropy of the bulk matter in the EW.\footnote{\label{ft:symbols} We shall use the symbol $S$ for any holographic entropies defined at boundary, and the symbol  $H$ for any bulk entropy quantities.} This formula~\eqref{nettaaron} extends the generalized entropy defined for black holes to QES in the broader context of AdS/CFT.

In words, the QES prescription simply looks for the bulk surface that gives the smallest generalized entropy. Interesting physics is indicated by the QES formula when the bulk entropy or its variation becomes comparable to the leading area term. The power of \eqref{nettaaron} culminates in the recent Page curve calculations partially resolving the black hole information paradox~\cite{almheiri2019entropy,penington2020entanglement,almheiri2020page,almheiri2020replica,penington2019replica,almheiri2020entropy}. 

The widely used method to compute the entanglement entropy is called the \emph{replica trick}, which is originated from the spin glass theory~\cite{mezard1987spin}. It is then successfully applied to 2d CFTs leading to some landmark results~\cite{calabrese2004entanglement,calabrese2006entanglement}.  In holography, this is the essential tool in the Lewkowycz–Maldacena (LM) derivation of the RT formula~\cite{lewkowycz2013generalized}. The idea is to first compute the one parameter generalisation of the von Neumann entropy, called the R\'enyi entropies indexed by integers $n>1$, by evaluating the trace via a path integral $Z_nB,\rho$ over $n$ glued-up replicas of the original system. Then we need to analytically continue the resulting expression as an entire function of $n$ given only its values at integer $n$'s. 
\begin{equation}
   S_n(\rho)_B := \frac{1}{1-n}\log\tr\rho^n = \frac{1}{1-n}\log\frac{Z_n[B,\rho]}{Z_1[\rho]^n},\quad S(\rho)_B:= \lim_{n\rightarrow 1}  S_n(\rho)_B
\end{equation}
where $Z_1[\rho]^n$ is the normalization.

The replica trick is not always working because the very step of analytic continuation can be subtle (cf. discussions in e.g.~\cite{rangamani2017holographic,headrick2019lectures,d2021alternative}). First of all, the legitimacy of such procedure needs the uniqueness guarantee from the Carlson's theorem~\cite{boas2011entire}. The validity of its associated assumptions is often ad-hocly assumed in QFT and the analytic continuation is heuristically used as long as the end result is physically sensible. Furthermore, even if we consider a regularized field theory that is effectively finite-dimensional and the Carlson's theorem can be applied, the analytic continuation can still be difficult in practise when the integer R\'enyi entropies are not in analytic expressions. For example, it is not yet known how to extrapolate the von Neumann entropy for the two-intervals case in CFTs of free compact bosons from the values at integer R\'enyi entropies~\cite{calabrese2009entanglement,calabrese2011entanglement,de2015entanglement,ruggiero2018entanglement,d2021alternative}. 

An outside perspective could be useful here. When the entanglement entropy was getting attention in the physics community, it was already a well-established concept in quantum information theory. Through the lens of information theory, the entanglement entropy together with its classical counterpart, Shannon entropy, are only ever relevant in the \emph{asymptotic} regime~\cite{shannon1948mathematical,schumacher1995quantum,bennett1996concentrating,bennett1996purification,schumacher1997sending,holevo1998capacity}. That means when we try to characterize the ability to perform certain information-processing tasks, like data compression or channel coding, we are allowed to consider an infinite amount of resources, such as channels, and only ask about the rate at which the information can be transformed in the limit when the law of large number kicks in. In these scenarios, it turns out that the von Neumann entropy or Shannon entropy is the most relevant quantity to look at. However, people start to realize the limitation of the asymptotic analysis in real world applications with finite resources, and this calls upon the development of \emph{one-shot} information theory. It lives on the other end of the spectrum where one only considers a single use of the input resource, so in a way characterizing one-shot scenario concerns the most fundamental aspect of information theory. More entropy measures are developed when information theorists move onto this new territory~\cite{renner2004smooth,renner2008security,wang2012one,muller2013quantum,wilde2014strong,dupuis2014generalized,tomamichel2015quantum}. In particular, the smooth min/max-entropies are introduced for their operational significance in various one-shot tasks~\cite{renner2004smooth,renner2008security}, such as compression, state merging, randomness extraction and leftover hashing, etc~\cite{berta2009single,konig2009operational,tomamichel2011leftover,tomamichel2015quantum}. We shall also generally refer them as the \emph{one-shot entropies} among others~\cite{wang2012one,dupuis2014generalized}. The smooth min/max-entropies are particularly interesting for us in light of the asymptotic equipartition property (AEP)~\cite{tomamichel2009fully}. The AEP formalizes the intuition that the entanglement entropy is the asymptotic limit of the one-shot entropies. Formally, it claims that for some state $\rho_{B^m}=\rho^{\otimes m}_{B}$ and some $0<\varepsilon<1$, the (unconditional) smooth min/max-entropy asymptotically behave as
 \begin{equation}\label{AEP}
     \begin{aligned}
           \lim_{m\rightarrow\infty} \frac1m S^\varepsilon_{\mathrm{min/max}}(B^m)_\rho = S(B)_\rho\,.
     \end{aligned}
 \end{equation}

The main contribution of this work is to propose a novel \emph{AEP replica trick} that entails \emph{no analytical continuation} (see below). We show its efficacy in the context of AdS/CFT by deriving the recently revised QES prescription due to Akers and Penington (AP)~\cite{akers2021leading}, in analogy with how Lewkowycz–Maldacena derived the Ryu-Takayanagi forumla using the standard replica trick~\cite{lewkowycz2013generalized,faulkner2013quantum}. It is a nice demonstration of the synergy between quantum information and quantum gravity in terms of both ideas and techniques.\\
\begin{figure}
  \centering
\includegraphics[width=0.45\linewidth]{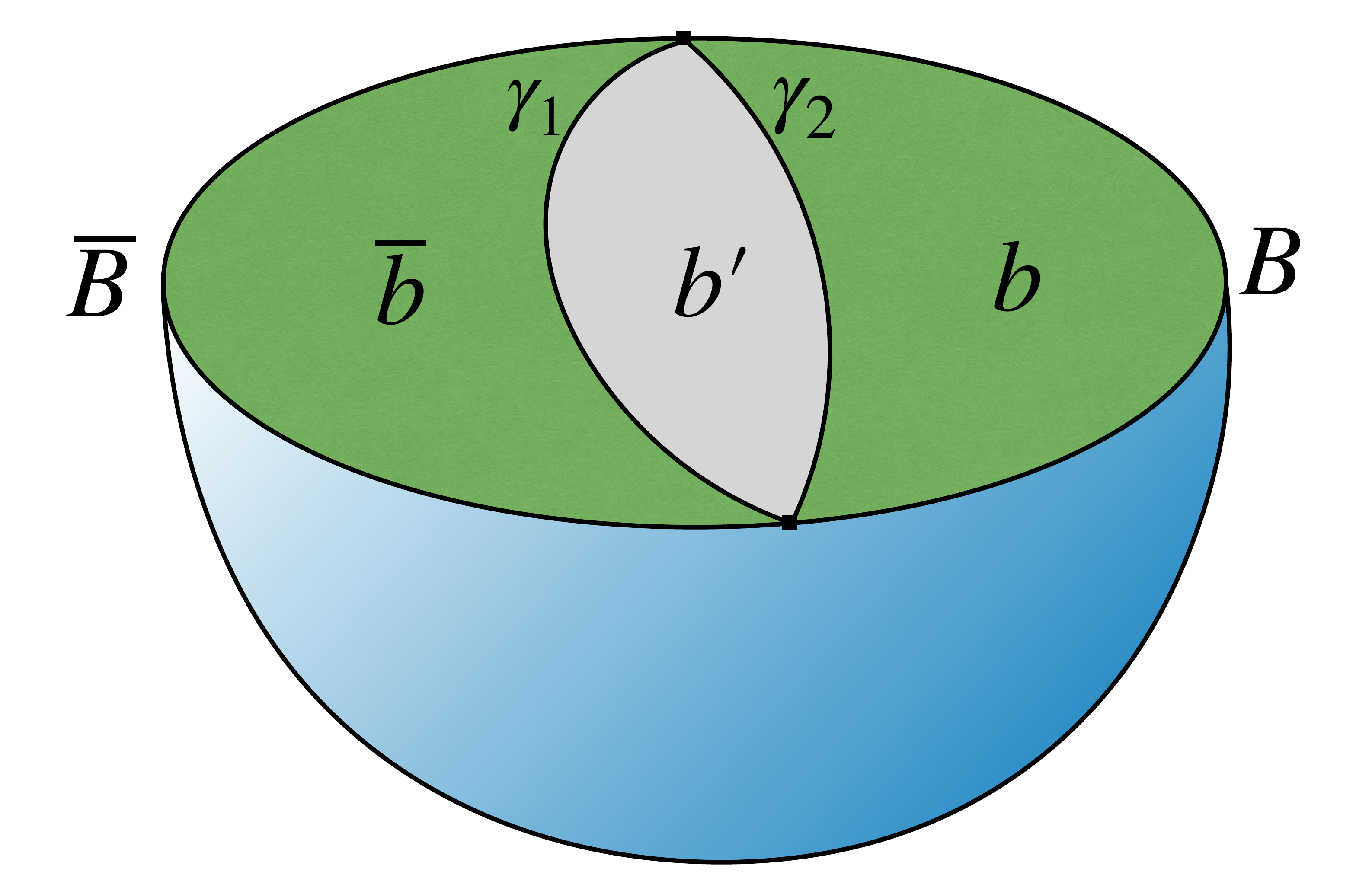}
  \caption{A simple setup for the QES phase transition. There are two QES candidates $\gamma_{1,2}$ homologous to the boundary $B$ with areas $A_{1,2}$. The bulk is divided into three subregions $b,b'$ and $\b$. The global bulk state $\rho_{bb'\b}$ is pure. }
  \label{fig:setting1}
\end{figure}

 AP pointed out that in regimes close to a entanglement wedge phase transition, the naive QES predication deviates from the actual von Neumann entropy calculated on the boundary at the leading $\O(G^{-1})$ order. A subleading deviation of order $\O(G^{-\frac12})$ is already observed in the Page curve calculation~\cite{penington2019replica}, and it is also analyzed in two other related works~\cite{marolf2020probing,dong2020enhanced}. However, only AP's work emphasizes the key roles played by the one-shot entropies. When we have multiple QES candidates to pick from, instead of comparing the bulk von Neumann entropies, AP argued that it is the bulk smooth min/max-entropies that decide which QES can be used for calculating the von Neumann entropy of the boundary subregion. In the particular case of two competing QES candidates ($\gamma_{1,2}$ with area $A_{1,2}$) for some CFT state $\rho$ with a geometric dual (cf. Figure~\ref{fig:setting1} for an illustration of the setup\footnote{AP used a different setup (cf. Figure 3 in~\cite{akers2021leading} and also~\cite{akers2019large}) where the boundary region $B$ consists of two disconnected pieces. At the center there is a dustball with large entropy, which can be purified by an auxiliary system $R$. As far as the QES phase transition is concerned, our setups are effectively the same once we include $R$ as part of $\overline{B}$. }.), we have the following refined QES prescription:
\begin{equation}\label{eq:refined_QES}
    S(B)_\rho \approx^\eps 
    \begin{cases}
        \frac{A_1}{4G_N} + H(bb')_\rho ,& H^\eps_\mathrm{max}(b'|b)_\rho \le \frac{A_2 - A_1}{4G_N} \quad\quad\quad\quad\quad\quad\quad\,(\mathrm{regime}\; 1)\\
        \text{(indefinite)} ,& H^\eps_\mathrm{min}(b'|b)_\rho \le \frac{A_2 - A_1}{4G_N} \le H^\eps_\mathrm{max}(b'|b)_\rho \,\,\,\,\,\,(\mathrm{regime}\; 2)\\
        \frac{A_2}{4G_N} + H(b)_\rho ,& H^\eps_\mathrm{min}(b'|b)_\rho \ge \frac{A_2 - A_1}{4G_N}\,\quad\quad\quad\quad\quad\quad\quad \,(\mathrm{regime}\; 3)
    \end{cases}
\end{equation}
where $b$ and $b'$ are bulk subregions partitioned by $\gamma_{1,2}$, \JZ{and $\approx^\eps$ means the equality holds up to $\O(\eps)$ corrections.} 

On the other hand, the naive QES prescription says
\begin{equation}\label{eq:naive_QES}
    S(B)_\rho = 
    \begin{cases}
        \frac{A_1}{4G_N} + H(bb')_\rho ,& H(b'|b)_\rho \le \frac{A_2 - A_1}{4G_N} \\
        \frac{A_2}{4G_N} + H(b)_\rho ,& H(b'|b)_\rho \ge \frac{A_2 - A_1}{4G_N}\,,
    \end{cases}
\end{equation}
which is just a rewriting of \eqref{nettaaron} in this particular scenario with the conditional von Neumann entropy $H(b'|b)_\rho := H(b')_\rho-H(bb')_\rho$. These entropy measures satisfy the following relation (for small $\eps$):
\begin{equation}\label{threeentropies}
    \quad H^\eps_\mathrm{min}(b'|b)_\rho\leq H(b'|b)_\rho\leq H^\eps_{\mathrm{max}}(b'|b)_\rho\,.
\end{equation}
Therefore, we see that \eqref{eq:refined_QES} is refines the naive QES prescription \eqref{eq:naive_QES} by restricting the regimes where the RT formula holds.

Using \eqref{threeentropies}, we see that the smooth min/max-entropies help to refine the QES prescription by restricting the validity regimes $1$ and 3 where the generalized entropy correctly measures $S(B)$, while also opening up an indefinite regime $2$ where $S(B)$ is generally not captured by the generalized entropy of some bulk surface. \JZ{This refinement is relevant whenever we are dealing with a bulk state with an entanglement spectrum that deviates largely from the flat spectrum, such that the min/max-entropies differ from the von Neumann entropy.} With a dustball example, AP showed that a leading order correction indeed occurs in regime 2 when the bulk dustball is in a mixture of a thermal state with some pure state. The details in regime 2 is not completely understood yet, and we shall not attempt to go beyond \eqref{eq:refined_QES} in this work. This refined prescription shows that the switching between the QES' is not simply a sharp first-order transition and the ususal RT formula is only applicable within the bounds defined by the min/max-entropies. \\

 From the quantum information point of view, this refinement is not surprising as the naive QES prescription is too simple to be correct. It is well known that the validity of the quantum Ryu-Takayanagi formula and also the QES prescription is equivalent to the achievability of the \emph{entanglement wedge reconstruction} (EWR) task~\cite{dhw,harlow2017ryu}. The puzzle is that there is only ever one copy of the system considered in EWR, whereas the von Neumann entropy alone usually cannot characterize one-shot scenarios\footnote{There is however one exception that we know of~\cite{boes2019neumann,wilming2020entropy}.} and it only becomes relevant in the asymptotic limit with many copies of the system. We propose that one way to make sense of it is via the AEP \eqref{AEP}. 

How do we see the AEP in all these QES discussions? It's natural to think that since the replica method involves multiple copies, a path-integral derivation of refined QES prescription should provide the answer. However, it cannot simply be the standard replica trick that's been used to derive the RT formula~\cite{lewkowycz2013generalized,faulkner2013quantum}, because there the $n$ copies are all contracted and $n$ is then analytically continued to one. Motivated by the AEP, here we propose an alternative replica method to compute the von Neumann entropy by introducing $n\times m$ replicas. The idea is to use the AEP as the \emph{defining formula} for the von Neumann entropy. Instead of taking the limit $n\rightarrow 1$ following the standard definition, we take the opposite limit $n,m\rightarrow \infty$ following the AEP for the smooth min-entropy:
 \begin{equation}\label{minAEP}
           S(B)_\rho := \lim_{m\rightarrow\infty}\lim_{n\rightarrow\infty}\max_{\tilde{\rho}_{B^m}\approx_\varepsilon\rho_B^{\otimes m}}  \frac{1}{m(1-n)}  \log\tr\left[\tilde{\rho}_{B^m}\right]^n \,.
 \end{equation}

The trace will be evaluated using the gravitational path integral with the replica trick for fixed-area states~\cite{dong2019flat,akers2019holographic}. An important advantage of the AEP replica trick over the standard one is that no analytical continuation is needed, because we can take the infinity limits as integer sequences. On the other hand, one drawback is that since the smoothing is presumably hard to implement in a gravitational calculation, one may only be able to evaluate the RHS with some feasible state $\tilde{\rho}$. In this case, one can only obtain lower bounds for the von Neumann entropy via \eqref{minAEP}. Nevertheless, together with the upper bounds obtained from the standard replica trick calculation, it turns out the bounds are enough to give the refined QES prescription, in particular, the refined conditions in \eqref{eq:refined_QES}. We believe \eqref{minAEP} is more intuitive as a formula for the von Neumann entropy because it manifests the asymptotic nature of the von Neumann entropy, and they could have applications in other problems concerning the holographic entanglement entropy. We shall name this approach the \emph{AEP replica trick}. 

As compared to the standard replica trick, the AEP replica trick is a more robust approach to compute entropies. We have mentioned that the analytical continuation step required in standard replica trick can be difficult when we have non-analytic expressions for the integer R\'enyi entropies. In holography, similar difficulties occur exactly when we consider the refinement of the QES prescription in phase transitions. To properly resolve the transition over the indefinite regime, one needs to add up the contributions from non-replica-symmetric saddles in the gravitational path integral calculation~\cite{marolf2020probing,dong2020enhanced,akers2021leading}. This can sometimes be done for simple states via the resolvent method~\cite{penington2019replica,akers2021leading}, which effectively extracts the entanglement spectrum from the integer R\'enyi entropies, but is hard to implement for general states. 

Of course, there is no cheat here when using the AEP replica trick in the sense that if one would like to look for the optimal state for the smooth entropy and thus compute the entanglement entropy with equality, then that's equivalent to resolve the entanglement spectrum. However, one often has to resort to bounds in more realistic physical scenarios without much symmetry to exploit. Then the integer R\'enyi entropies can hardly be extracted as analytic expressions, whereas the smooth min/max-entropies can nevertheless be estimated with mathematically rigorous bounds. Our derivation of the refined QES prescription exactly showcases this advantage, where the resolvent calculation cannot be implemented for general states. This very approach of getting to the von Neumann entropy via AEP has also been used in various other applications in quantum theory, from proving the strong subadditivity~\cite{beaudry2012intuitive} and its strengthened version~\cite{fawzi2015quantum} to entropic uncertainty relations~\cite{berta2010uncertainty}.\\

\JZ{Besides featuring a new technique, our path-integral derivation also suggests that the refinement should apply beyond the context of AdS/CFT, just as how the RT formula applies to broader gravitational scenarios. One particularly important application of the QES prescription is the calculation of the Page curve for a evaporating black hole, where an \emph{island formula} is discovered to compute properly the fine-grained entropy of the Hawking radiation~\cite{almheiri2020page,penington2019replica,almheiri2020replica,almheiri2020entropy}. The island formula is in fact a specific application of the QES formula, so we naturally expect the same revision we had for the QES prescription also extends to the island formula and the Page curve. In this work, we support this claim by demonstrating an example of a corrected Page curve in the toy model of~\cite{penington2019replica}.}\\

The paper is organized as follows: we start in section~\ref{sec:prelim} with some necessary backgrounds on the techniques used in our derivation. In section~\ref{sec:aepreplica} the standard replica trick for fixed-area states is reviewed and the AEP replica trick is introduced.  We apply the AEP replica trick to derive the refined QES prescription in section~\ref{sec:derivation} and extend it to the general multi-QES scenario in section~\ref{sec:multiqes}. On the other hand, we prove in section~\ref{sec:holorenyi} that for pure bulk marginal states the higher holographic R\'enyi-entropies, unlike the von Neumann entropy, do not have such leading order corrections. \JZ{In section~\ref{sec:pagecurve}, we show that the Page curve in a black hole toy model also obeys the refined QES prescription.} We finish with some comparisons with AP's argument and discussions on plausible future directions in section~\ref{sec:discussion}. The technical details that come into the \JZ{derivation and the calculations} are provided in Appendix~\ref{sec:proof} \JZ{and~\ref{sec:JTEOW}}, and we also sketch a complementary max-entropy replica method for obtaining upper bounds in Appendix~\ref{sec:maxchainrule}.

\section{Preliminaries}\label{sec:prelim}

Our derivation features techniques from both quantum information and quantum gravity, so here we introduce the essential tools for readers who are not familiar with both subjects. In particular, we briefly review the conditional min/max-entropies and their chain rules; and the gravitational path integral, the replica trick and fixed-area states. 

\subsection{The one-shot entropies}\label{sec:entropies}

We consider finite dimensional quantum states on some Hilbert space $\h$ described by density operators, and denote the space of normalized states as $\S(\h)$. Since the smoothing operation is important to our derivation, it's technically convienient to consider subnormalized states while performing smoothing, and we denote the space as $\s(\h)$. We shall define the conditional min/max-entropies using the sandwiched quantum R\'enyi divergences $\widetilde{D}_n$, which is an one parameter family that generalises the Umegaki relative entropy. For $\rho,\sigma\in\s(\h)$,
\begin{equation}
    \widetilde{D}_n(\rho||\sigma) = \frac{1}{n-1}\log\left[\tr \left(\sigma^{\frac{1-n}{2n}}\rho\sigma^{\frac{1-n}{2n}}\right)^n/\tr\rho^n\right] .
\end{equation}
We shall restrict to the domain $n\in[\frac12,\infty]$ where the data processing inequality holds for the sandwiched quantum R\'enyi divergences, and we assume the support of $\rho$ is contained in the support of $\sigma$ such that the divergence is bounded. We can recover the relative entropy by taking the limit, $\lim_{n\rightarrow  1} \widetilde{D}_n(\rho||\sigma) = S(\rho||\sigma)$. $\widetilde{D}_n$ is monotonic in $n$, the two ends of the R\'enyi spectrum mark the two important instances of the R\'enyi divergence, namely the log-fidelity $D_\frac12$ and the max-divergence $D_\infty$. The min/max-entropies of some quantum state $\rho_{AB}$ are then defined as 
\begin{equation}\label{minmaxentropies}
\begin{aligned}
    H_\ma (A|B)_\rho = \sup_{\sigma_B} - \widetilde{D}_\frac12 (\rho_{AB}||I_A\otimes\sigma_B) =& \max_{\sigma_B}\log F(\rho_{AB},I_A\otimes\sigma_B),\\
    H_\mi (A|B)_\rho = \sup_{\sigma_B} - \widetilde{D}_\infty (\rho_{AB}||I_A\otimes\sigma_B) =& \max_{\sigma_B}-\log ||\sigma_B^{-\frac12}\rho_{AB}\sigma_B^{-\frac12}||_\infty,\\
     =&\max_{\sigma_B}\max\{\lambda\in \mathbb{R}, \rho_{AB}\leq 2^{-\lambda}I_A\otimes\sigma_B\}.
\end{aligned}
\end{equation}
Similarly, one can define the general R\'enyi conditional entropies $H_n$ in the same way and
\begin{equation}
    H_\ma(A|B)_\rho=H_\frac12(A|B)_\rho,\,\,H_\mi(A|B)_\rho= H_\infty(A|B)_\rho,\,\, \lim_{n\rightarrow 1}H_n(A|B)_\rho = S(A|B)_\rho\,.
\end{equation}

Because of the minus sign in \eqref{minmaxentropies}, $n\mapsto H_n$ is anti-monotone as opposed to $\widetilde{D}_n$ so we have \eqref{threeentropies} as mentioned in the introduction,
\begin{equation}
 \quad H_\mathrm{min}(A|B)_\rho\leq S(A|B)_\rho\leq H_{\mathrm{max}}(A|B)_\rho\,.
\end{equation}
The smooth min/max-entropies are then defined as~\cite{renner2004smooth,tomamichel2015quantum}
\begin{equation}
    H^\eps_\ma(A|B)_\rho := \min_{\eta_{AB}\in\mathcal{B}^\eps(\rho_{AB})} H_\ma(A|B)_\eta,\quad H^\eps_\mi(A|B)_\rho := \max_{\eta_{AB}\in\mathcal{B}^\eps(\rho_{AB})} H_\mi(A|B)_\eta\,,
\end{equation}
and they satisfy a \emph{duality relation} for any pure state $\rho_{ABC}$:
\begin{equation}\label{eq:duality}
    H^\eps_\ma (A|B)_\rho = - H^\eps_\mi (A|C)_\rho\,.
\end{equation}

In the above definition, $\mathcal{B}^\eps(\rho_{AB})$ denotes the $\eps$-ball centered at $\rho_{AB}$ whose radius is measured by the purified distance~\cite{tomamichel2010duality}, $\mathcal{B}^\eps(\rho_{AB}):=\{\eta\in\s(\h_{AB}), P(\eta,\rho)\leq\eps\}$ and
\begin{equation}
    P(\eta,\rho):=\sqrt{1-F(\eta,\rho)}\,.
\end{equation}
where the (generalized) fidelity for subnormalized states are given by 
\begin{equation}
    F(\eta,\rho):= ||\sqrt{\eta}\sqrt{\rho}||_1+\sqrt{(1-\tr \eta)(1-\tr \rho)}\,.
\end{equation}
Note that here it's important to allow the smoothing to include subnormalized states in order to leave the smooth entropies unchanged when embedded into larger Hilbert spaces. This relaxation is important in proving the chain rules (cf. the Appendix~\ref{sec:proof}).

The purified distance, as opposed to the more common trace distance, has the advantage that we can always find extensions or purifications of the given density operator such that their purified distance remains the same (see Lemma~\ref{lem:lemma2})~\cite{tomamichel2010duality}. This is useful in establishing several chain rules for the smooth min/max-entropies~\cite{vitanov2013chain}. Unlike the chain rule for the von Neumann entropy, the chain rules for the smooth min/max-entropies only hold approximately as inequalities. For a density operator $\rho_{ABC}$, and $\varepsilon,\varepsilon',\varepsilon''\in [0,1)$ with $\varepsilon>2\varepsilon''+\varepsilon'$, we have 
\begin{equation}
    \begin{aligned}
    &H^\eps_\mi(AB|C)_\rho\geq H^{\eps''}_\mi(A|BC)_\rho+H_\mi^{\eps'}(B|C)_\rho-\delta ,\\
    &H^{\eps'}_\mi(AB|C)_\rho\leq H^{\eps''}_\ma(A|BC)_\rho+H_\mi^{\eps}(B|C)_\rho+3\delta
\end{aligned}
\end{equation}
where $\delta:=-\log(1-\sqrt{1-(\eps-2\eps''-\eps')^2})$. There are a couple of more chain rules but we will only need these two. As often in applications using the chain rules, we choose the smoothing parameters such that the $\delta$ is small compared to the entropy values. For our purposes, the $\delta$ term is of order $\O(\log\eps)\sim O(\log G_N)$ and is therefore subleading as compared to the entropy terms of order $\O(G^{-1}_N)$. We henceforth ignore these remainder terms.

The chain rules play the key role in our derivation as it compares the min-entropies evaluated on nested regions in terms of the conditional min/max-entropy. However, one issue is that the above chain rule holds when each entropy is evaluated on some distinct state within the $\eps$-ball around $\rho$, whereas we would like a chain rule relating min-entropies evaluated on the same state. Therefore, we will not use the chain rule exactly as given above, but rather a stronger version that actually serves as an intermediate step while proving it. The claim is that there exists a state $\rho'_{ABC}$, which is more than $\eps'+2\eps''$ distance away from $\rho_{ABC}$ and has the marginal state $\rho'_{BC}$ that maximizes $H^{\eps'}_\mi(B|C)_\rho$, such that
\begin{equation}\label{chainrule0}
    H_\mi(AB|C)_{\rho'}\geq H^{\eps''}_\mi(A|BC)_\rho+H_\mi^{\eps'}(B|C)_\rho=H^{\eps''}_\mi(A|BC)_\rho+H_\mi(B|C)_{\rho'}\,.
\end{equation}
 It then implies the first chain rule above. When applying \eqref{chainrule0} later, we shall also set $\eps''=0$ and let system $C$ be trivial. So we have a state $\rho'_{AB}$ and some $1>\eps>\eps'>0$, which is at $\eps$ distance away from $\rho_{AB}$ and has the marginal state $\rho'_{B}$ that maximizes $H^{\eps'}_\mi(B)_\rho$ i.e. $H_\mi(B)_{\rho'} = H^{\eps'}_\mi(B)_\rho$, such that:
 \begin{equation}\label{chainrule1}
    H_\mi(AB)_{\rho'}\geq H_\mi(A|B)_\rho+H_\mi^{\eps'}(B)_\rho=H_\mi(A|B)_\rho+H_\mi(B)_{\rho'}\,.
\end{equation}
 Now the chain rule is set in a useful form for our purposes: the min-entropies $H_\mi(AB), H_\mi(B)$ are evaluated on the same state $\rho'_{AB}$ and they can be compared using the conditional min-entropy evaluated on the original state. It will help us characterize the transition between regimes 2 and 3 in \eqref{eq:refined_QES}. 
 
We have a similar statement corresponding to the other chain rule, and it will help us characterize the transition between regimes 1 and 2 in \eqref{eq:refined_QES}. There exists a state $\rho''_{AB}$ and some $1>\eps>\eps''>0$, which is at $\eps$ distance away from $\rho_{AB}$ and maximizes $H^{\eps''}_\mi(AB)_\rho$ i.e. $H_\mi(AB)_{\rho''} =H^{\eps''}_\mi(AB)_\rho$ , such that
 \begin{equation}\label{chainrule2}
    H^{\eps''}_\mi(AB)_\rho=H_\mi(AB)_{\rho''}\leq H_\ma(A|B)_\rho+H_\mi(B)_{\rho''}\,.
\end{equation}

The proof of the chain rules \eqref{chainrule1} and \eqref{chainrule2} are given in the Appendix~\ref{sec:proof} and we shall see them in action in section~\ref{sec:derivation}. \\

 We shall finish this section by stating the general AEP theorem~\cite{tomamichel2015quantum} with some remarks. Given a quantum state $\rho_{A^mB^m}=\rho^{\otimes m}_{AB}$ and some $0<\varepsilon<1$, we have
 \begin{equation}\label{AEP2}
     \begin{aligned}
           \lim_{m\rightarrow\infty} \frac1m H^\varepsilon_{\mathrm{min}}(B^m|A^m)_\rho = H(B|A)_\rho\,,\\
            \lim_{m\rightarrow\infty} \frac1m H^\varepsilon_{\mathrm{max}}(B^m|A^m)_\rho = H(B|A)_\rho\,.
     \end{aligned}
 \end{equation}
 Note that the exact value of $\eps$ does not matter here as long as it is small but finite. Since we only want to evaluate the von Neumann entropy for a given boundary region rather than the conditional entropy for two boundary regions, the system $A$ in the AEP above can be set as trivial, which is the version we introduce in \eqref{AEP}. Nonetheless, we believe this formula can be useful whenever the boundary conditional entropy is relevant. It's worth mentioning that the original AEP theorem also entails taking the limit $\varepsilon\rightarrow 0$.  This was essentially needed for the converse part~\cite{tomamichel2009fully}:
 \begin{align}
     \lim_{\varepsilon\rightarrow 0}\lim_{m\rightarrow\infty} \frac1m H^\varepsilon_{\mathrm{min}}(B^m|A^m)_\rho \leq H(B|A)_\rho\,,\\
     \lim_{\varepsilon\rightarrow 0}\lim_{m\rightarrow\infty} \frac1m H^\varepsilon_{\mathrm{max}}(B^m|A^m)_\rho \geq H(B|A)_\rho\,.
 \end{align}
Note that the subtlety here is that for the smooth entropies, it is not true that for arbitrary $\varepsilon$ and  $\rho_{AB}$,  $H^\varepsilon_\mathrm{min}(B|A)_\rho\leq H(B|A)_\rho\leq H^\varepsilon_{\mathrm{max}}(B|A)_\rho$ holds. By continuity, we only expect this to hold for small enough $\varepsilon$. In~\cite{tomamichel2009fully}, a continuity bound was derived with a remainder term that only vanishes when one sends $\varepsilon\rightarrow 0$. Nevertheless, in the asymptotic limit, it turns out one can derive a better continuity bound with a vanishing remainder term in the limit $n\rightarrow\infty$ while fixing some finite $\varepsilon$ (cf. Corollary 6.3 in~\cite{tomamichel2015quantum} for details). We shall therefore use the more general AEP theorem without varying $\eps$. Also, the min-entropy AEP is sufficient for our derivation. We shall comment on the potential applications of the max-entropy AEP as well in section~\ref{sec:discussion}.

\subsection{The gravitational replica trick}\label{sec:replicatrick}

The fact that the gravitational entropy can be formally extracted from an Euclidean path integral was first proposed by Gibbons-Hartle-Hawking in deriving the black hole entropy~\cite{gibbons1977action,hartle1976path}. It is then generalized by Lewkowycz-Maldacena (LM) to AdS/CFT~\cite{lewkowycz2013generalized} using the \emph{replica trick}. It is a widely used method to compute the entanglement entropy in field theory, which is originated from the spin glass theory~\cite{mezard1987spin}. It was then successfully applied to 2d CFTs leading to some landmark results~\cite{calabrese2004entanglement,calabrese2006entanglement}. \\ 

Now we review some essential basics about path integrals in AdS/CFT in Euclidean signature.  One starts with an Euclidean path integral over a compact domain $\mathcal{B}$ that prepares the state of interest $\ket{\rho}$  supported on $\partial \mathcal{B}=B\overline{B}$, which in our case is some holographic CFT boundary state that has a time-symmetric bulk dual. Formally, the path integral defines for us a wave functional $\ket{\rho}$, which maps some field configuration $\phi$ at the boundary to $\braket{\phi}{\rho}$, 
\begin{equation}
    \braket{\phi}{\rho} = \int_\mathcal{B}^{\phi} \mathcal{D}\tilde{\phi}\, 2^{-I_{\text{boundary}}[\tilde{\phi}]}\,,
\end{equation}
where $I_{\text{boundary}}$ is the boundary Euclidean CFT action, and the sources that specifies $\ket{\rho}$ are implicitly captured in $\mathcal{B}$, which is schematically illustrated as the blue shell in Figure~\ref{fig:path1}.  The state is defined on its boundary, depicted as a circled partitioned into subregions $B$ and $\overline{B}$. Note that we shall use base-$2$ for all the exponentials and logarithms in the partition function.

Then the density operator supported on some subregion $B$ is formally given by tracing out the subregion $\overline{B}$. In the path integral, it corresponds to integrating over the bra and ket domains with the boundary conditions at $\overline{B}$ identified. 
\begin{equation}
   \rho_B=\int_{\overline{B}} \dd \phi_{\overline{B}}  \braket{\phi_{\overline{B}}}{\rho}\braket{\rho}{\phi_{\overline{B}}}\,,
\end{equation}
and we refer the $B$ region that is not integrated as a \emph{cut} on the path integral $Z_{\text{boundary}}$, which computes the trace of $\rho_B$ or equivalently the norm of $\ket{\rho}$,
\begin{equation}\label{eq:Z1}
   Z_{\text{boundary}}[\mathcal{B}^\dagger\mathcal{B}]:=\braket{\rho}{\rho}=\tr\,\rho_B=\int_{\mathcal{B}^\dagger\mathcal{B}} \mathcal{D}\tilde{\phi}\, 2^{-I_{\text{boundary}}[\tilde{\phi}]}\,
\end{equation}
where we denote the bra domain by $\mathcal{B}^\dagger$, and the notation $\mathcal{B}^\dagger\mathcal{B}$ stands for the combined path integral domain glued at $\overline{B}$.

We can insert observables $O[\tilde\phi]$ at $B$, then the path integral computes its expectation value,
\begin{equation}
    \tr O\rho_B =  Z_{\text{boundary}}[\mathcal{B}^\dagger\mathcal{B},O]:= \int_{\mathcal{B}^\dagger\mathcal{B}} \mathcal{D}\tilde{\phi}\,O[\tilde\phi] 2^{-I_{\text{boundary}}[\tilde{\phi}]}\,.
\end{equation}\\

The AdS/CFT correspondence allows us to translate the partition function on the boundary CFT to the bulk gravitational theory, 
\begin{equation}
    Z_{\text{boundary}}[\mathcal{B}^\dagger\mathcal{B},O] = Z_{\text{bulk}}[\mathcal{B}^\dagger\mathcal{B},O]
\end{equation}
The bulk partition function is defined using the bulk path integral that involves both the geometry and the matter field supported on it,
\begin{equation}
    Z_{\text{bulk}}[\mathcal{B}^\dagger\mathcal{B},O]:= \int_{\mathcal{M}} \mathcal{D}\tilde g \mathcal{D} \tilde \psi \,  2^{-I_{\text{bulk}}[\tilde g,\tilde \psi]} O[\tilde\psi]
\end{equation}
where $\mathcal{M}$ is the bulk asymptotically hyperbolic manifold with one extra dimension and satisfies the boundary condition $\mathcal{B}^\dagger\mathcal{B}$, $I_{\text{bulk}}[\tilde g,\tilde \psi]$ is the action of the dual bulk theory with the bulk metric $\tilde g$ and  bulk fields $\tilde \psi$ as variables, and $O[\tilde\psi]$ represents a functional of bulk fields that satisfies the constraint known as the extrapolate dictionary imposed by the boundary operator~$O$.

We work in the large $N$ and strong coupling limit of the boundary CFT, for which the bulk dual is described by a weakly coupled string theory that reduces to an effective
supergravity theory. In the semiclassical limit, we expect the bulk gravity to behave like the Einstein-Hilbert gravity in the leading order plus some higher derivative corrections. We leave out the corrections for our discussion here but they can all be handled systematically~\cite{wald1993black,iyer1994some,dong2014holographic,camps2014generalized}.  Then we can use the saddle-point approximation in the bulk to simplify the gravitational path integral .
\begin{equation}
    Z_\mathrm{bulk}[\mathcal{B}^\dagger\mathcal{B},O]\approx\sum_g \bar Z[g]Z^{\text{matter}}[g,\mathcal{B}^\dagger\mathcal{B},O] = \sum_g 2^{-I_{\text{grav}}[g]}Z^{\text{matter}}[g,\mathcal{B}^\dagger\mathcal{B},O]
\end{equation}
where $\bar Z[g]:=2^{-I_{\text{grav}}[g]}$ denotes the gravitational partition function for a classical saddle-point geometry $g$ that satisfies the boundary conditions, and $Z^{(\text{matter})}[\mathcal{B}^\dagger\mathcal{B},O]$ is the field theory partition on the background $g$, which is the bulk counterpart to \eqref{eq:Z1}.

In Figure~\ref{fig:path3}, the Euclidean bulk is depicted for simplicity as a disk. It prepares the geometry and the bulk state on the time-symmetric hypersurface $M$, which are determined by the dominant saddle to the bulk action subject to the boundary conditions.  The essence of the RT formula is that one can compute the entanglement entropy of the boundary subregion $B$ using the area of the minimal surface homologous to $B$ on this slice $M$.\\
\begin{figure}
\centering
\begin{subfigure}{.45\textwidth}
  \centering
  \includegraphics[width=.8\linewidth]{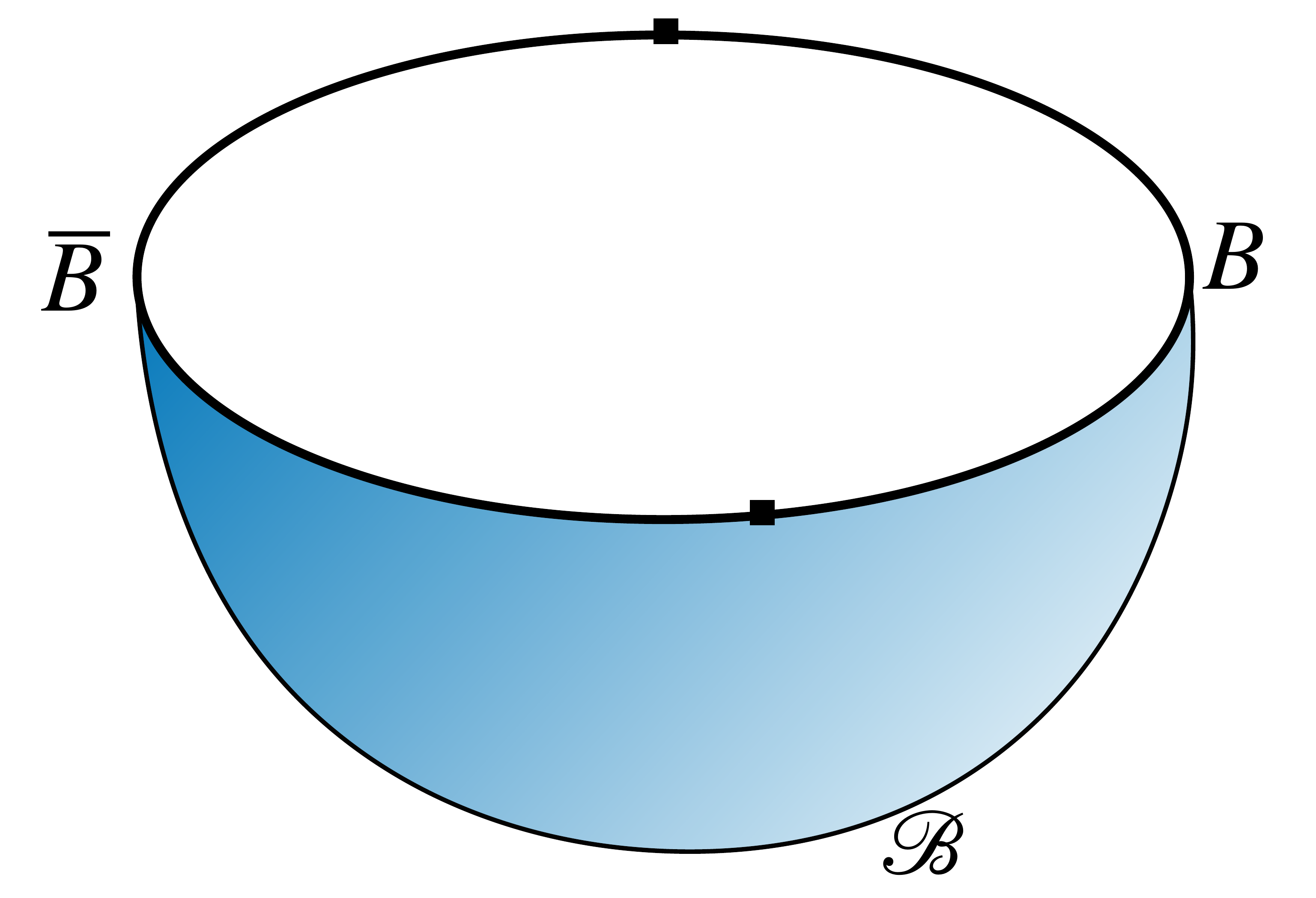}
  \caption{\centering}\label{fig:path1}
\end{subfigure}%
\begin{subfigure}{.45\textwidth}
  \centering
  \includegraphics[width=.7\linewidth]{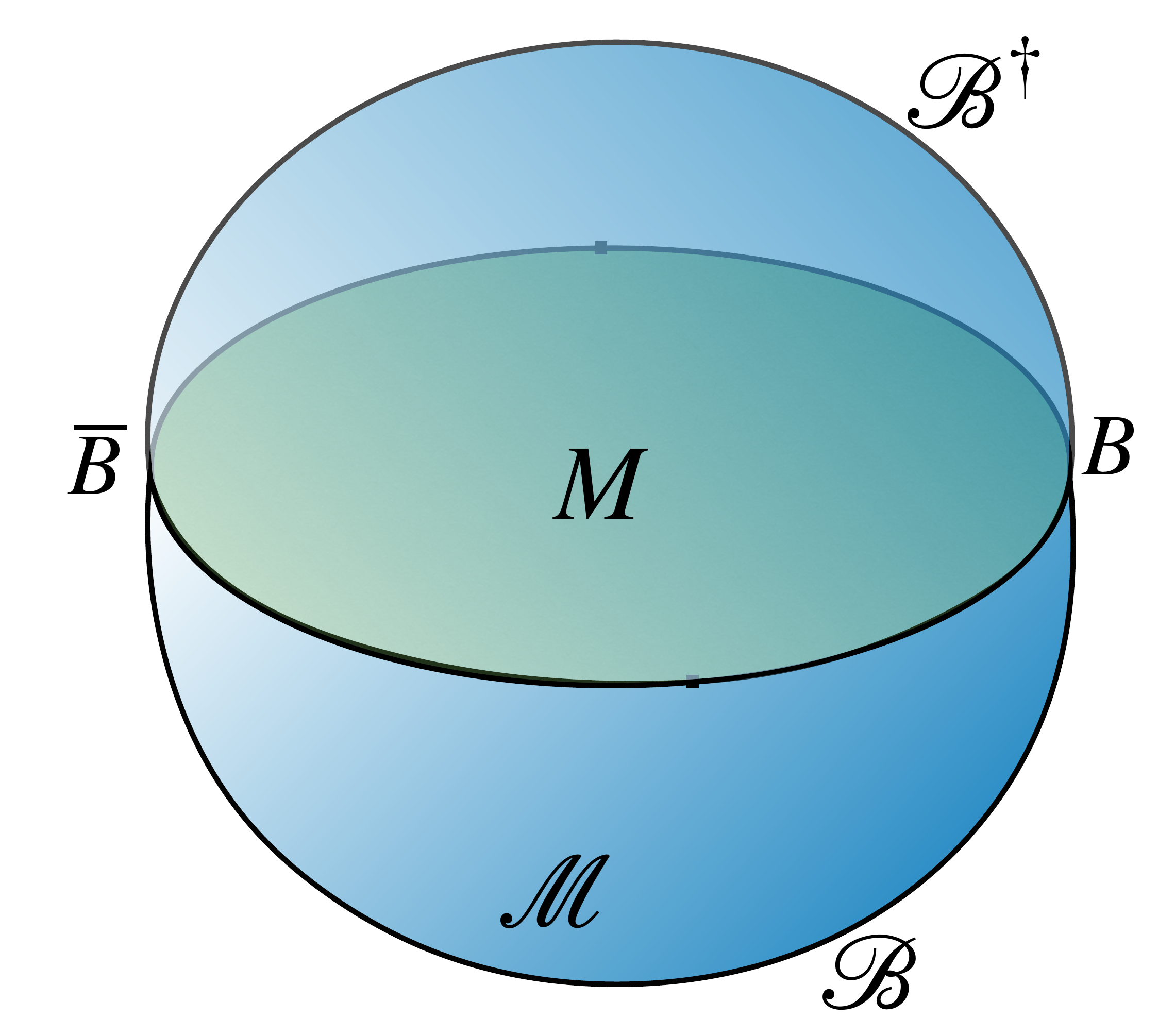}
   \caption{\centering}\label{fig:path2}
\end{subfigure}
\begin{subfigure}{.45\textwidth}
  \centering
  \includegraphics[width=.7\linewidth]{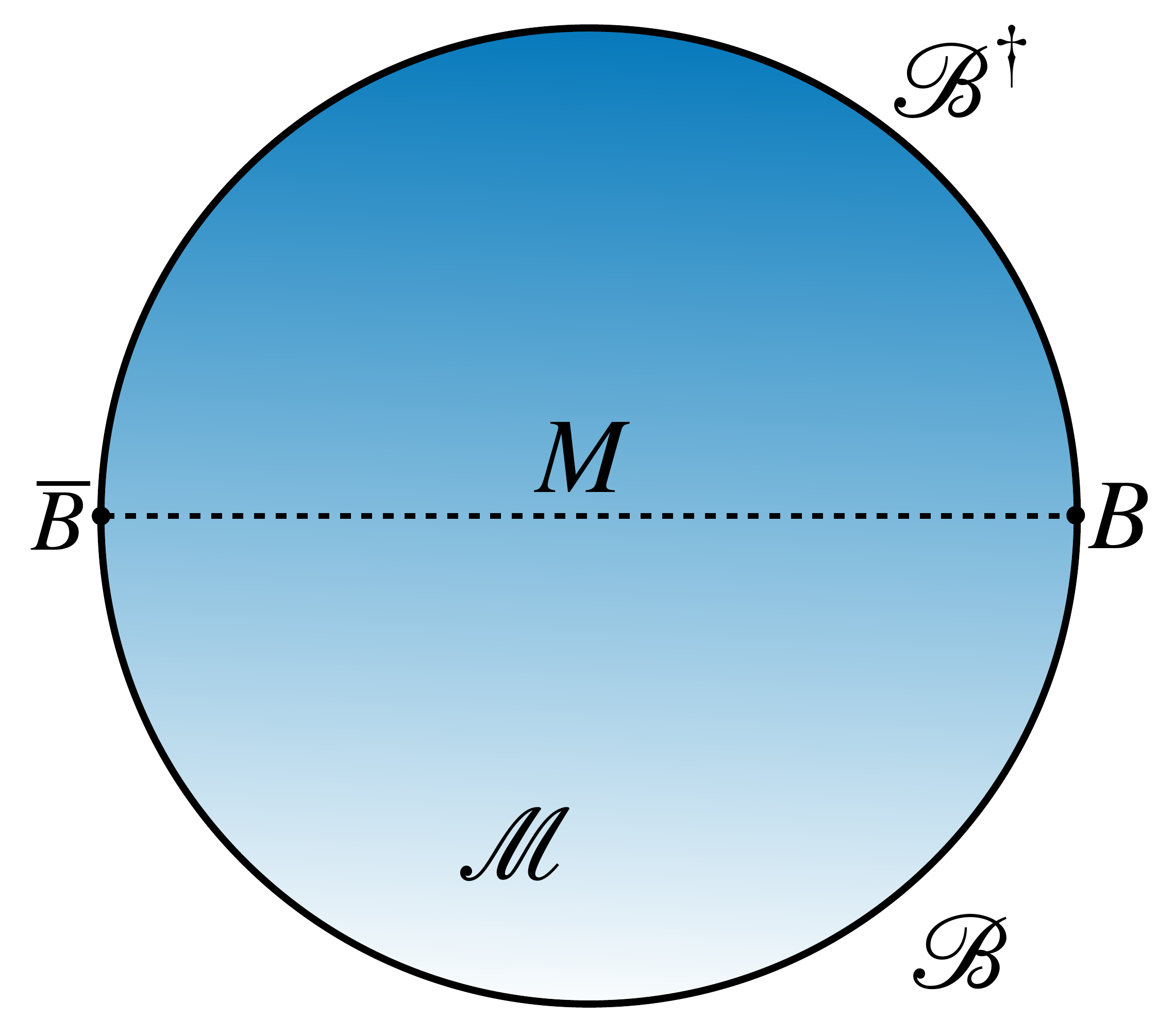}
   \caption{\centering}\label{fig:path3}
\end{subfigure}%
\begin{subfigure}{.45\textwidth}
  \centering
  \includegraphics[width=.7\linewidth]{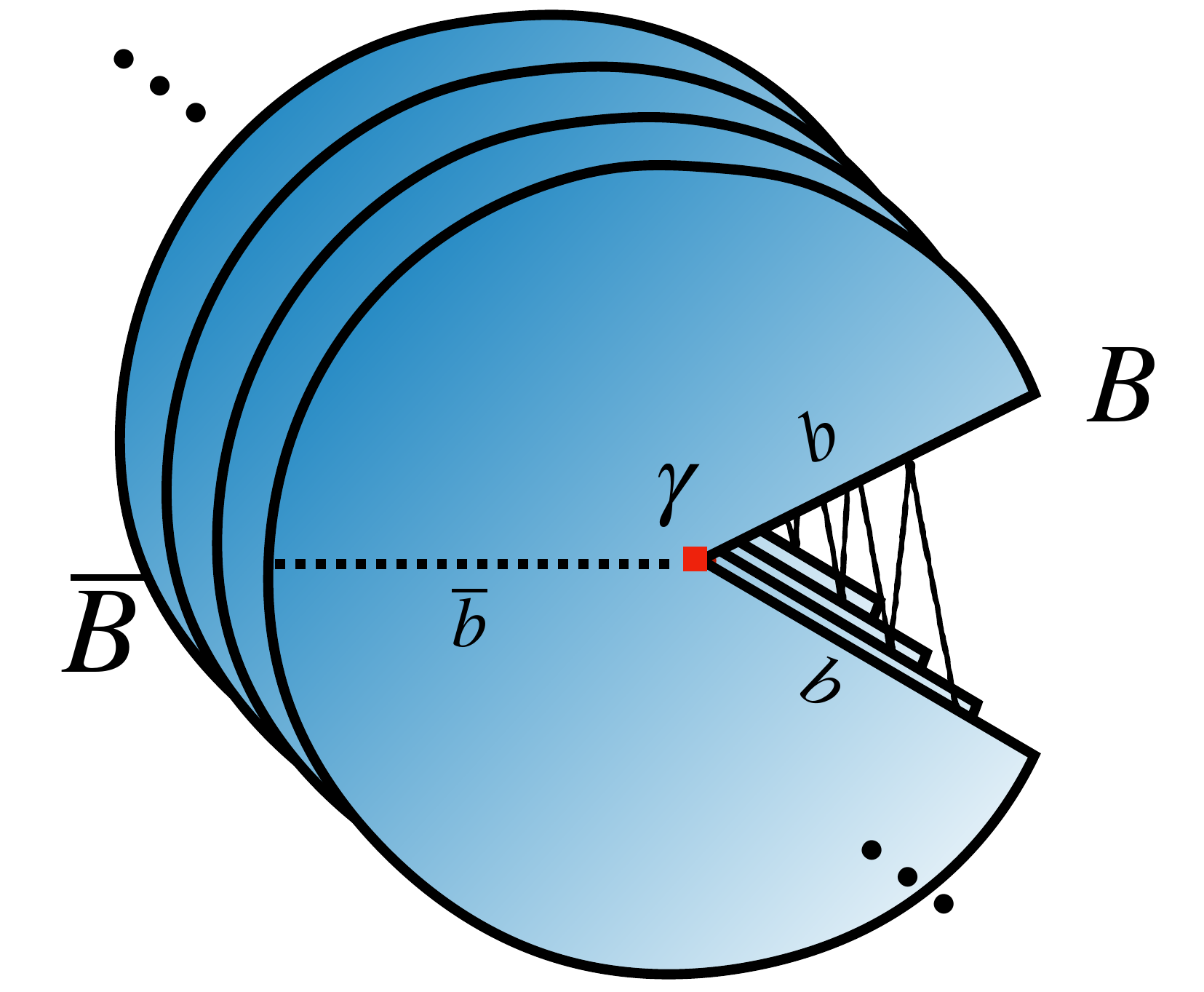}
   \caption{\centering}\label{fig:path4}
\end{subfigure}
\caption{\textbf{Path Integral and Replica Trick}.The figure (a) depicts the Euclidean path integral that prepares a ket state $\ket{\rho}$ supported on $\partial\mathcal{B}=B\overline{B}$. Note that the figure is only schematic and the boundary domain can take any appropriate topology depending on the boundary state one would like to prepare. Figure (b) shows the gravitational path integral that computes $\tr\rho_B=Z_\mathrm{boundary}[\mathcal{B}^\dagger\mathcal{B}]=Z_\mathrm{bulk}[\mathcal{B}^\dagger\mathcal{B}]$ in both the boundary and the bulk. The boundary is a path integral over both the ket domain $\mathcal{B}$ and the bra domain $\mathcal{B}^\dagger$, and the bulk path integral is over the manifold $\mathcal{M}$ filled in with the boundary conditions on the $\mathcal{B}^\dagger\mathcal{B}$. $M$ is the time-symmetric slice where the bra and ket are glued over.Again, the domain $\mathcal{B}^\dagger\mathcal{B}$ depicted here is schematic. One could have for instance a torus that is cut at $M$.  Figure (c) is a cartoon representation of $Z_\mathrm{bulk}[\mathcal{B}^\dagger\mathcal{B}]$ with one dimension suppressed and $M$ is depicted as the dotted line. Note that $\overline{B}$ and $B$ should still be connected albeit it looks disconnected on the figure. Figure (d) illustrates the calculation of $Z_n$ via the replica trick, where $n$-replicas of the path integral in (c) is glued up (shown as the zigzag lines) along the cuts at $b$ and the red surface $\gamma$ is the fixed-point of the $\mathbb{Z}_n$ symmetry.}
\label{fig:pathintegral}
\end{figure}

Let us discuss the replica trick calculation of the entanglement entropy $S(\rho)_B$ in holography due to LM~\cite{lewkowycz2013generalized}, which in turn shows the validity of the RT formula. A general characteristic of the replica trick is that one can compute the entanglement entropy $S(B)_\rho$ without knowing the explicit density matrix $\rho_B$. The replica trick relies on two general facts about entropies. Firstly, the von Neumann entropy of a quantum system equals the $n \to 1$ limit of the R\'enyi entropy of order $n>1$. Secondly, the R\'enyi entropy of order $n$, for integers $n \geq 2$,\footnote{Once the R\'enyi entropies of order $n$ are known for positive integers $n \in \mathbb{N}$, the values for non-integer orders are obtained by analytic continuation.} corresponds to the expectation value of an observable~$U_{\tau|B^n}$ on $n$ copies of the system, the \emph{replicas}. This expectation value can be evaluated by path integrals without an explicit description of the quantum state.

Starting in the boundary field theory, we are instructed to make $n$ copies of the manifold and glue them up cyclically along the cuts at subregion $B$, and then evaluate the partition function over the $n$-replicas~\cite{calabrese2004entanglement}. Let's use the shorthand $\mathcal{B}$ for the boundary domain $\mathcal{B}^\dagger\mathcal{B}$ and denote the cyclic-shift operator as $U_{\tau|B^n}$. We have
the boundary partition function $Z_\mathrm{boundary}[\mathcal{B}^{\times n},U_{\tau|B^n}]$, where the boundary conditions $[\mathcal{B}^{\times n},U_{\tau|B^n}]$ describe a stack of $n$ replicas of $\mathcal{B}$ that is cyclically glued at the cuts $B$. The replica trick then computes the R\'enyi entropies for integers $n\ge 2$ using
\begin{equation}\label{eq:renyi}
    S_n(B)_\rho := \frac{1}{1-n}\log\tr \rho_B^n= \frac{1}{1-n}\log\tr \rho^{\otimes n}U_{\tau|B^n} = \frac{1}{1-n}\log\frac{Z_\mathrm{boundary}[\mathcal{B}^{\times n},U_{\tau|B^n}]}{Z_\mathrm{boundary}[\mathcal{B}]^n}
\end{equation}
where the normalization $Z_\mathrm{boundary}[\mathcal{B}]=Z_\mathrm{boundary}[\mathcal{B},1]$ is the $n=1$ case of $Z_\mathrm{boundary}[\mathcal{B}^{\times n},U_{\tau|B^n}]$.

Then the entanglement entropy is given by
\begin{equation}\label{eq:vn}
    S(B)_\rho=\lim_{n\rightarrow 1}S_n(B)_\rho = -\partial_n (\log Z_\mathrm{boundary}[\mathcal{B}^{\times n},U_{\tau|B^n}]-n\log Z_\mathrm{boundary}[\mathcal{B}])\big |_{n=1}\,.
\end{equation}

Now we replace the boundary partition functions with the bulk partition functions denoted as $Z_n$ and $Z_1$. In the saddle-point approximation, we have
\begin{equation}\label{eq:flm}
    Z_n:= Z_\mathrm{bulk}[\mathcal{B}^{\times n},U_{\tau|B^n}] = \sum_{g} \bar Z[g]Z^\mathrm{matter}[g,\mathcal{B}^{\times n},U_{\tau|B^n}]
\end{equation}
where the sum is over classical geometries that are compatible with the boundary conditions $[\mathcal{B}^{\times n},U_{\tau|B^n}]$, and the bulk matter partition function evaluates the replica trick partition function for each background geometry.

Since the boundary condition has a $\mathbb{Z}_n$ symmetry for identical $\mathcal{B}$'s, LM assumes that this replica symmetry is carried over to the dominant dual bulk geometry as well.  We shall call it the \emph{replica symmetry} assumption.  Suppose there is only a single dominant geometry that is replica symmetric, and we denote it as $g_n$. 

With the replica symmetry, we can figuratively depict the bulk (Figure~\ref{fig:path4}) as cyclically gluing up $n$ identical replicas of the original geometry along some bulk cuts over the dotted line that extends from the boundary cuts $B$ into the bulk. Since we also know that the boundary field configurations are summed over at $\overline{B}$, this demands that in the bulk we should also glue up the geometry in each replica itself over the dotted line that extends from $\overline{B}$. Somewhere in the middle of the dotted line, there should be a turnover codimension-two surface $\gamma$, that is actually a fixed-point of the $\mathbb{Z}_n$ symmetry. Then the subregion bounded by $\gamma$ and $B$ is our bulk region $b$. This is also known as the \emph{entanglement wedge} (EW) of $B$. We also denote its complement as $\b$. 

Altogether these imply that the gravitational action evaluated on the replicated bulk geometry $g_n$ can be evaluated in the saddle point approximation using a quotient geometry $\hat g := g_n/\mathbb{Z}_n$,
\begin{equation}
    \bar Z[g_n]\approx 2^{-I[g_n]} = 2^{-nI[\hat g]}
\end{equation}
where the equality is due to the locality of the gravitational action. Since the quotient gives an opening angle of $2\pi/n$ at the fixed point $\gamma$, the action $I[\hat g]$ is different from $I[g]$ due to the conical singularity at $\gamma$. One can try to reproduce this effect by adding a cosmic brane action at $\gamma$ to the GR action. $\hat g$ can be then treated as a solution to an effective action with an additional $I_{\text{brane}}=T_n\int_\gamma \dd V$ with tension $T_n=\frac{n-1}{4G_N n}$. We still cannot directly compute $I[\hat g]$, but nonetheless the equation of motion demands the vanishing of first-order variation of the on-shell action with respect to $n$,
\begin{equation}\label{eq:effectiveaction}
   \partial_n I[\hat g]\big |_{n=1} =\partial_n\left( T_n\int_\gamma \dd V)\right)\bigg |_{n=1} = \frac{A[\gamma]}{4G_N}\,.
\end{equation}
Let us leave the matter partition function aside for a moment, and consider only the gravitational partition function $\bar Z[g_n]$ in~\eqref{eq:flm}. Then we have, according to \eqref{eq:vn},
\begin{equation}
    S(\rho)_B =-\partial_n(-n(I[\hat g]-I[g]))|_{n=1}=(I[\hat g]-I[g])|_{n=1}+n\partial_n I[\hat g]\big |_{n=1} = \frac{A[\gamma]}{4G_N}\,.
\end{equation}
 As the tension vanishes in the limit, the original geometry is restored, $I[\hat g]|_{n=1}=I[g]$. $\gamma$ is the bulk minimal surface homologous to $B$, and its minimality follows from the least action principle applied to the on-shell action with respect to varying its location. Therefore, $\gamma$ is the RT surface. This completes our summary of the LM derivation of the RT formula. 
 
Now we add back the matter partition function as in~\eqref{eq:flm}. Picking the dominant saddle geometry $g_n$ gives
\begin{equation}
    Z_n\approx \bar Z_n[g_n] Z^{\text{matter}}_{n}[g_n,\mathcal{B}^{\times n},U_{\tau|B^n}]
\end{equation}
Plugging $Z_n$ into~\eqref{eq:renyi} and then \eqref{eq:vn} gives a sum of two terms, in which the first term is $A[\gamma]/4G_N$ and the matter partition function yields the bulk matter entropy on the gluing region (entanglement wedge) $b$ in the bulk, denoted as $H(b)_\rho$. Overall, we have
\begin{equation}
    S(\rho)_B=\frac{A[\gamma]}{4G_N}+H(b)_\rho\ .
\end{equation}
This is the \emph{quantum RT formula}\footnote{The quantum RT formula is later superseded by the QES prescription, but we sometimes use this term in cases where there is only one QES candidate to consider.} proposed by Faulkner-Lewkowycz-Maldacena (FLM)~\cite{faulkner2013quantum}. which generalizes the Bekenstein's generalized entropy to holography. FLM only showed that this formula is correct up to one-loop order. A more careful variational analysis~\cite{engelhardt2015quantum,dong2018entropy} shows that the QES prescription is valid for all orders in $G_N$ in perturbation theory, and this finally established the QES prescription \eqref{nettaaron}.

\subsection{Replica trick with fixed-area states}\label{sec:fixedarea}
We are also interested in R\'enyi entropies, especially the min/max-entropies, as we will approach the entanglement entropy differently via the AEP. However, an obstacle is that the R\'enyi entropies do not have simple bulk duals as the von Neumann entropy does, but instead the so-called modular R\'enyi entropies satisfy a RT-like area law~\cite{dong2016gravity}.\footnote{On a technical level, this is because one needs to argue about the variations of the on-shell action to obtain the area law as in \eqref{eq:effectiveaction}. The modular R\'enyi entropy is defined with a derivative $\partial_n$, which is lacking in the standard R\'enyi entropy. } The holographic area law they follow is more complicated as it involves considering a cosmic brane with tension that backreacts and changes the original geometry. The brane location also varies from the RT surface and depends on $n$. More importantly, the modular R\'enyi entropies do not match with the standard R\'enyi entropies except at the $n\rightarrow 1$ limit. 

In order to calculate the min-entropy holographically to implement the AEP \eqref{minAEP}, we shall restrict to a special family of states, the \emph{fixed-area states}~\cite{dong2019flat,akers2019holographic}, which literally are the eigenstates for the area operator in the RT formula.\footnote{The ``area'' here should be interpreted as the generalized geometric quantity defined on $\gamma$ obtained from including all the higher curvature corrections~\cite{wald1993black,iyer1994some,dong2014holographic,camps2014generalized}} It helps facilitate the path-integral calculations for the R\'enyi entropies, by forcing them to be evaluated on the same bulk RT surface.  Thanks to the area-fixing, we can now work with the standard R\'enyi entropies instead of the modular R\'enyi entropies, and we don't have to explicitly account for the backreaction due to the cosmic brane tension. The same strategy is also taken by AP and we shall comment on going beyond the fixed-area states by the end of Section~\ref{sec:derivation}. 

Suppose we look for the RT surface on the bulk geometry $g$ following \eqref{eq:naive_QES} and find the QES surface $\gamma$ for some boundary subregion $B$. We choose to fix its area by restricting the metric integral to those with a fixed value $A[\gamma]=A^*$. It turns the $A[\gamma]$, interpreted as the expectation value of the area operator, to some fixed c-number. On the reduced boundary state $\rho_B$, it formally corresponds to a projection onto the $\gamma$-area eigenstates. The fixed-area bulk partition function for the expectation value of some boundary observable $O$ on $B$ can be written as
\begin{multline}
    Z[A^*,\mathcal{B},O]:=\tr O\Pi_{A[\gamma]=A^*}\rho_B\Pi_{A[\gamma]=A^*} \\
    = \int_\mathcal{M} \mathcal{D}g\mathcal{D}\psi\,\dd \mu\,2^{-I[g,\psi]-i\mu(A[\gamma]-A^*)} O[g,\psi] \approx 2^{-I[g^*]}Z^\mathrm{matter}[g^*,\mathcal{B},O]
\end{multline}
where the bulk integrals over manifold $\mathcal{M}$ are subject to the boundary condition $\mathcal{B}$. We've imposed the area-fixing at the QES via a Lagrange multiplier. One can restore the standard partition function we saw in the LM derivation by integrating over $A^*$. Using the saddle-point approximation to estimate the GPI and suppose for simplicity there is only one saddle geometry $g^*$ that has the fixed area of $\gamma$, we reduce the metric integral to the on-shell action $2^{-I[g^*]}$.

Because of the area-fixing, the classical solution $g^*$ to this action contains conical singularities at $\gamma$, corresponding to the tension $\mu^*$ that solves the equation of motion for $\mu$. This tension gives rise to a conical deficit, and we denote the opening angle as $\phi$. We note that the exact value of $\phi$ is irrelevant here as its on-shell value is whatever that fixes the area to be $A^*$, and $\phi$ will cancel out in the calculation of the R\'enyi entropies. The on-shell action reads
\begin{equation}\label{eq:fixedarea1}
    I[g^*] = I[g] + \frac{(\phi-2\pi)A^*}{8\pi G_N}
\end{equation}
where $I[g]$ is the on-shell gravitational action without the Lagrange multiplier. As compared to the LM derivation reviewed in the previous section, here we have an extra conical term due to area-fixing.

Thanks to the area-fixing, the saddle point geometry of the $n$-fold replica is extremely simple: the $n$-fold replica manifold can be constructed from gluing $n$ copies of the original manifold along the cut associated with $\gamma$. The conical opening angle is simply $n\phi$. As compared to the LM derivation, in this case we know the $n$-replica on-shell action explicitly. 
\begin{equation}\label{eq:fixedarea2}
    Z_{n}[A^*]:= Z[\mathcal{B}^{\times n},U_{\tau|B^n}]  = 2^{-I[g^*_n]}Z^\mathrm{matter}[\mathcal{B}^{\times n},U_{\tau|B^n}] =2^{nI[g] +  \frac{(n\phi-2\pi )A^*}{8\pi G_N}}\tr(\rho_b^{n})\,.
\end{equation}
From \eqref{eq:fixedarea1} and \eqref{eq:fixedarea2}, we can immediately infer that the R\'enyi entropies are given by
\begin{equation}
    S_n(\rho)_B = \frac{1}{1-n}\log \frac{Z_{n}[A^*]}{Z_{1}[A^*]^n} = \frac{A^*}{4G_N}+H_n(\rho)_b
\end{equation}
where $\phi,I[g]$ cancel out, and indeed all $S_n$ share the same area term.\\

Let us demonstrate how the standard replica trick can be applied to the scenario of two completing QES candidates $\gamma_{1,2}$ with fixed-areas $A_{1,2}$ (cf. Figure~\ref{fig:twoqes}).  We shall see what can be extracted out of it as well as its shortcomings. 

Following the same steps above, we find the saddle metric $g_n^*$ by gluing up $n$ copies of $g^*$. The dominant saddle-point solution $g_n^*$ is everywhere the same as $g^*$ at each replica except at $\gamma_{1,2}$, where the opening angles are $n\phi_{1,2}$. Here, an important distinction from the single-QES case is that we have an extra bulk subregion $b'$. The three bulk regions are glued together with different ordering: the $\b$ region is contracted at each replica corresponding to no permutation among the replicas $\mathbf{1}$ as before, and the $b$ region is contracted cyclically corresponding to a cyclic permutation among the replicas $\tau$, whereas the middle region $b'$ can be contracted with any permutation $\pi$. Therefore, as compared to \eqref{eq:fixedarea2}, here we need to account for the conical excesses at $\gamma_{1,2}$ due to the permutation $\pi$. Given some $\pi$, the on-shell action then reads\footnote{As you will see, we will not make explicit use of these nontrivial permutations except for $\pi=\mathbf{1}$ or $\tau$. You make refer to AP~\cite{akers2021leading} section 4.2 for a detailed account of these permutations $\pi$ and their relevance in the resolvent calculation. }
\begin{equation}
    I[g^*_n(\pi)] = nI[g] +  \frac{(n\phi_1-2\pi |\pi|)A_1}{8\pi G_N}+ \frac{(n\phi_2-2\pi |\pi^{-1}\circ\tau|)A_2}{8\pi G_N}\,
\end{equation}
where $|\pi|$ denotes the number of cycles in a permutation $\pi$. Considering all such permutations goes beyond the replica symmetry assumption, which only considers the replica-symmetric saddles $\pi=\mathbf{1}$ or $\tau$. Thus,
\begin{equation} \label{mainZn}
    \begin{aligned}
    \frac{Z_{n}[A_{1,2}]}{Z_{1}[A_{1,2}]^n}=& \sum_{\pi\in S_n} 2^{-I[g^*_n(\pi)]+nI[g^*]}\tr(\rho^{\otimes n}_{bb'} U_{\tau|b}\otimes U_{\pi|b'})\\
    =& \sum_{\pi\in S_n} 2^{(|\pi|-n)\frac{A_1}{4G_N} + (|\pi^{-1}\circ\tau|-n)\frac{A_2}{4G_N}}\tr(\rho^{\otimes n}_{bb'} U_{\tau|b}\otimes U_{\pi|b'})\,.
    \end{aligned}
\end{equation}
where the opening angles $\phi_{1,2}$ are canceled out by the normalizations.

\begin{figure}
\centering
\begin{subfigure}{.5\textwidth}
  \centering
  \includegraphics[width=.75\linewidth]{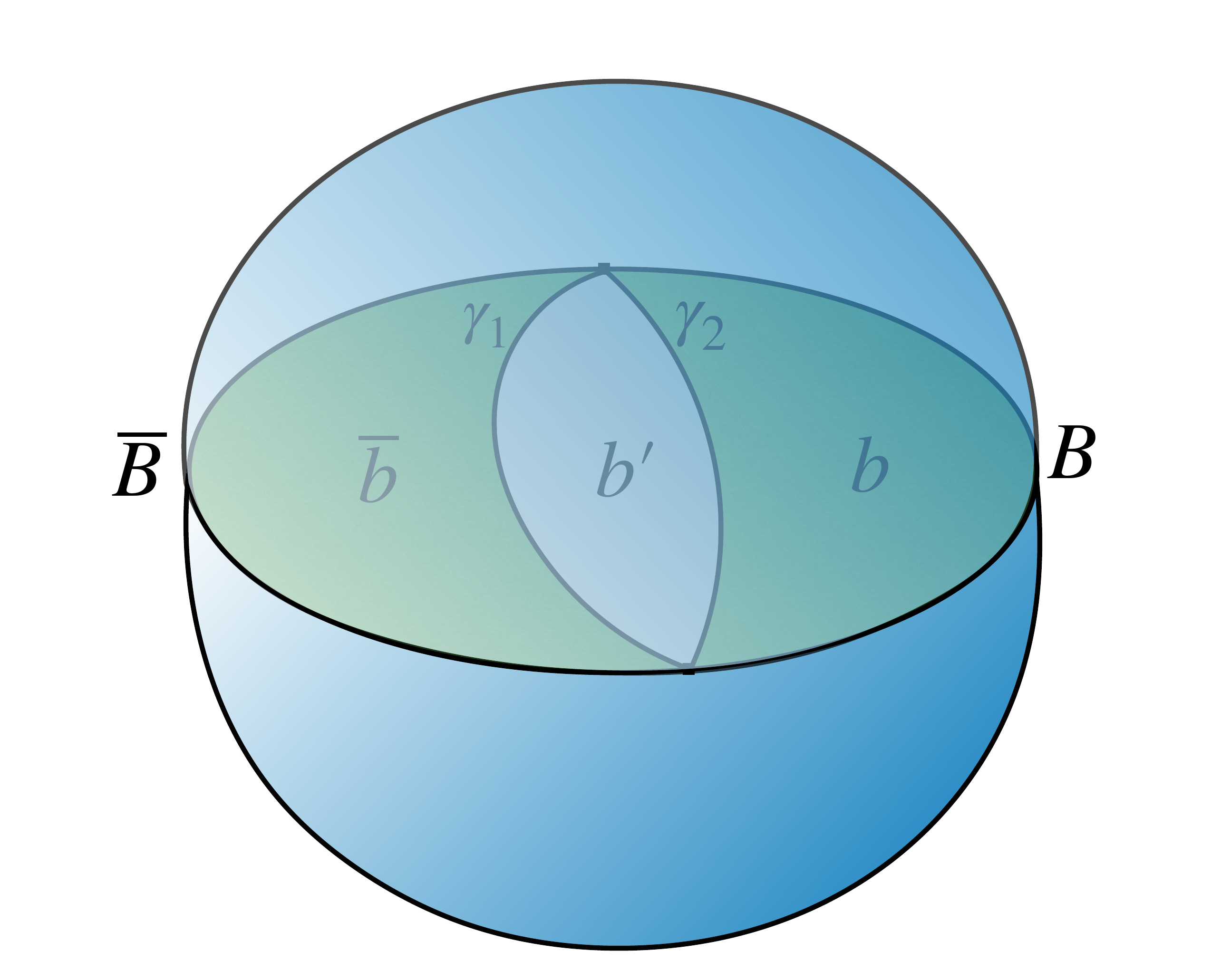}
\end{subfigure}%
\begin{subfigure}{.5\textwidth}
  \centering
  \includegraphics[width=.8\linewidth]{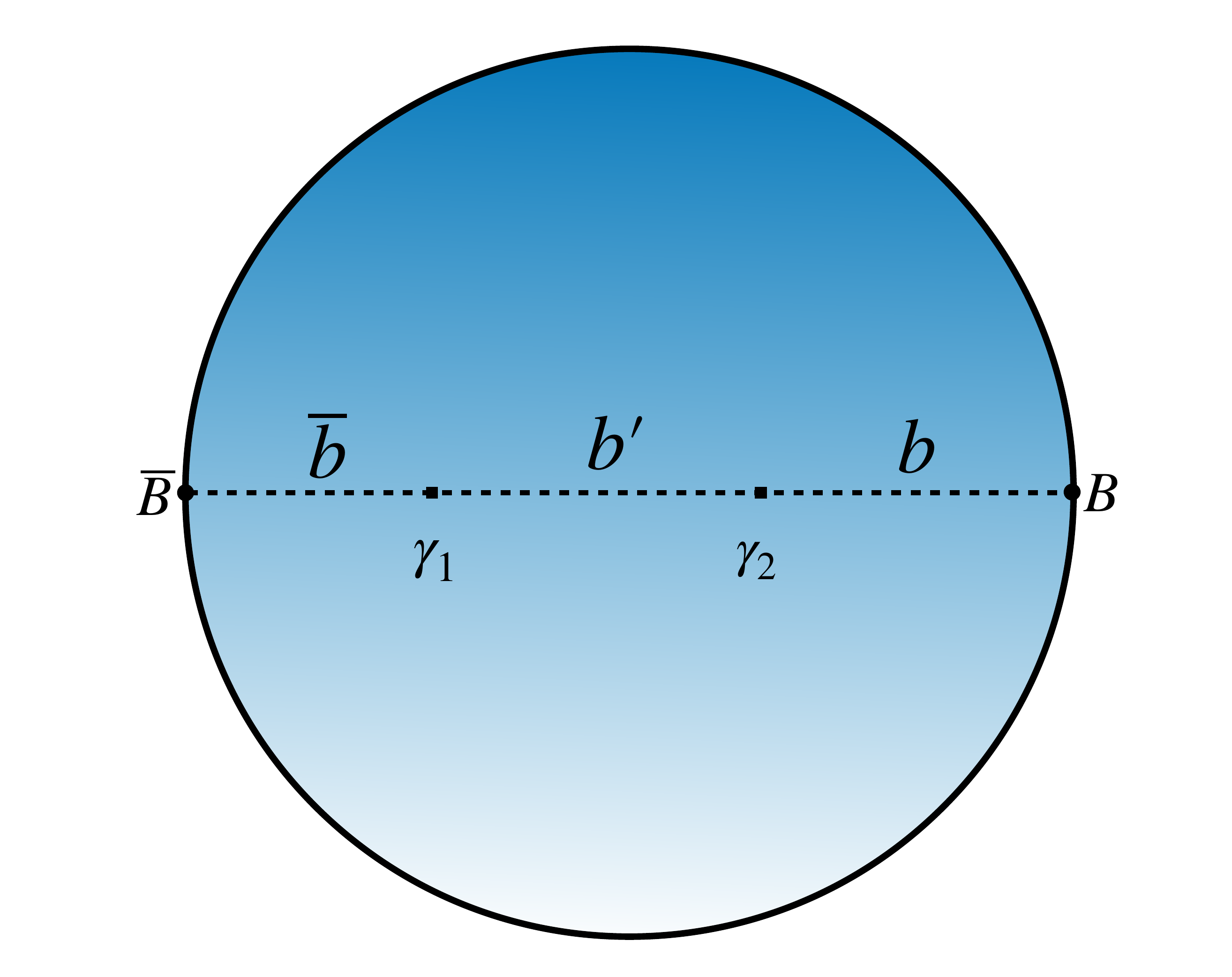}
\end{subfigure}
\caption{{\bf The partition function $Z_1[A_{1,2}]$. } The bulk slice $M$ is divided by $\gamma_{1,2}$ into three subregions $b,b',\b$. }
\label{fig:twoqes}
\end{figure}

Now the standard procedure tells us to analytically continue ($\lim n\rightarrow 1$) to the von Neumann entropy. However, this is tricky due to all these non-trivial permutations. Instead of evaluating \eqref{mainZn} $\lim n\rightarrow 1$, we can nevertheless obtain simple upper bounds for $S(B)$. It follows from  \eqref{mainZn} that ignoring the contributions from all the permutations besides the replica-symmetric ones ( $\pi=\mathbf{1}$ or $\tau$) yields a underestimated partition function and thus increases the value of $S(B)$ obtained via the analytic continuation. Thus,
\begin{equation} 
    \frac{Z_{n}[A_{1,2}]}{Z_{1}[A_{1,2}]^n} \geq 2^{\left(1-n\right)\frac{A_1}{4G_N}}\tr(\rho^{\otimes n}_{bb'} U_{\tau|bb'}),\quad\frac{Z_{n}[A_{1,2}]}{Z_{1}[A_{1,2}]^n} \geq 2^{\left(1-n\right)\frac{A_2}{4G_N}}\tr(\rho^{\otimes n}_{b} U_{\tau|b})\,.
\end{equation} 
Therefore, the standard replica trick only gives us the upper bounds if we hold the replica symmetry assumption,
\begin{equation}\label{upperbound}
    S(B)_\rho \leq  \mi \{\frac{A_1}{4G_N}+H(bb')_\rho\ , \frac{A_2}{4G_N}+H(b)_\rho\}\,.
\end{equation}\\

To put things in context and orient ourselves into the following sections, we note that much of what we cover in Section~\ref{sec:entropies}, such as the chain rules, is going to be applied to the bulk states, whereas the AEP will be applied to both bulk and boundary states. For our purposes, the field theory is assumed to be regularized rendering a finite-dimensional Hilbert space, so we can proceed working with density operators and the tools introduced in Section~\ref{sec:entropies}. Nevertheless, it is known that the one-shot entropies also generalize well to infinite dimensions~\cite{furrer2011min} and von Neumann algebras~\cite{berta2016smooth}. It would be interesting to extend the analysis rigorously to algebraic QFT, where one perhaps needs to work with min/max-divergences instead. 

We also emphasize that the reasons we choose to work with the fixed-area states henceforth are both conceptual and technical. Conceptually, we already know that the area fluctuations will induce subleading corrections to the RT formula near the transition~\cite{dong2020enhanced,marolf2020probing}, so we better turn off such corrections to manifest the corrections due to nontrivial bulk entanglement in a clean way. Technically, the fixed-area states allows us to simply glue up replica manifolds when dealing with integer R\'enyi entropies without worrying about the backreaction.

\section{The AEP replica trick}\label{sec:aepreplica}

The standard replica trick doesn't make it obvious at all why the min/max-entropies should be of any relevance in the refined QES prescription \eqref{eq:refined_QES}. On the other hand, the AEP replica trick naturally incorporates these one-shot quantities. 

Let us introduce additional $m$ replicas for each copy of $\rho$ we already have, yielding a partition function $Z_{n,m}$ defined for $n\times m$ replicas.\footnote{Such $n\times m$ replicas are also considered previously in the literature~\cite{dutta2019canonical,dong2020effective,engelhardt2021free}, where different limits of $n,m$ are taken for different purposes. On the level of replica tricks, the key distinction of our AEP replica trick is that we take the limits of both $n,m$ to infinity as integer sequences, which doesn't involve analytic continuation. } We shall refer them as \emph{$m$-families of $n$-replicas}. For the $m$-fold tensor product state, the dominant contribution to the disconnected $m$-replica boundaries should be a disconnected geometry. Therefore, the partition function is
\begin{equation}\label{trivialproduct}
    \frac{Z_{n,m}}{Z_1^{nm}} = \tr \left(\left[\rho^{\otimes m}\right]^{\otimes n} U_{\tau|B_1}\otimes\cdots\otimes U_{\tau|B_m}\right) = \left[\tr \rho^{\otimes n} U_{\tau|B}\right]^m = \left(\frac{Z_{n}}{Z_1^{n}}\right)^m.
\end{equation}

Instead, the formula \eqref{minAEP} tells us to look for the optimal state $\rho_*$ in the $\eps$-ball centered at $\rho^{\otimes m}_B$ for some arbitrary $0<\eps<1$, such that $\tr\rho_*^n$ is maximized. Now an obstacle is that the optimal state $\rho_*$ may not admit a convenient gravitational description, or for that matter, have a classical geometric dual at all in the first place. Therefore, here we pick particular feasible states and aim to obtain lower bounds for $S(B)$. As we shall see in the next section, we can carefully construct these states such that the lower bounds can precisely match with the upper bounds \eqref{upperbound} in regimes where the RT formula holds. 

More specifically, the chosen feasible state $\rho'$ shares the \emph{same} geometry $g$ as $\rho^{\otimes m}$ but only differs in its bulk state $\rho'_{b^mb'^m\b^m}$.  Since the bulk logical state is isometrically embedded in the code subspace of the boundary CFT, $\rho'$ is within the $\eps$-ball as long as the bulk state $\rho'_{b^mb'^m\b^m}$ is $\eps$-close to $\rho_{bb'\b}^{\otimes m}$. 

More precisely, we assume that there exists a set of boundary conditions for the $m$ families of replicas such that the gravitational path integral is dominated by bulk  a disconnected geometry among the $m$ families of replicas. We think this is a rather weak assumption, because the desirable states of our choice would be close to the product states corresponding to identical and independent boundary conditions. This means that any wormhole geometries that connect the $m$ families of replicas will give a negligible contribution, such that the entanglement is only manifested in the bulk state supported on the disconnected geometries among the $m$ families of replicas. We further assume that we can slightly perturb the bulk state to some nearby states of our choice with desirable properties (cf. Section~\ref{sec:outline}), by appropriately tuning the boundary conditions for the field configurations.

The partition function then reads

\begin{equation}\label{partitionfunction}
\begin{aligned}
    \frac{Z_{n,m}[A_{1,2}]}{Z_1[A_{1,2}]^{nm}} = \sum_{\{\pi^i\}^m_{i=1} \in S_n}& 2^{\left(\sum_{i=1}^m |\pi^i|-n\right)\frac{A_1}{4G_N} + \left(\sum_{i=1}^m |(\pi^i)^{-1}\circ \tau|-n\right)\frac{A_2}{4G_N}}\\
    &\cdot\tr(\rho'^{\otimes n}_{b^mb'^m}\cdot U_{\tau|b_1}\otimes\cdots\otimes U_{\tau|b_m}\otimes U_{\pi^1|b'_1}\otimes\cdots\otimes U_{\pi^m|b'_m})\,.
\end{aligned}
\end{equation}

\begin{figure}[t] 
  \centering
\includegraphics[width=0.9\linewidth]{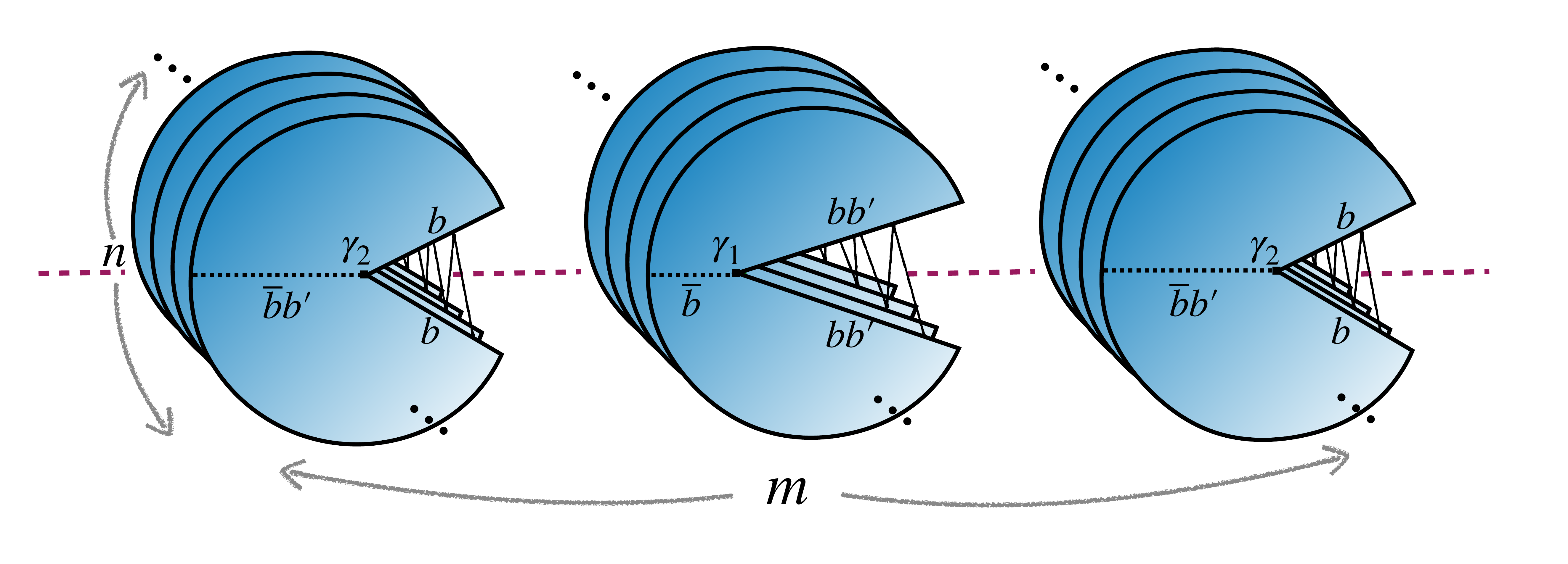}
  \caption{{\bf The AEP replica trick.} The ``Pac-Mans'' depict a generic term in the expansion of the $n\times m$-replica path integral that compute $Z_{n,m}[A_{1,2}]$. The vertically aligned $n$-replicas are contracted cyclically either along $bb'$ or $b$, as indicated by the zigzag lines. The disconnected $m$ families of replicas supports an entangled bulk state that is $\eps$-close to $\rho^{\otimes m}$, and the entanglement is indicated as the purple dashed line. }
  \label{fig:pacman}
\end{figure}

To reduce the expression further, we also use the fact that the dominant saddles for large $n$ are the ones that preserve the replica symmetry. They are $\pi^{i}=\mathbf{1}$ or $\pi^{i}=\tau$ for each of the $m$ replicas. This is not the case for the entanglement entropy at $n\rightarrow 1 $. For the min-entropy, however, we expect the replica symmetry assumption is legit and the effect of replica symmetry breaking should be insignificant. We later prove in Section~\ref{sec:holorenyi} that for the higher holographic R\'enyi-entropies of fixed-area states, the replica-symmetric saddles are dominant, and ignoring the other saddles only gives an over-estimation error of at most two bits. Since our proof is limited to pure bulk marginal states $\rho_{bb'}$, we shall keep the replica symmetry assumption for the min-entropy in our derivation.\footnote{Nonetheless, sometimes it is enough to consider pure bulk marginal states $\rho_{bb'}$. An example is the black hole model we study in Section~\ref{sec:pagecurve}.}

Now we have $2^m$ such terms instead of only two. Suppose $\{\pi^{i}\}_{i\in\Gamma}=\mathbf{1}$ for some selection $\Gamma\subset [m]:=\{1,\ldots,m\}$ of $k:=|\Gamma|$ replicas, then the corresponding term schematically reads
\begin{equation}\label{terms}
\begin{aligned}
    2^{(m-k)(1-n)\frac{A_1}{4G_N} + k(1-n)\frac{A_2}{4G_N}}\tr\left(\rho^{\otimes n}_{b^mb'^{m-k}}\cdot U_{\tau|b_1}\otimes\cdots\otimes U_{\tau|b_m}\otimes U_{\tau|b'_1}\otimes\cdots\otimes U_{\tau|b'_{m-k}}\right)
\end{aligned}
\end{equation}
where $k$ of the $b'$ systems are traced out, leaving the rest $m-k$ of the $b'$ systems cyclically contracting among the $n$-replicas. The path integral is illustrated in Figure~\ref{fig:pacman}. There are $\binom{n}{k}$ such terms with the same exponential area prefactor. Note that we do not distinguish them in $\eqref{terms}$ for different $\Gamma$'s in order to make the notations simpler, but we should bear in mind that $b'^{m-k}$ represents one particular collection $\{b'_i\}_{i\in [m]\setminus\Gamma}$ of the $b'$ systems. Besdies its cardinality $k$, the detail of exactly which systems $\Gamma$ contains is irrelevant for our derivation later.\footnote{Moreover, we shall actually consider states $\rho_{b^mb'^m}$ that are permutation-invariant, so the second line in \eqref{terms} is exactly identical for any $\Gamma$ of cardinality $k$. } We will see that the two end cases $k=0,m$ are particularly relevant, because they give the generalized entropy for the two QES candidates.

The key advantage of the AEP replica trick is that we can approach the limit as an integer sequence without assuming the analyticity in $n$, in contrast to the standard replica trick. Then we can apply \eqref{minAEP} to obtain a lower bound for the entanglement entropy,
\begin{equation}
    S(B) \geq \lim_{m\rightarrow\infty}\lim_{n\rightarrow\infty} \frac{1}{m(1-n)} \log\frac{Z_{n,m}[A_{1,2}]}{Z_1[A_{1,2}]^{nm}}.
\end{equation}
This is not yet complete because we'd like to link the right-hand side back to the bulk generalized entropy. It turns out that by carefully choosing the feasible state $\rho'_{b^mb'^m\b^m}$, we can also apply the AEP on the bulk level and eventually obtain the generalized entropy. This is how the AEP replica trick formally works. In summary, it relies on making clever choices on feasible bulk states $\rho'_{b^mb'^m\b^m}$ such that useful bounds for $S(B)$ can be obtained.

\section{Deriving the refined QES prescription}\label{sec:derivation}

We shall now demonstrate how this approach leads us to the refined QES prescription \eqref{eq:refined_QES}. Here we still consider the case where there are two competing QES saddles, and later extend the derivation to the general multi-QES case in section~\ref{sec:multiqes}.  For convenience, we assume without loss of generality that the global state on $\rho_{B\overline{B}}$ is pure,\footnote{Suppose however we are given a mixed CFT state on $B\overline{B}$, it can always be purified by some reference system $R$ and the global purity can be restored by including $R$ to be part of $\overline{B}$.} and so is the bulk state $\rho_{bb'\b}$. The refined QES prescription distinguishes from the naive one whenever there is nontrivial entanglement spectrum among the relevant bulk regions in $\rho_{bb'\b}$, encapsulated by an $\O(G^{-1})$ gap between the conditional min/max-entropies. 

\subsection{A sketch of the derivation strategy}\label{sec:outline}
\begin{figure}
\centering
\begin{subfigure}{.5\textwidth}
  \centering
  \includegraphics[width=1.\linewidth]{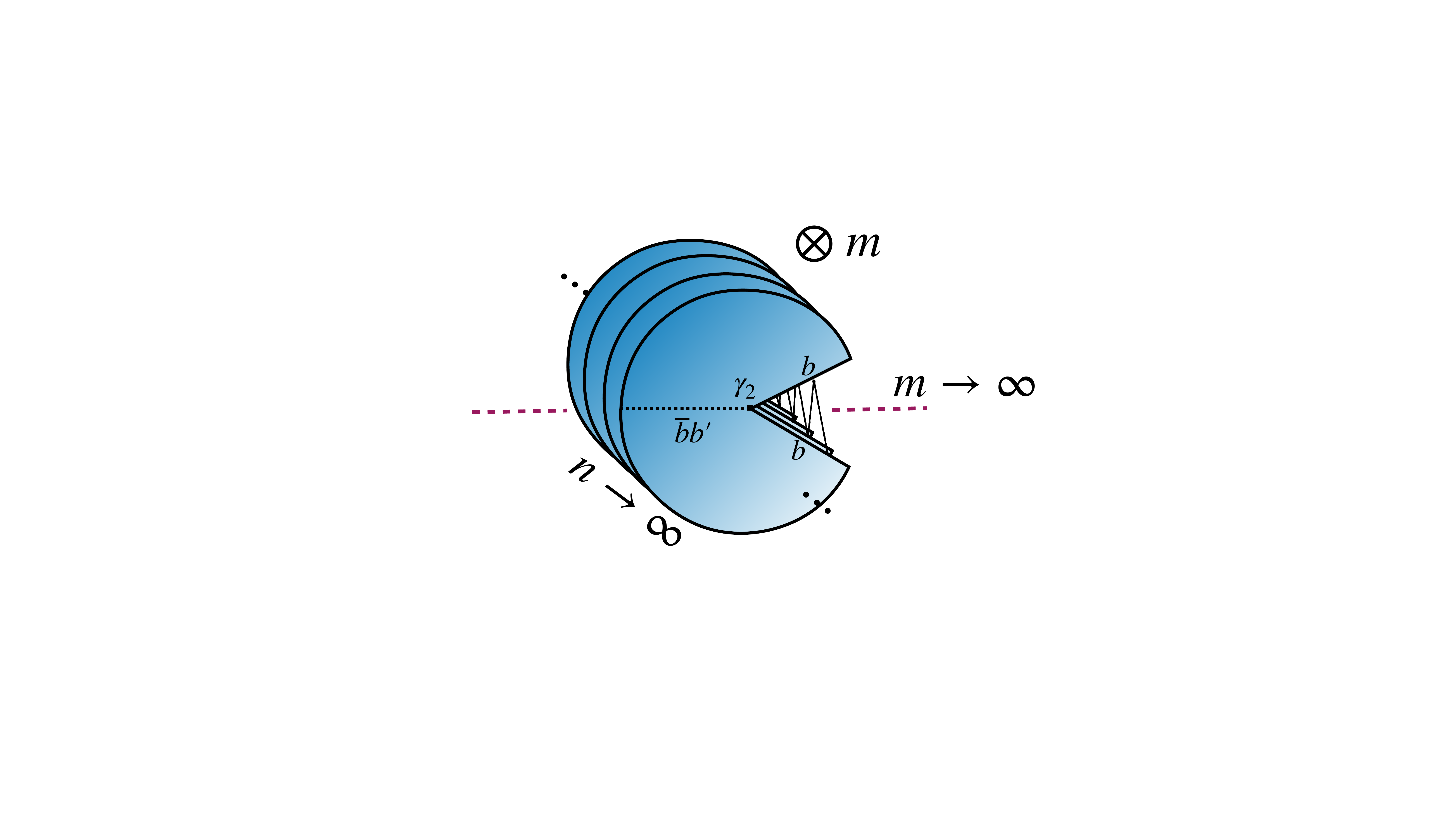}
  \caption{\centering QES=$\gamma_2$}
\end{subfigure}%
\begin{subfigure}{.5\textwidth}
  \centering
  \includegraphics[width=1.\linewidth]{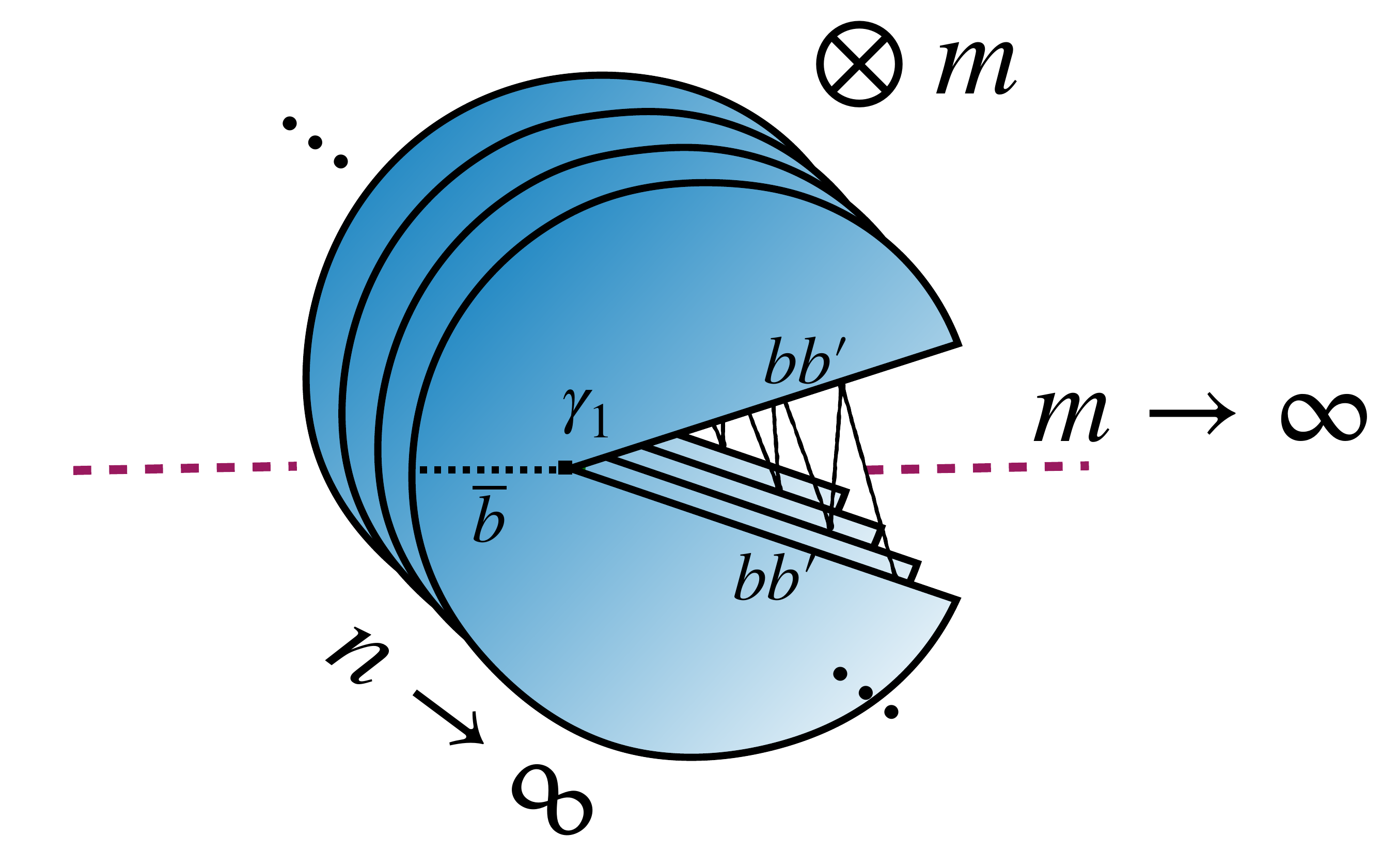}
  \caption{\centering QES=$\gamma_1$}
\end{subfigure}
\caption{{\bf The dominant saddles.} The left figure (a) denotes the dominant saddle configuration in the limit $n,m\rightarrow\infty$ when $H_\mi(b'|b)_\rho>\frac{A_2-A_1}{4G_N}$; and the configuration (b) is dominant when $H_\ma(b'|b)_\rho<\frac{A_2-A_1}{4G_N}$.}
\label{fig:dominantterms}
\end{figure}
To apply the AEP replica trick, we carefully choose feasible states $\rho', \rho''$ on the $m$ families of replicas, such that we can use the criteria in \eqref{eq:refined_QES} to pick out the dominant saddle that corresponds to the generalized entropy of some QES. Our goal is to show that for any integer $m$, there exists bulk states $\rho'_{b^mb'^m\b^m}, \rho''_{b^mb'^m\b^m}$ that satisfy the following desirable properties\footnote{In fact, we can further show that the supported bulk state $\rho'_{b^mb'^m\b^m}, \rho''_{b^mb'^m\b^m}$ we shall choose is actually permutation-invariant among the $m$ replicas (cf. Section~\ref{sec:proof}), but it is not essential for our derivation.}:
\begin{itemize}
    \item $\rho', \rho''$ share the same dual geometry $g$ as $\rho^{\otimes m}$, and their bulk states are $\eps$-close to $\rho^{\otimes m}_{bb'\b}$;
    \item The bulk states satisfy the chain rules \eqref{chainrule1} and \eqref{chainrule2} respectively;
    \item Their marginal state $\rho'_{b^m}(\rho''_{b^m})$ is the maximizer(minimizer) for the smooth min(max)-entropy $H^\eps_\mi(b^m)_{\rho^{\otimes m}}(H^\eps_\ma(b^m)_{\rho^{\otimes m}})$ respectively.
\end{itemize}
The first property ensures they are indeed feasible choices for the smooth min-entropy, so evaluating \eqref{minAEP} for $\rho'$ and $\rho''$ shall give us the lower bounds. To pick the dominant one among the $2^m$ saddle contributions, we need to compare the differences between the areas and various min-entropies. The second property links the difference between the min-entropies with the conditional min-entropy condition in \eqref{eq:refined_QES}, and it's the key technical tool that leads to the condition. More specifically, we will show that
\begin{equation}\label{claim1}
    \frac{Z_{n,m}[A_{1,2}]}{Z_1[A_{1,2}]^{nm}} = 2^{m(1-n)\frac{A_2}{4G_N}}\tr(\rho'^{\otimes n}_{b^m}\cdot U_{\tau|b_1}\otimes\cdots\otimes U_{\tau|b_m})
\end{equation}
if $H_\mi(b'|b)_\rho>\frac{A_2-A_1}{4G_N}$, corresponding to all $\pi_{i}$'s in \eqref{partitionfunction} being the identities. The other relatively smaller terms will be exponentially suppressed in the limit of large $m,n$. On the other hand, with a different feasible state $\rho''_{b^mb'^m\b^m}$, the dominant contribution in the partition function \eqref{partitionfunction} is given by
\begin{equation}\label{claim2}
\begin{aligned}
    \frac{Z_{n,m}[A_{1,2}]}{Z_1[A_{1,2}]^{nm}} = 2^{m(1-n)\frac{A_1}{4G_N}}\tr(\rho''^{\otimes n}_{b^mb'^m}\cdot U_{\tau|b_1}\otimes\cdots\otimes U_{\tau|b_m}\otimes U_{\tau|b'_1}\otimes\cdots\otimes U_{\tau|b'_m})
\end{aligned}
\end{equation}
if $H_\ma(b'|b)_\rho<\frac{A_2-A_1}{4G_N}$, corresponding to all $\pi_{i}$'s being the cyclic permutations in \eqref{partitionfunction}. The two cases are depicted in Figure~\ref{fig:dominantterms}. The last property ensures that for some fixed $\eps',\eps''$,
\begin{equation}
    H_\mi(b^m)_{\rho'} = H^{\eps'}_\mi(b^m)_{\rho'},\quad H_\mi(\b^m)_{\rho''} = H^{\eps''}_\mi(\b^m)_{\rho''}\,.
\end{equation}
This allows us to perform the AEP on the relevant bulk marginal states such that we recover the generalized entropy eventually. 

In the complementary regime, when $H_\mi(b'|b)_\rho\leq\frac{A_2-A_1}{4G_N}\leq H_\ma(b'|b)_\rho$, there is no simple criterion to determine the dominant contribution, and the resulting entanglement entropy is generally lower than the minimum of the generalized entropies evaluated on the two QES candidates according to the upper bound \eqref{upperbound}.

Bear in mind that any feasible bulk state $\rrho_{b^mb'^m\b^m}$ gives a lower bound for $S(B)$ via the AEP \eqref{minAEP}, but the bound generally does not have the form being the generalized entropy of any bulk surface. Therefore, we make these particular choices $\rho'_{b^mb'^m\b^m}$ and $\rho''_{b^mb'^m\b^m}$ here to show that the regimes of the validity of the naive QES prescription are refined and characterized by the conditional min/max-entropies. The states we choose serve only as technical tools in our derivation, and we do not make an attempt here to interpret them physically.  \\

\subsection{Deriving the min-entropy condition}

We are given the state $\left(\rho_{bb'\b}^{\otimes m}\right)^{\otimes n}$ supported on $n\times m$ replicas, and we are going to replace the bulk state by some chosen state $\left(\rho'_{b^mb'^m\b^m}\right)^{\otimes n}$. According to \eqref{chainrule1}, for any integer $m$, there exists a state $\rho'_{b^mb'^m\b^m}$, within the $\eps$-ball of $\rho_{bb'\b}^{\otimes m}$ with some $\eps$,  such that for any $k$ out of $m$ replicas and some fixed $0<\eps'<\eps$,
\begin{equation}\label{chainrule11}
    H_\mi(b^mb'^{m-k})_{\rho'}\geq H_\mi(b'^{m-k}|b^m)_{\rho^{\otimes m}} + H^{\eps'}_\mi(b^m)_{\rho^{\otimes m}}=H_\mi(b'^{m-k}|b^m)_{\rho^{\otimes m}} + H_\mi(b^m)_{\rho'}.
\end{equation}
Here $\rho'_{b^mb'^{m-k}}$ denotes any marginal state obtained from $\rho'_{b^mb'^m\b^m}$ by tracing out some arbitrary $k$ out of $m$ replicas of the $b'$ system. \JZ{Although here one can pick different sets of $k$ replicas, it is unambiguous in the above equation to not distinguish them, because the state $\rho'_{b^mb'^m\b^m}$ here can be chosen to be permutation-invariant among the $m$ replicas (cf. Section~\ref{sec:proof}). }Note that the equality above says the marginal $\rho'_{b^m}$ maximizes $ H^{\eps'}_\mi(b^m)_{\rho^{\otimes m}}$.

How is \eqref{chainrule11} gonna be useful to us? Note that the conditional min-entropy $H_\mi(b'^{m-k}|b^m)_{\rho^{\otimes m}}$ is evaluated on a product state $\rho_{bb'}^{\otimes\, m-k}\otimes\rho_{b}^{\otimes k}$. It reduces to
\begin{equation}
    H_\mi(b'^{m-k}|b^m)_{\rho^{\otimes m}} = H_\mi(b'^{m-k}|b^{m-k})_{\rho^{\otimes m}} = (m-k) H_\mi(b'|b)_\rho
\end{equation}
where in the first equality, we drop the redundant $k$ of $b$ systems thanks to the tensor factorization; and we use the tensor-additivity of the min-entropy in the second equality.

Now suppose $H_\mi(b'|b)_\rho>\frac{A_2-A_1}{4G_N}$, \eqref{chainrule11} implies
\begin{equation}
    H_\mi(b^mb'^{m-k})_{\rho'}>(m-k)\frac{A_2-A_1}{4G_N}+ H_\mi(b^m)_{\rho'}.
\end{equation}
Exponentiating both sides implies that for large enough $n$,
\begin{equation}
   \tr\left(\rho'_{b^m b'^{m-k}}\right)^n<2^{(1-n)(m-k)\frac{A_2-A_1}{4G_N}}\tr \left(\rho'_{b^m}\right)^n.
\end{equation}
Rearranging,
\begin{equation}
   2^{(1-n)(m-k)\frac{A_1}{4G_N}+(1-n)k\frac{A_2}{4G_N}}\tr\left(\rho'_{b^m b'^{m-k}}\right)^n < 2^{m(1-n)\frac{A_2}{4G_N}}\tr \left(\rho'_{b^m}\right)^n.
\end{equation}
Therefore, we can conclude that, when we take the limit $n\rightarrow\infty$, the term with all the $b'$ systems traced out exponentially dominates over any other terms in the expansion of the partition function \eqref{terms}. \JZ{This is not yet enough as we want the this term to dominate over the sum of all the rest $2^m-1$ terms. Note that in the AEP, one takes the limit of $n\rightarrow\infty$ then the limit $m\rightarrow\infty$. For any integer $m$, we have the total contributions of the rest $2^m-1$ terms being bounded by
\begin{equation}
      \sum_{k=0}^{m-1}\binom{n}{k}2^{(1-n)(m-k)\frac{A_1}{4G_N}+(1-n)k\frac{A_2}{4G_N}}\tr\left(\rho'_{b^m b'^{m-k}}\right)^n
      \leq (2^m-1)2^{(1-n)(m-k^*)\frac{A_1}{4G_N}+(1-n)k^*\frac{A_2}{4G_N}}\tr\left(\rho'_{b^m b'^{m-k^*}}\right)^n
\end{equation}
where $k^*$ corresponds to the maximum term in the sum. Now we compare it with the dominant term after taking $\frac{1}{1-n}\log[\cdot]$ on both sides. Clearly the extra factor here only gives a contribution $\frac{\log(2^m-1)}{1-n}$ that vanishes in the limit $n\rightarrow\infty$. Hence, one can conclude this stronger dominance statement from the chain rule.}
This proves our claim \eqref{claim1}. \\

Since the feasible state is chosen such that $H_\mi(b^m)_{\rho'}= H_\mi^\eps(b^m)_{\rho_b^{\otimes m}}$, we have when $H_\mi(b'|b)_\rho>\frac{A_2-A_1}{4G_N}$,
\begin{equation}\label{bulkaep1}
\begin{aligned}
    S(B)_\rho &\geq \lim_{m\rightarrow\infty}\lim_{n\rightarrow\infty}\frac{1}{m(1-n)}\log\frac{Z_{n,m}[A_{1,2}]}{Z_1[A_{1,2}]^{nm}} = \frac{A_2}{4G_N} + \lim_{m\rightarrow\infty}\lim_{n\rightarrow\infty}\frac{1}{m(1-n)}\tr(\rho'_{b^m})^n\\ &= \frac{A_2}{4G_N} + \lim_{m\rightarrow\infty}\frac{1}{m}H_\mi^{\eps'}(b^m)_{\rho^{\otimes m}} = \frac{A_2}{4G_N} + H(b)_\rho\,
\end{aligned}    
\end{equation}
where we've used the min-AEP \eqref{minAEP} on the bulk region $b^m$ in the last step.\\

\subsection{Deriving the max-entropy condition}

Now we turn to \eqref{claim2} and shall make use of the second chain rule \eqref{chainrule2}. We choose the bulk state $\rho''_{b^mb'^m\b^m}$, such that for any $k$ out of $m$ replicas and some fixed $0<\eps''<\eps$,
\begin{equation}\label{chainrule22}
    H_\mi(b^mb'^m)_{\rho''}=H^{\eps''}_\mi(b^mb'^m)_{\rho''}\leq H_\ma(b'^k|b'^{m-k}b^m)_{\rho^{\otimes m}} + H_\mi(b^mb'^{m-k})_{\rho''}.
\end{equation}
We can reduce the conditional max-entropy term as above,
\begin{equation}
    H_\ma(b'^k|b'^{m-k}b^m)_{\rho^{\otimes m}} = H_\ma(b'^k|b^k)_{\rho^{\otimes m}} = k H_\ma(b'|b)_\rho,
\end{equation}
and then if $H_\ma(b'|b)_\rho<\frac{A_2-A_1}{4G_N}$,
\begin{equation}
    H_\mi(b^mb'^m)_{\rho''}<k\frac{A_2-A_1}{4G_N}+ H_\mi(b^mb'^{m-k})_{\rho''}.
\end{equation}
Exponentiating both sides implies that for large enough $n$,
\begin{equation}
   \tr\left(\rho''_{b^m b'^m}\right)^n>2^{k(1-n)\frac{A_2-A_1}{4G_N}}\tr \left(\rho''_{b^mb'^{m-k}}\right)^n.
\end{equation}
Rearranging,
\begin{equation}\label{A1dominant}
   2^{m(1-n)\frac{A_1}{4G_N}}\tr\left(\rho''_{b^m b'^m}\right)^n > 2^{(m-k)(1-n)\frac{A_1}{4G_N}+k(1-n)\frac{A_2}{4G_N}}\tr \left(\rho''_{b^mb'^{m-k}}\right)^n
\end{equation}
where the RHS are exactly the subleading contributions \eqref{terms}. \JZ{Again, one can argue that the dominant term is larger than the sum of all the rest $2^m-1$ terms as we did in the previous subsection.} This proves our claim \eqref{claim2}.\\

Since the feasible state is chosen such that $H_\mi(b^mb'^m)_{\rho''}= H_\mi^{\eps''}(b^mb'^m)_{\rho^{\otimes m}}$, we have when $H_\ma(b'|b)_\rho<\frac{A_2-A_1}{4G_N}$, 
\begin{equation}\label{bulkaep2}
    \begin{aligned}
    S(B)_\rho &\geq \lim_{m\rightarrow\infty}\lim_{n\rightarrow\infty}\frac{1}{m(1-n)}\log\frac{Z_{n,m}[A_{1,2}]}{Z_1[A_{1,2}]^{nm}}= \frac{A_1}{4G_N} + \lim_{m\rightarrow\infty}\lim_{n\rightarrow\infty}\frac{1}{m(1-n)}\tr(\rho''_{b^mb'^{m}})^n\\
    & = \frac{A_1}{4G_N} + \lim_{m\rightarrow\infty}\frac{1}{m}H_\mi^{\eps''}(b^mb'^m)_{\rho^{\otimes m}} = \frac{A_1}{4G_N} + H(bb')_\rho\,.
\end{aligned}
\end{equation}

There is a much simpler alternative to obtain the same result. The global purity of $\rho_{bb'\b}^{\otimes m}$ suggests we can simply follow the same steps for the complementary region $\overline{B}$ and replace $b$  with $\b$ in the chosen feasible state $\rho'_{b^mb'^m\b^m}$ above. We can eventually reach the same conclusion: when $H_\ma(b'|b)_\rho = - H_\ma(b'|\b)_\rho <\frac{A_2-A_1}{4G_N}$,
\begin{equation}
    \begin{aligned}
    S(B)_\rho=S(\overline{B})_\rho &\geq \lim_{m\rightarrow\infty}\lim_{n\rightarrow\infty}\frac{1}{m(1-n)}\log\frac{\overline{Z}_{n,m}(\rho')}{\overline{Z}_1^{nm}}= \frac{A_1}{4G_N} + \lim_{m\rightarrow\infty}\lim_{n\rightarrow\infty}\frac{1}{m(1-n)}\tr(\rho'_{\b^m})^n\\
    & = \frac{A_1}{4G_N} + \lim_{m\rightarrow\infty}\frac{1}{m}H_\mi^{\eps'}(\b^m)_{\rho^{\otimes m}} = \frac{A_1}{4G_N} + H(\b)_\rho = \frac{A_1}{4G_N} + H(bb')_\rho\,
\end{aligned}
\end{equation}
where we've used the global purity in the first and the last equality.\\

The two arguments shown above are basically equivalent as one can see from the proof of the second chain rule in the Section~\ref{sec:proof}. We prefer the first argument because it demonstrates the change of the dominant contribution in the \emph{same} partition function \eqref{partitionfunction} rather than switching to the complementary one. It also generalizes better to the multiple QES situations discussed in the next section.\\

Recall that the upper bounds \eqref{upperbound} and \eqref{threeentropies} imply
\begin{equation}
    S(B)_\rho \leq
\begin{cases}
      \frac{A_1}{4G_N} + H(bb')_\rho\,,\quad &\text{when}\quad H_\ma(b'|b)_\rho<\frac{A_2-A_1}{4G_N}\,;\\
     \frac{A_2}{4G_N} + H(b)_\rho\,,\quad &\text{when}\quad H_\mi(b'|b)_\rho>\frac{A_2-A_1}{4G_N}\,,
\end{cases}
\end{equation}

Altogether with the lower bounds we obtain
\begin{equation}
S(B)_\rho =
\begin{cases}
      \frac{A_1}{4G_N} + H(bb')_\rho\,,\quad &\text{when}\quad H_\ma(b'|b)_\rho<\frac{A_2-A_1}{4G_N}\,;\\
     (\mathrm{indefinite})\,, &\mathrm{when}\quad H_\mi(b'|b)_\rho\leq \frac{A_2-A_1}{4G_N}\leq H_\ma(b'|b)_\rho\,;\\
     \frac{A_2}{4G_N} + H(b)_\rho\,,\quad &\text{when}\quad H_\mi(b'|b)_\rho>\frac{A_2-A_1}{4G_N}\,.
\end{cases}
\end{equation}\\

Lastly, we need to add back the smoothing to the conditional min/max-entropies to yield  \eqref{eq:refined_QES}, matching the original proposal by AP (cf. equation (1.12) in~\cite{akers2021leading}). It seems that we have now two versions of the QES prescription, depending on whether we add the smoothing. They are in fact equivalent in practise because the entropy values the two versions prescribe have negligible differences at the leading orders. For instance, we can consider two different bulk states $\rho_{bb'\b}, \rrho_{bb'\b}$, and $\rrho_{bb'}$ is the minimizer for $H^\eps_\ma(b'|b)_\rho$. The difference between the smooth and non-smooth conditions can be manifested in applying the (non-)smoothed refined QES prescription to $\rho (\rrho)$ respectively, such that the conditions that determine regime 1 are identical in the two cases. Then we can ask what is the difference in the generalized entropy calculated for these two states. By invoking the Fannes-Audenaert continuity bound for the von Neumann entropy~\cite{fannes1973continuity,audenaert2007sharp}, we have
\beq
|H(bb')_{\rrho} - H(bb')_\rho|\leq \delta(\rrho,\rho)\log d + h(\delta(\rrho,\rho))
\eeq
where $\delta(\rrho,\rho):=\frac12||\rrho-\rho||_1$ is the trace distance and the $h(\cdot)$ is the binary entropy function. By definition, we have  $\delta(\rrho,\rho)\leq P(\rrho,\rho)\leq \eps$, so the bound roughly scales linearly with $\eps\sim\O(\mathrm{Poly}(G_N))$:
\beq
|H(bb')_{\rrho} - H(bb')_\rho|\leq\O(\mathrm{Poly}(G_N)).
\eeq
 By exploiting this wiggle room,\footnote{Though AP did not explicitly put the $\O(\eps)$ correction in their statement of the prescription (equation (1.12)), it is indeed noted in their random tensor network derivation (cf. equation (5.15)).} we can easily convert our derived prescription above to the smoothed version \eqref{eq:refined_QES} and the difference can be neglected at the leading orders. This completes our derivation.

\subsection{Remarks}

\emph{\quad\quad The indispensable conditional min/max-entropies}. We shall emphasize that the chain rule used is crucial in our derivation to link the conditional min-entropy criterion, with the difference between the dominant term and \emph{any} other terms in the expansion. Had we only needed to pick among the two extreme terms with all $\pi_{b'_i}$'s being $\tau$ or $I$, we won't need the chain rule and the condition reduces to  
\begin{equation}
    H_\mi(b^mb'^m)-H_\mi(b^m)>\frac{A_2-A_1}{4G_N}.
\end{equation} 
Also note that suppose we do not have the additional $m$ replicas or the they also have the product form as in \eqref{trivialproduct}, then the condition determining the dominant contribution for the min-entropy is exactly to compare their differences as above. Due to the presence of the extra terms with mixed $\pi_{b'_i}$'s and also the correlation in the state $\rho'_{b^mb'^m\b^m}$, the differences between min-entropies cannot serve as the criteria to pick out the dominant saddle, and one must resort to the conditional min-entropy via the chain rule instead. \\

\emph{Why do smoothing?}. It would seem unnecessary to add the smoothing in the end of the derivation while the entropy values they prescribe are more or less the same. As far as computing the entanglement entropy is concerned, one motivation to use the smooth entropies is that they could be more robustly defined than the non-smoothed counterparts in quantum field theory. For example, when considering the one-interval entanglement entropy in 2D CFT, though the von Neumann entropy and min-entropy are finite after regularization, the (Hartley) max-entropy ($H_0$) is still infinite. The smoothing then helps to tame the infinity by truncating out the small eigenvalues~\cite{czech2015information,bao2019beyond}. 

Moreover, the smooth entropies are more favoured in that they also characterize how robust a protocol is under approximation. Operationally, the $\eps$ that appears in the smooth entropy is the error tolerance in achieving an information processing task.  The smoothed quantities definitely serve as better criteria when it comes to characterizing operational tasks like the entanglement wedge reconstruction, which is anyways approximate in holography~\cite{hayden2019learning}. We thus expect the smooth min/max-entropies will also be the more fundamental quantities to use in determining the entanglement wedge and the reconstructable region,  due to the close connection between the RT formula and EWR~\cite{harlow2017ryu}. In fact, AP suggested that the task of state-dependent EWR should be understood as the one-shot state merging, and there $H^\eps_\ma(b'|b)$ determines the bits one can transfer over from $b'$ to $b$, and $\eps$ is the error tolerance of achieving the task.  We shall come back to this discussion later. \\

\emph{Beyond fixed-area states}. So far we have only considered the fixed-area states, such that we have a fixed area contribution to the entanglement entropy. It allows us to directly work with R\'enyi entropies and facilitates a clean separation between the area and the bulk field contribution when doing the smoothing. To argue for a generic holographic state, one can lift the area-fixing constraint under certain \emph{diagonal assumption} in the semiclassical limit. One can argue that the marginal of a generic holographic state is approximately diagonal in the basis of fixed-area states, and therefore its entropy is the average of the entropies of the fixed-area states. Due to the semiclassical Gaussian fluctuation in area, such an averaging brings a $\O(G^{-\frac12})$ correction to the naive QES prescription near the transition window of size $\O(G^{-\frac12})$. This is seen in the JT black hole Page curve calculation~\cite{penington2019replica} and also generally argued in~\cite{akers2021leading,marolf2020probing}. On the other hand, AP used the dustball example to show that the correction can also occur for fixed-area states at the leading order within the indefinite regime 2. Then the correction can be carried over to a generic state also after the averaging, rendering the $\O(G^{-\frac12})$ correction a subleading effect. Similarly, the ``safe zones'', i.e. regimes 1 and 3, bordered by the conditional min/max-entropies are as intact for generic holopgrahic states as for fixed-area states at the leading order. Overall, the statement is that the refined QES prescription \eqref{eq:refined_QES} also holds for general holographic states as far as the leading order is concerned. Since our derivation does not provide any new insights beyond the diagonal approximation, we recommend the readers the detailed arguments in~\cite{akers2021leading,marolf2020probing}. \\

\emph{The indefinite regime}. We see that the refined QES prescription essentially restricts the validity regimes of the naive QES prescription, opening an indefinite regime where the value of $S(B)$ may not be simply given by the generalized entropy of some QES. Admittedly, this is not the full version of the desired refined QES prescription as a description of the indefinite regime is still lacking, in which we cannot tell which of the $2^m$ terms in the partition function expansion dominates. It then implies that the upper and lower bounds do not generally match as in regimes 1 and 3, leaving an $\O(1)$ room (in terms of the area difference) for deviations from the naive QES prescription.  In this regime, we can hardly identify a bulk surface whose associated generalized entropy measures the entanglement entropy. Nevertheless, we can still hope to obtain useful bounds for $S(B)$ via this approach even in this indefinite regime.  We expect $S(B)_\rho$ in the indefinite regime to be \emph{generically} smaller than what the naive QES prescription predicts at the leading order, whenever the entanglement spectrum of the bulk state gives a $\O(G_N^{-1})$ window between the condition min/max-entropies. However, here we shall not attempt to rigorously formulate and prove such a claim, and leave the thorough investigation to future works.

\section{The holographic R\'enyi entropy of fixed area states: a theorem}\label{sec:holorenyi}

We consider here the problem of computing the holographic  R\'enyi entropy in the bulk. Generally, the holographic (modular) R\'enyi entropies are given by the cosmic brane prescription~\cite{dong2016gravity}, where a brane of finite tension is inserted into the spacetime, and backreaction needs to be considered. In fixed-area states, the problem significantly simplifies as all the R\'enyi entropies are evaluated on one of the QES surfaces whose classical areas are fixed. The problem then reduces to picking the right QES surface to evaluate the generalized R\'enyi entropies\footnote{Here, the notion of generalized R\'enyi entropies should be understood in a restrictive sense because we have fixed the area term. } associated with it. 

In the case of the holographic von Neumann entropy we studied, we saw that the usual rule of picking the minimal QES surface fails near the transition regime. It begs the same question if the non-trivial entanglement spectrum of the bulk state shall also yield a complicated rule for computing the holographic R\'enyi entropies. It was claimed that the higher holographic R\'enyi entropies have sharp QES transitions~\cite{akers2021leading} based on some examples.

When restricted to pure bulk marginal states $\rho_{bb'}$, we prove a theorem (cf. Theorem~\ref{thm:dominance}) implying that the simple rule of picking the QES surface with the minimal generalized bulk R\'enyi entropy still works well for computing the holographic R\'enyi entropy of fixed area states. This result provides evidence that, unlike the von Neumann entropy, R\'enyi entropies tend to have sharp transitions between QES surfaces without an indefinite regime. \\

We first need to introduce some preliminary notions. 
\begin{defn}\label{def:nc}
A partition $\pi$ of the set $[n]:=\{1,\ldots,n\}$ is defined as $\pi=\{X_i\}_i$ such that the blocks $X_i$ satisfy $X_i\ne\emptyset\,\,\forall i$, $X_i\cap X_j=0\,\, \forall i\ne j$ and $\cup_i X_i=[n]$. A partition $\pi$ is non-crossing if we do not have $a_1<b_1<a_2<b_2$ for some $a_1,a_2\in X_i$ and $b_1,b_2\in X_j$ with $i\ne j$.
\end{defn}
The non-crossing (NC) partitions of $[n]$, of which the set is denoted as $NC(n)$, are in one-to-one correspondence with the non-crossing permutations: for each NC partition $\pi=\{X_i\}_i$, we let the NC permutation be defined with cycles $(X_i)_i$ such that each cycle $(X_i)$ consists of the same elements in the block with an increasing order. We use $|\pi|$ to denote both the cardinality of the partition $\pi$ or the number of cycles in the corresponding permutation as we did before. Because of the isomorphism, we shall use the same notation $NC(n)$ for both NC partitions and NC permutations. It should be clear from the context which object we are referring to. 

Given a set $[n]$, consider now doubling it with additional elements $\bar 1,\ldots,\bar n$ and interlace them with $1,\ldots,n$ in an alternating way, $1,\bar 1,\ldots,n,\bar n$. For any $\pi\in NC(n)$, its \emph{Kreweras complement}~\cite{kreweras1972partitions}, $\bar\pi\in NC(\bar 1,\ldots,\bar n)\cong NC(n)$, is defined as the biggest\footnote{There is a natural partial order defined for NC partitions, called the reverse refinement order. We say $\pi_1\le\pi_2$ if each block of $\pi_1$ is contained in the blocks of $\pi_2$. } element in $NC(\bar 1,\ldots,\bar n)$ such that $\pi\cup\bar\pi\in NC(1,\bar 1,\ldots,n,\bar n)$. The corresponding permutation of $\bar\pi$ is equivalently given by $\pi^{-1}\circ\tau$ where $\tau$ is the cyclic permutation $\tau=(1, \cdots, n)$. We quote the following Lemma for NC partitions/permutations.
\begin{lem}[Biane~\cite{biane1997some}]\label{lem:nc}
Let $\pi\in NC(n)$ be a NC partition, we have
\begin{equation}
    |\pi|+|\bar\pi|=|\pi|+|\pi^{-1}\circ\tau|=n+1\ .
\end{equation}
\end{lem}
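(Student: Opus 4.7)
The plan is to handle the two equalities separately. First, the identification $|\bar\pi| = |\pi^{-1}\circ\tau|$ is purely a matter of unpacking the definition of the Kreweras complement in terms of permutations, so I would dispatch it by checking directly that under the natural bijection between $NC(\bar 1,\ldots,\bar n)$ and $NC(n)$, the blocks of $\bar\pi$ --- namely the largest non-crossing partition on the barred copy compatible with $\pi$ --- are precisely the cycles of $\pi^{-1}\circ\tau$. This is a standard verification: one inserts the barred points between the unbarred ones on the circle and reads off which barred points must be grouped by maximality and non-crossing compatibility with the blocks of $\pi$; the resulting cycle structure coincides with that of $\pi^{-1}\circ\tau$ because $\tau$ sends each index to its cyclic successor, and $\pi^{-1}$ then routes it back through the block structure.

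The substance of the lemma is the identity $|\pi|+|\pi^{-1}\circ\tau|=n+1$. My approach would be to lift this to a statement about reflection length in the symmetric group $S_n$. Define $\ell(\sigma):=n-|\sigma|$; it is well known that $\ell(\sigma)$ equals the minimum number of transpositions needed to express $\sigma$, so $d(\sigma,\sigma'):=\ell(\sigma^{-1}\sigma')$ defines a metric on $S_n$. In particular, the triangle inequality gives
$$\ell(\tau)\;\leq\;\ell(\pi)+\ell(\pi^{-1}\circ\tau)$$
for any $\pi\in S_n$, with equality precisely when $\pi$ lies on a minimum-length path from the identity $e$ to $\tau$ in the Cayley graph of $S_n$ generated by all transpositions.

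The crucial input is the following characterization, which I would invoke separately: the non-crossing permutations of $[n]$ are exactly those $\pi\in S_n$ lying on a geodesic from $e$ to the long cycle $\tau=(1,2,\ldots,n)$ in this Cayley metric. Granted this, since $\ell(\tau)=n-1$, the equality case applied to $\pi\in NC(n)$ reads
$$(n-|\pi|)+(n-|\pi^{-1}\circ\tau|)\;=\;n-1,$$
which rearranges to the claim.

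The main obstacle is establishing the geodesic characterization of $NC(n)$, since the rest is bookkeeping. My plan for that would be an induction on $n$: for $\pi\in NC(n)$, locate a block containing a pair of cyclically adjacent indices $(i,i+1)$ --- such a pair must exist for any non-crossing partition --- and factor out the transposition $(i,i+1)$ to obtain a non-crossing permutation on $[n]\setminus\{i\}$ whose associated long cycle has length $n-1$. Counting the number of transpositions produced this way gives exactly $n-|\pi|=\ell(\pi)$, showing saturation of the triangle inequality. Conversely, any minimal factorization of $\tau$ into transpositions can be shown inductively to build a non-crossing structure by reading off its partial products. The delicate step is verifying that the adjacent-pair-removal is compatible with the non-crossing property throughout the induction, which is where the planar nature of $NC(n)$ does the real work.
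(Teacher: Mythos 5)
First, a point of comparison with the paper: the paper does not prove this lemma at all --- it is quoted from Biane, and the only supporting material is the footnote recording the general inequality $|\pi_1|+|\pi_2|\le|\pi_1\circ\pi_2|+n$ and the reformulation $d(\pi,I)+d(\pi,\tau)=n-1$ in terms of the Cayley distance. Your framing via reflection length $\ell(\sigma)=n-|\sigma|$, the triangle inequality, and the characterization of $NC(n)$ as the set of permutations on geodesics from $e$ to the long cycle $\tau$ is exactly that reformulation, and it is the standard (indeed Biane's own) route to the identity. The reduction of the lemma to the geodesic characterization is correct, and the identification $|\bar\pi|=|\pi^{-1}\circ\tau|$ is something the paper simply builds into its description of the Kreweras complement, so verifying it directly as you propose is reasonable.

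The genuine gap is in your proposed induction for the geodesic characterization. The claim that every non-crossing partition has a block containing two cyclically adjacent indices is false: $\{\{1,3\},\{2\},\{4\}\}\in NC(4)$ has no block containing any of the pairs $(1,2),(2,3),(3,4),(4,1)$, yet the lemma holds for it ($|\pi|=3$, $|\pi^{-1}\circ\tau|=|(1\,2)(3\,4)|=2$, and $3+2=n+1$). The adjacent-pair claim is only true for non-crossing partitions with no singleton blocks (take a non-singleton block of minimal span; if its two closest elements were not adjacent, the elements strictly between them would have to be covered by non-singleton blocks of strictly smaller span). So your induction needs a preprocessing step: first delete the fixed points of $\pi$ one at a time --- removing a fixed point $i$ decreases $n$ and $|\pi|$ by one each and leaves $|\pi^{-1}\circ\tau|$ unchanged, since $\pi^{-1}\tau$ sends $i-1\mapsto i\mapsto\pi^{-1}(i+1)$ and deleting $i$ merely shortens that cycle without changing the cycle count --- and only then peel off adjacent transpositions. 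Alternatively, the equality for $NC(n)$ follows from the Euler-characteristic/genus formula for the pair $(\pi,\pi^{-1}\tau)$, with genus zero precisely in the planar, i.e.\ non-crossing, case. Either repair closes the argument; as written, your induction step would stall on partitions like the one above.
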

Generally for any two partitions $\pi_1,\pi_2\in\mathcal{P}(n)$, we have $|\pi_1|+|\pi_2|\le |\pi_1\circ\pi_2|+n$.\footnote{One can define a metric on the permutations known as the \emph{Cayley distance}, $d(\pi_1,\pi_2):=n-|\pi_1^{-1}\circ\pi_2|$, which also counts the minimal number of transpositions needed to transform $\pi_1$ to $\pi_2$. Then equation $|\pi_1|+|\pi_2|\le |\pi_1\circ\pi_2|+n$ follows from the triangle inequality of the Cayley distance. Furthermore, Lemma~\ref{lem:nc} can be equivalently stated for Cayley distance as $d(\pi,I)+d(\pi,\tau)=n-1$.}

Without loss of generality, we consider the case of two QES candidates, and the general case also follows by the same arguments. Consider the $n$-th moment of a density operator $\rho_B$. Recall that the standard replica gravitational path-integral for the fixed area states \eqref{mainZn} gives us
\beq\label{eq:partitionfunction}
\tr\rho^n_B=\frac{Z_n[A_{1,2}]}{Z_1[A_{1,2}]^n} = \sum_{\pi\in S_n} 2^{(|\pi|-n)A_1/4G_N+(|\pi^{-1}\circ\tau|-n)A_2/4G_N}\cdot\tr\left(\rho_{bb'}^{\otimes n}\cdot U_{\tau|b^n}\otimes U_{\pi|{b'}^n}\right)
\eeq 

To proceed, we shall first simplify the problem by discarding the crossing permutations in the sum of \eqref{eq:partitionfunction}, which is justified~\cite{penington2019replica,akers2021leading} by the fact that the areas are implicitly IR divergent, so the leading contributions should maximize the sum $|\pi|+|\pi^{-1}\circ\tau|$, which corresponds to the $\pi\in NC(n)$.  Therefore, we have
\begin{equation}\label{eq:partitionfunction2}
       \frac{Z_n[A_{1,2}]}{Z_1[A_{1,2}]^n} = \sum_{\pi\in NC(n)} 2^{(|\pi|-n)A_1/4G_N+(1-|\pi|)A_2/4G_N}\cdot\tr\left(\rho_{bb'}^{\otimes n}\cdot U_{\tau|b^n}\otimes U_{\pi|{b'}^n}\right)\,.
\end{equation}

As mentioned, we do not yet obtain the general result for a mixed state $\rho_{bb'}$. Instead, we shall proceed and consider the simpler case of $\rho_{bb'}$ being pure.  As a first step, we aim to establish a weaker statement, that is for any $n\ge 2$ the saddle corresponds to $\pi$ equal to $\mathbf{1}$ (or $\tau$) is the largest among all the saddles in \eqref{eq:partitionfunction}, when the R\'enyi entropy of the marginal state on $b$, $H_n(b)_\rho$ is smaller (or larger) than the area difference $(A_1-A_2)/4G_N$. More precisely, we shall prove the following result.

\begin{thm}\label{thm:dominance}
Let $\rho_{bb'}$ be a pure state and $n\in\mathbb{N}, n\ge 2$, and $\Delta:=(A_1-A_2)/4G_N$ be the area difference between the two QES. Then we have a partial order among the summands of \eqref{eq:partitionfunction2}: for any non-crossing permutation $\pi$,
\begin{equation}\label{eq:weak3}
    2^{(n-|\pi|) \Delta}\cdot\tr\rho_b^n\ge \tr\left(\rho_{bb'}^{\otimes n}\cdot U_{\tau|b^n}\otimes U_{\pi|{b'}^n}\right)\,,\quad\quad \text{if}\quad H_n(b)_\rho\le\Delta
\end{equation}
\begin{equation}\label{eq:weak4}
    2^{(1-|\pi|) \Delta}\ge \tr\left(\rho_{bb'}^{\otimes n}\cdot U_{\tau|b^n}\otimes U_{\pi|{b'}^n}\right)\,,\;\;\quad\quad\quad\quad\text{if}\quad H_n(b)_\rho\ge\Delta
\end{equation}
\end{thm}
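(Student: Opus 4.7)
The plan is to evaluate the trace on the right-hand side of both inequalities for a pure bulk state $\rho_{bb'}$, extract its $\pi$-dependence via the cycle structure of $\tau^{-1}\pi$, and then reduce both inequalities to an elementary consequence of the monotonicity of the R\'enyi entropy in its order. First, I would compute $\tr\bigl(\rho_{bb'}^{\otimes n} U_{\tau|b^n}\otimes U_{\pi|b'^n}\bigr)$ explicitly by inserting the Schmidt decomposition $|\psi\rangle_{bb'} = \sum_i \sqrt{\lambda_i}|i\rangle_b|i\rangle_{b'}$. Because the $b$ and $b'$ labels in each copy are perfectly correlated, the two permutations combine into a single matching constraint $\tau(\vec{i})=\pi(\vec{i})$, which forces $\vec{i}$ to be constant on the orbits of $\tau^{-1}\pi$, yielding
\begin{equation*}
\tr\bigl(\rho_{bb'}^{\otimes n} U_{\tau|b^n}\otimes U_{\pi|b'^n}\bigr)=\prod_{c \in \mathrm{cyc}(\tau^{-1}\pi)}\tr\rho_b^{|c|}=\prod_{i=1}^{m} p_{c_i},
\end{equation*}
where $p_k := \tr\rho_b^k = 2^{(1-k)H_k(b)_\rho}$, and the cycles of $\tau^{-1}\pi$ have lengths $c_1,\dots,c_m$ with $\sum_i c_i=n$. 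Since the cycle count is invariant under inversion, Lemma~\ref{lem:nc} gives $m=|\tau^{-1}\pi|=n+1-|\pi|$, so the prefactors in Theorem~\ref{thm:dominance} read $2^{(n-|\pi|)\Delta}=2^{(m-1)\Delta}$ and $2^{(1-|\pi|)\Delta}=2^{(m-n)\Delta}$.

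Next, I would invoke monotonicity of the R\'enyi entropy in its order: $k\mapsto H_k(b)_\rho$ is nonincreasing, equivalently $\log p_k/(k-1)$ is nondecreasing in $k$, which follows from convexity of $k\mapsto\log p_k$ combined with $\log p_1=0$. For each cycle with $c_i\le n$ this gives $\log p_{c_i}\le (c_i-1)\log p_n/(n-1)$. Summing over cycles and using $\sum_i (c_i-1)=n-m$, one finds
\begin{equation*}
\sum_{i=1}^m\log p_{c_i}-\log p_n \le \frac{n-m}{n-1}\log p_n - \log p_n = -\frac{m-1}{n-1}\log p_n = (m-1)H_n(b)_\rho.
\end{equation*}
Under the hypothesis $H_n(b)_\rho\le\Delta$ the right-hand side is $\le (m-1)\Delta=(n-|\pi|)\Delta$, and exponentiating recovers \eqref{eq:weak3}.

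For \eqref{eq:weak4}, monotonicity in the other direction together with the hypothesis $H_n(b)_\rho\ge\Delta$ implies $H_{c_i}(b)_\rho\ge H_n(b)_\rho\ge\Delta$, hence $(c_i-1)H_{c_i}(b)_\rho\ge (c_i-1)\Delta$ for every $i$ (trivially so when $c_i=1$). Using $(c_i-1)H_{c_i}(b)_\rho=-\log p_{c_i}$ and summing yields $-\sum_i\log p_{c_i}\ge (n-m)\Delta$, which exponentiates to $\prod_i p_{c_i}\le 2^{(m-n)\Delta}=2^{(1-|\pi|)\Delta}$. The only substantive inputs are Biane's cycle count and the monotonicity of the R\'enyi entropies; I do not foresee a genuine obstacle, but the main place to be careful is the bookkeeping in the first step---tracking how the two permutations acting on the $b$ and $b'$ factors combine to give the single effective permutation $\tau^{-1}\pi$, and ensuring the signs and directions of the monotonicity are consistent throughout.
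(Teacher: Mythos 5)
Your proof is correct and follows essentially the same route as the paper's: reduce the two-permutation trace on a pure $\rho_{bb'}$ to a product of moments of $\rho_b$ over the cycles of the Kreweras complement (the paper does this via vectorization in Lemma~\ref{lem:kreweras}, you via the Schmidt decomposition --- same content), invoke Biane's identity $|\pi|+|\pi^{-1}\circ\tau|=n+1$, and close with monotonicity of the R\'enyi entropies in the order. The only difference is organizational: you bound each cycle factor directly against $\tr\rho_b^n$, whereas the paper first compares the full product to a single moment $\tr\rho_b^{k}$ (via the non-crossing, non-nesting case) before invoking the hypothesis on $H_n(b)_\rho$; both rest on the same monotonicity input.
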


The key is to understand the RHS of \eqref{eq:weak3} and \eqref{eq:weak4}, for which we would need the following lemma.

\begin{lem}\label{lem:kreweras}
Let $\rho_{AB}$ be a pure state on $\h_{A}\otimes\h_B$. For any $n\in\mathbb{N}^+$ and any NC permutation $\pi\in NC(n)$ and its Kreweras complement $\bar\pi$, we have
\begin{equation}
    \tr\left(\rho^{\otimes n}_{AB}\cdot U_{\tau|A^n}\otimes U_{\pi|B^n}\right)=\prod_{\bar X\in \bar\pi}\tr\rho_A^{|\bar X|}\ .
\end{equation}
\end{lem}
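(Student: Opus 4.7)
The plan is to exploit the Schmidt decomposition to reduce the bipartite trace to a single-system sum. I would write $|\psi\rangle_{AB}=\sum_i\sqrt{p_i}\,|i\rangle_A|i\rangle_B$ in the Schmidt basis, so that $\rho_{AB}^{\otimes n}=|\psi\rangle\langle\psi|^{\otimes n}$ is pure on $A^nB^n$ and the trace becomes the inner product $\langle\psi|^{\otimes n}\,U_{\tau|A^n}\otimes U_{\pi|B^n}\,|\psi\rangle^{\otimes n}$. Using $|\psi\rangle^{\otimes n}=\sum_{\vec i}\sqrt{p_{\vec i}}\,|\vec i\rangle_{A^n}|\vec i\rangle_{B^n}$ with $p_{\vec i}=\prod_a p_{i_a}$, the entire calculation reduces to bookkeeping on Schmidt labels: the $A$-side and $B$-side permutations shuffle the labels independently, but the two sides share a single label vector $\vec i$ because of the Schmidt identification.

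First I would expand both factors of the inner product in the Schmidt basis and carry out the two sums coming from $\langle\vec j|\cdot\rangle_A$ and $\langle\vec j|\cdot\rangle_B$. This produces a pair of delta constraints: the $A$-side contraction forces $j_a=i_{\tau^{-1}(a)}$ while the $B$-side forces $j_a=i_{\pi^{-1}(a)}$. Eliminating $\vec j$, these combine into a single constraint $i_{\tau^{-1}(a)}=i_{\pi^{-1}(a)}$ for all $a$, equivalently $i_b=i_{(\pi^{-1}\circ\tau)(b)}$ for all $b$. Hence $\vec i$ must be constant on the orbits of $\pi^{-1}\circ\tau$, which by the definition recalled just above the lemma is precisely the Kreweras complement $\bar\pi$. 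The weight $p_{\vec j}=p_{\vec i}$ because the two differ only by a permutation of indices, so no awkward $\sqrt{\,\cdot\,}$ factors survive.

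The sum then factorises cycle by cycle: for each cycle $\bar X\in\bar\pi$ of length $|\bar X|$, one free Schmidt label $v$ runs over the alphabet with weight $p_v^{|\bar X|}$, contributing $\sum_v p_v^{|\bar X|}=\tr\rho_A^{|\bar X|}$. Taking the product over all cycles delivers the claimed identity $\prod_{\bar X\in\bar\pi}\tr\rho_A^{|\bar X|}$.

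The whole argument is essentially mechanical; the only delicate point is to maintain a consistent convention for the action of $U_\pi$ on labelled basis states so that the combined constraint really reads $\pi^{-1}\circ\tau$ rather than a conjugate variant such as $\tau\circ\pi^{-1}$. A quick consistency check at the two extremes pins this down: $\pi=\mathbf{1}$ gives $\bar\pi=\tau$, a single $n$-cycle, and hence $\tr\rho_A^n$, reproducing the familiar R\'enyi expression associated to $S_n(A)$; while $\pi=\tau$ gives $\bar\pi=\mathbf{1}$ and $(\tr\rho_A)^n=1$, consistent with the fact that $U_{\tau|A^n}\otimes U_{\tau|B^n}$ acts trivially on the permutation-symmetric state $|\psi\rangle^{\otimes n}$. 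This dispenses with the only real source of ambiguity and completes the proof.
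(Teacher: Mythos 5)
Your proof is correct. It arrives at the same identity by a genuinely more coordinate-based route than the paper: you expand everything in the Schmidt basis of $\ket{\psi}_{AB}$ and track the delta-function constraints imposed by the two permutation contractions, finding that the surviving label vectors are exactly those constant on the orbits of $\pi^{-1}\circ\tau$, whence the cycle-by-cycle factorisation into $\prod_{\bar X}\tr\rho_A^{|\bar X|}$. The paper instead works basis-free via vectorization: it writes $\ket{\rho}_{AB}=V_{A'\to B}\ket{\sqrt{\rho}}_{AA'}$, uses the identity $P_A\otimes Q_{A'}\ket{\sqrt{X}}_{AA'}=\ket{P\sqrt{X}Q^{\intercal}}_{AA'}$ together with the reality of permutation representations to collapse the bipartite trace to $\tr(\rho_A^{\otimes n}U_{\tau\circ\pi^{-1}})$, and then invokes the definition of the Kreweras complement. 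The two arguments are computationally equivalent; yours is more elementary and makes the combinatorial origin of $\bar\pi$ as the orbit structure of $\pi^{-1}\circ\tau$ completely transparent, while the paper's version handles the case $|A|\neq|B|$ via the explicit (partial) isometry $V_{A'\to B}$ without ever choosing a basis. Your worry about the convention ($\pi^{-1}\circ\tau$ versus $\tau\circ\pi^{-1}$) is in any case immaterial for the final formula, since the two are conjugate permutations and hence share a cycle type, so the multiset $\{|\bar X|\}$ is the same; your endpoint checks at $\pi=\mathbf{1}$ and $\pi=\tau$ settle it regardless.
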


\begin{proof}
A \emph{vectorization} of a linear operator $X_A$ is defined as $X_A\otimes I_{A'}\ket{\Phi}_{AA'}$  where $A'\cong A$ and $\ket{\Phi}_{AA'}:=\sum_i\ket{i}_A\ket{i}_{A'}$ is the unnormalized Bell state. It is denoted as $\ket{X}_A$. In particular, the canonical purification of a density matrix $\rho_A$ is defined as $\ket{\sqrt{\rho}}_{AA'}$. Let us also denote the state vector of the pure state $\rho_{AB}$ as $\ket{\rho}_{AB}$. Since all purifications are equivalent up to a local isometry on the purifying system, we have
\begin{equation}
    \ket{\rho}_{AB}=V_{A'\to B}\ket{\sqrt{\rho}}_{AA'}
\end{equation}
where $V_{A'\to B}$ is a unitary, an isometry or a partial isometry depending on the dimensions $|A|, |B|$. Then it follows that 
\begin{equation}
    U_{\tau|A^n}\otimes U_{\pi|B^n} \ket{\rho}_{AB}^{\otimes n}=V_{A'\to B}^{\otimes n}U_{\tau|A^n}\otimes U_{\pi|A'^n}\ket{\sqrt{\rho}^{\otimes n}}_{A^nA'^n}
\end{equation}
where the unitaries commute because the permutations commute with any local unitary/isometry. Then we use the identity (cf. for example, eq.(2.79) in~\cite{watrous2018theory}), for any linear operators $P_A$ and $Q_{A'}$, we have
\begin{equation}
    P_A\otimes Q_{A'} \ket{\sqrt{X}}_{AA'}= \ket{P\sqrt{X}Q^\intercal}_{AA'}
\end{equation}
to obtain 
\begin{equation}
    V_{A'\to B}^{\otimes n}U_{\tau|A^n}\otimes U_{\pi|A'^n}\ket{\sqrt{\rho}^{\otimes n}}_{A^nA'^n} =  V_{A'\to B}^{\otimes n}\ket{U_{\tau}\sqrt{\rho}^{\otimes n}U_{\pi}^{-1}}_{A^nA'^n}=V_{A'\to B}^{\otimes n}\ket{\sqrt{\rho}^{\otimes n}U_{\tau}U_{\pi}^{-1}}_{A^nA'^n} 
\end{equation}
where the first equality use the fact that the unitary representations of the permutation group on $\h_A^{\otimes n}$ have real entries and thus symmetric; and the second equality follows from $U_\tau\rho^{\otimes n}U_\tau^*=\rho^{\otimes n}$. Then we have
\begin{equation}
\begin{aligned}
    \tr\left(\rho^{\otimes n}_{AB}\cdot U_{\tau|A^n}\otimes U_{\pi|B^n}\right)&=\bra{\sqrt{\rho^{\otimes n}}}V^{*\otimes n}_{A'\to B}V_{A'\to B}^{\otimes n}\ket{\sqrt{\rho^{\otimes n}}U_{\tau}U_{\pi}^{-1}}_{A^nA'^n}\\
    &=\tr(\rho_A^{\otimes n} U_\tau U_{\pi}^{-1})=\tr( \rho_A^{\otimes n} U_{\tau\circ\pi^{-1}})=\prod_{\bar X\in \bar\pi}\tr\rho_A^{|\bar X|}\ 
\end{aligned}
\end{equation}
where the last step follows from the definition of the Kreweras complement.
\end{proof}

\begin{proof}[Proof of Theorem~\ref{thm:dominance}]
Using Lemma~\ref{lem:kreweras}, we can unpack the trace function
\begin{equation}
\tr\left(\rho_{bb'}^{\otimes n}\cdot U_{\tau|b^n}\otimes U_{\pi|{b'}^n}\right)=\prod_{\bar X\in\bar\pi}\tr\rho_b^{|\bar X|}=\prod_{i=1}^{n+1-k}\tr\rho_b^{n_i}. 
\end{equation}
where $n_i$ denotes the size of each cycle $\bar X$ in the Kreweras complement $\bar\pi$ of $\pi$.

Next, we want to show that
\begin{equation}\label{eq:subleading}
\tr\rho_b^k\ge\prod_{i=1}^{n+1-k}\tr\rho_b^{n_i}\,,\quad\forall \{n_i\}\vdash [n]\,.
\end{equation}
where 
\begin{equation}\label{eq:ncnn}
    \tr\rho^k_b=\tr\left(\rho_{bb'}^{\otimes n}\cdot U_{\tau|b^n}\otimes U_{\pi|{b'}^n}\right)
\end{equation}
for a $k$-cycle non-crossing and \emph{non-nesting} (NN) permutation $\pi$. These are permutations whose cycles do not nest each other. Then \eqref{eq:ncnn} follows for there are $k$ cycles. 

Taking the $\frac{1}{1-k}\log$ gives
\begin{equation}
    \sum_{i}\frac{1}{1-k}\log\tr\rho_b^{n_i+1}=\sum_{i}\frac{n_i}{k-1}H_{n_i+1}(b)_\rho\ge H_k(b)_\rho
\end{equation}
where we used the monotonicity of the R\'enyi entropies in the last inequality. The claim \eqref{eq:subleading} follows by undoing the $\frac{1}{1-k}\log$.\\

Having established \eqref{eq:subleading}, it suffices to check whether the permutations that are both NC and NN satisfy \eqref{eq:weak3} and \eqref{eq:weak4}, that is for any such permutations $\pi$ with $|\pi|=k$,
\begin{equation}\label{eq:weak5}
    2^{(n-k) \Delta}\cdot\tr\rho_b^n\ge \tr\rho_{b}^k\,,\quad\quad \text{if}\quad H_n(b)_\rho\le\Delta
\end{equation}
\begin{equation}\label{eq:weak6}
    2^{(1-k) \Delta}\ge \tr\rho_{b}^k\,,\;\;\quad\quad\quad\quad\text{if}\quad H_n(b)_\rho\ge\Delta
\end{equation}
which, together with \eqref{eq:subleading}, will lead to \eqref{eq:weak3} and \eqref{eq:weak4}.

They follow straightforwardly. Taking the log on the difference between two sides \eqref{eq:weak5} and follow by the division of $n-k$,
\begin{equation}
    \Delta+\frac{1-n}{n-k}H_n(b)_\rho-\frac{1-k}{n-k}H_k(b)_\rho \ge \Delta+\frac{k-n}{n-k}H_n(b)_\rho \ge 0
\end{equation}
 Similarly, \eqref{eq:weak6} follows from
\begin{equation}
    \Delta-H_k(b)_\rho \le \Delta - H_n(b)_\rho \le 0\,.
\end{equation}
\end{proof}

We have shown that the leading contributions are given by the LHS of \eqref{eq:weak3} and \eqref{eq:weak4}. This statement is weaker than what we want, because what we really need is that the dominant summand in \eqref{eq:partitionfunction2} is larger than the sum of the rest contributions. Now we make a stronger claim that the sum of the subleading contributions do not change the estimate of $S_n(B)_\rho$ much. Note that we are not saying the sum of the subleading contributions are suppressed as compared to the leading one. Rather we only need to argue that including them altogether only decreases the $S_n(B)_\rho$ value by at most an $\O(1)$ amount, so such a bounded shift by a few bits can safely be ignored when the system in consideration is large. 

The estimates go as follows. The total number of of non-crossing permutations of $n$ systems is the Catalan number,
\begin{equation}
    C_n=\frac{1}{n+1}\binom{2n}{n}\sim \frac{4^n}{n^{3/2}}\,.
\end{equation}

The partition function \eqref{eq:partitionfunction2} reads
\beq
\begin{aligned}
   \tr\rho^n_B=\frac{Z_n[A_{1,2}]}{Z_1[A_{1,2}]^n}=& \sum_{\pi\in NC(n)} 2^{(|\pi|-n)A_1/4G_N+(1-|\pi|)A_2/4G_N}\cdot\tr\left(\rho_{bb'}^{\otimes n}\cdot U_{\tau|b^n}\otimes U_{\pi|{b'}^n}\right)\,.
\end{aligned}
\eeq 

When $H_n(b)_\rho\ge \Delta$, \eqref{eq:weak3} implies
\begin{equation}
    \begin{aligned}
       \tr \rho_B^n \le& \sum_{\pi\in NC(n)}2^{(1-n)A_1/4G_N}=C_n\cdot2^{(1-n)A_1/4G_N}\le 2^{2n+(1-n)A_1/4G_N},
    \end{aligned}
\end{equation}
and the R\'enyi entropy is given by
\begin{equation}
    S_n(B)_\rho\ge \frac{1}{1-n}\left(\log\tr\rho^n_{bb'}+2n+(1-n)\frac{A_1}{4G_N}\right)=\frac{A_1}{4G_N}(+H_n(bb')_\rho)+\frac{2n}{1-n}\,.
\end{equation}

When $H_n(b)_\rho\le\Delta$, \eqref{eq:weak4} implies
\begin{equation}
    \begin{aligned}
       \tr \rho_B^n \le& \sum_{\pi\in NC(n)}2^{(1-n)A_2/4G_N}\tr\rho^n_{b}=C_n\cdot2^{(1-n)A_2/4G_N}\tr\rho^n_{b}
       \le2^{2n+(1-n)A_2/4G_N}\tr\rho^n_{b}\,,
    \end{aligned}
\end{equation}
and the R\'enyi entropy is given by
\begin{equation}
    S_n(B)_\rho\ge \frac{1}{1-n}\left(\log\tr\rho^n_{b}+2n+(1-n)\frac{A_2}{4G_N}\right)=\frac{A_2}{4G_N}+H_n(b)_\rho+\frac{2n}{1-n}\,.
\end{equation}

We know that the R\'enyi entropy $S_n(B)_\rho$ is bounded from above by $A_1/4G_N+H_n(bb')_\rho$ or $A_2/4G_N+H_n(b)_\rho$ when discarding all the replica asymmetric saddles, we see that including them induces \emph{at most} a negligible four-bits shift in the final entropy for R\'enyi indices $n\ge 2$. For $n\rightarrow\infty$, the min-entropy is only off by two-bits.

On the other hand, near $n=1+\eps$, the lower bound we obtained can be arbitrarily bad as $\eps$ goes to zero, suggesting that the von Neumann entropy has a qualitatively different behaviour near the transition. We know that this is indeed the case from the examples in~\cite{akers2021leading} and the refined Page curve in Section~\ref{sec:pagecurve}. The effect of replica symmetry breaking is clearly significant for von Neumann entropy and it becomes less relevant as the R\'enyi index increases. 

The result shown here partially justifies the replica symmetry assumption used for the refined QES derivation in the previous section. For the application in computing the black hole Page curve in next section, we are confident that the refined QES prescription holds as there the bulk state is pure.  We expect the result can be generalized, at least for the min-entropy, to mixed bulk states $\rho_{bb'}$, but currently a proof is still lacking.

\section{Example: a refined black hole Page curve}\label{sec:pagecurve}

Let us recall that the LM derivation of the RT formula is deemed by the authors as leading to a ``generalized gravitational entropy''~\cite{lewkowycz2013generalized}. Hence, the LM derivation really implies that the RT formula applies beyond the scope of AdS/CFT, and the recent developments of the black hole Page curve calculation confirm their insight. Similarly, the fact that a direct path integral derivation works also for the refined QES prescriptions implies that the correction should apply more generally. Essentially, as long as we are dealing with a gravity theory that enables us to calculate partition functions like $\tr \rho^n$ via a gravitational path integral, the same derivation should basically go through (with caveats of course). Here we confirm this generalization by giving an example of a \emph{refined} black hole Page curve.

We choose the simple model of the Jackiw–Teitelboim (JT) gravity with an End of the World (EOW) brane, also known as the Penington-Shenker-Stanford-Yang (PSSY) model~\cite{penington2019replica}, where the gravitational path integral can be done explicitly such that the entanglement spectrum can be resolved. Therefore, by assigning a bulk state that is generic enough to have distinct conditional min, max and von Neumann entropies, we can perform the sum over geometries and directly compute the entanglement entropy to verify that if the page curve behaves consistently with the prediction of the refined QES prescription. As we shall see, the Page curve does deviate largely in the indefinite regime from the RT formula. It demonstrates a refinement to the \emph{island formula}, which is an incarnation of the QES formula applied to the Hawking radiation. To avoid diverting away from the main theme of this work, we shall only give an example here and more comprehensive analysis of the Page curve refinement will be provided in a future work.

\subsection{Review of the PSSY model}
We start with a minimal review of the PSSY model. The model describes a 2D eternal AdS black hole whose interior degrees of freedom living on the EOW brane are entangled with the early Hawking radiation on some auxiliary system outside of the spacetime (cf. Figure~\ref{fig:setup}). Such a black hole does not evaporate so we model the Page curve evolution by tuning the amount of entanglement by hand. Then if we compute the von Neumann entropy of such an eternal black hole, its value should follow a Page curve, which initially increases and then flattens out, as one tunes up the entanglement. The simplest scenario as studied by~\cite{penington2019replica} is to consider the following maximally entangled state:
\begin{equation}\label{eq:entangled}
    \ket{\rho}_{BR}:=\frac{1}{\sqrt{k}}\sum_{i=1}^k \ket{\psi_i}_B\ket{i}_R
\end{equation}
where $\ket{\psi_i}_B$ is the black hole state with the brane in state $i$, and the $\ket{i}_R$ is the state of the ``Hawking partner'' in the radiation. Here the Schmidt coefficients are chosen to be flat with the Schmidt rank $k$. The entropy of the Radiation system is given by the island formula,
\begin{equation}
    S(R)_\rho=\frac{A[\gamma]}{4G_N}+H(I\cup R)_\rho.
\end{equation}
In JT gravity, the ``area'' term is given by a topological contribution in a crude approximation, and we can take it to be some constant $S_{\text{BH}}$. Since the global state is pure, the formula also computes the entropy of the black hole system. The Page curve depicts how does this entanglement entropy changes as one tunes $k$. According to the QES prescription, the Page curve has two regimes with $k< 2^{S_{\text{BH}}}$ and $k> 2^{S_{\text{BH}}}$ (cf. Figure~\ref{fig:setup}). In the former case, the QES surface is empty and so it the island, so we have no area term but only a bulk entropy term of $H(I\cup R)_\rho=H(R)_\rho=\log k$  evaluated in the state \eqref{eq:entangled}; and in the latter case, the QES surface sits at the horizon and we have no bulk entropy contribution $H(I\cup R)_\rho=H(BR)_\rho=0$ but only a constant area term $S_{\text{BH}}$. The island formula therefore provides us a crude sketch of the Page curve as we tune up $k$.

\begin{figure}
\centering
\begin{subfigure}{.5\textwidth}
  \centering
  \includegraphics[width=.8\linewidth]{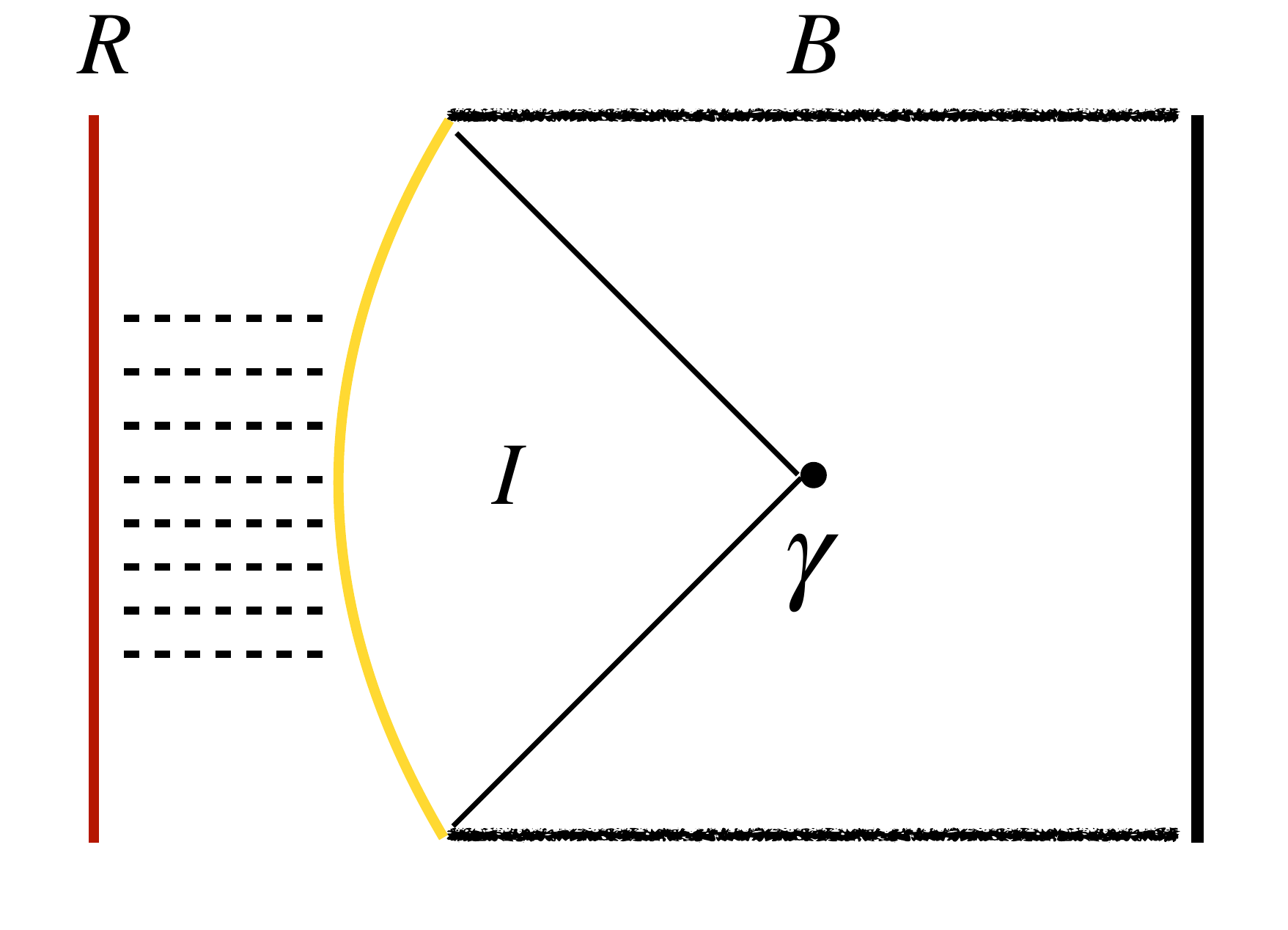}
  \caption{\centering $k>2^{S_{\text{BH}}}$}
\end{subfigure}%
\begin{subfigure}{.5\textwidth}
  \centering
  \includegraphics[width=0.8\linewidth]{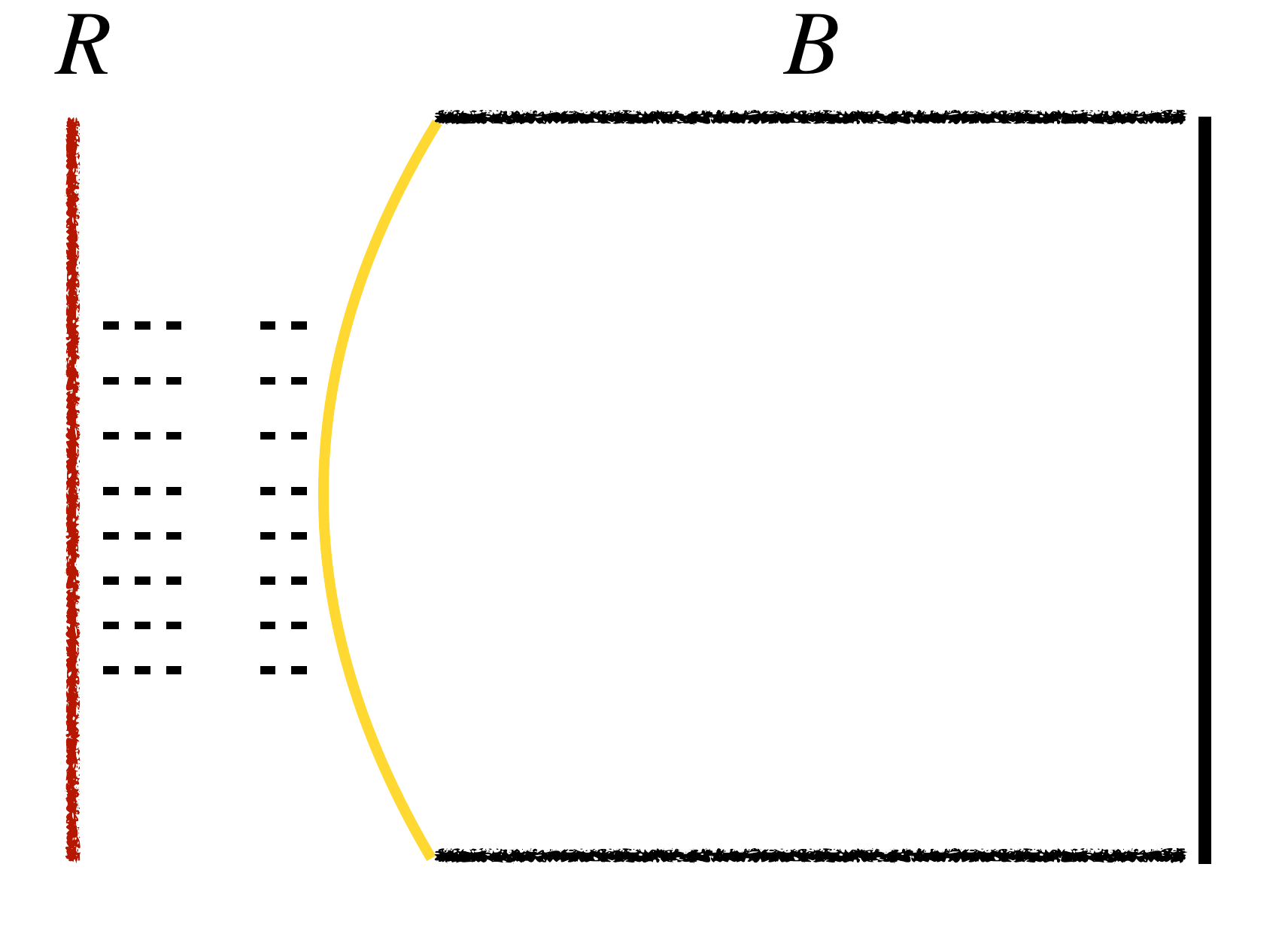}
  \caption{\centering $k<2^{S_{\text{BH}}}$}
\end{subfigure}
\caption{The auxiliary system storing Hawking radiation is depicted red, and the EOW brane is in yellow. The entanglement between them is by the dash line. (a) The island region is contained the entanglement wedge of $R$ while QES is at $\gamma$. The entanglement and purity are preserved, so the entropy $S(R)_\rho$ is given by the area term $2^{S_{\text{BH}}}$.(b) The QES and the island are empty, and the brane is no longer contained in the entanglement wedge of $R$. The entanglement is broken in the entanglement wedge, so the entropy $S(R)_\rho$ is given by the bulk entropy term $H(R)_\rho=\log k$ .}
\label{fig:setup}
\end{figure}

However, the formula looks nonsensical at a first sight: Since \eqref{eq:entangled} is given in the form of a Schmidt decomposition, then it seems straightforward that the entropy is simply $\log k$ according to \eqref{eq:entangled}. The crux is to note that the inner product $\braket{\psi_i}{\psi_j}$ is defined via the gravitation path integral, so one needs to be careful about whether it is legit to use the Dirac notation, which implicitly assumes the standard Hilbert space treatment. In fact, the path integral calculation indicates that the orthogonality among $\ket{\psi_i}$'s is not respected when we considered higher moments like $\tr\,\rho^n_R$ for $n>1$ while we compute the R\'enyi entropies. This is a curious feature of the gravitational path integral that causes the factorization problem~\cite{maldacena2004wormholes,arkani2007}, and such deviations from orthogonality are essentially due to the replica wormhole contributions in the semiclassical calculation. A common interpretation is that the gravitational path integral is actually computing the inner product in an ensemble average of theories with disorder~\cite{saad2018semiclassical,saad2019jt,saad2019late,stanford2019jt,bousso2020unitarity,bousso2020gravity,pollack2020eigenstate,marolf2020transcending,marolf2021observations,stanford2020more,cotler2021ads3} (see also~\cite{engelhardt2021free,liu2021entanglement,afkhami2021free,chen2021bra,belin2021random,maloney2020averaging,blommaert2020dissecting,giddings2020wormhole,van2021comments,eberhardt2021summing,saad2021wormholes,verlinde2021deconstructing,kudler2021relative}).  The non-orthogonality is crucial in obtaining the unitary Page curve from the path integral, as one can basically think of the phase transition occurs due to the accumulated effect of the non-orthogonality. Ignoring such deviations and simply computing the entropy for $\rho_B$ gives the paradoxical Hawking result~\cite{hawking1976breakdown}. It is thus also important to distinguish that $S(R)_\rho$ denotes the exact fine-grained entropy of the radiation computed from the replica trick, whereas $H(I\cup R)_\rho$ is computed in the semiclassical description \eqref{eq:entangled}, where we denote by $H$ to avoid any confusion.

When implementing the gravitational path integral for $\tr\rho_R^n$ in the semiclassical regime, we can assume that the integral reduces to a sum over saddle point geometries. This simple model has the advantage that we can do the sum explicitly without approximations, and we can extract the entire spectrum of $\rho_B$. Given the spectrum, there is no need to invoke the analytic continuation or AEP to compute the entanglement entropy. Note that because of the non-orthogonality, the spectrum is not the flat one we put in \eqref{eq:entangled}. Rather, we expect that the relevant spectrum is also random due to the disorder. Therefore, we should aim for the distribution density for the spectral values, and invoke tools from the random matrix theory to facilitate the calculation. We shall use the \emph{resolvent method}, as pioneered by PSSY in gravity~\cite{penington2019replica}. 
 
 The resolvent $k\times k$ matrix of the a positive semidefinite Hermitian operator $\rho$, parameterized by some complex variable $\lambda$, is defined as
\begin{equation}\label{eq:resolvent}
    \mathbf{R}(\lambda;\rho):=\left(\lambda\mathbf{I}_{k\times k}-\rho\right)^{-1}=\frac{\mathbf{I}_{k\times k}}{\lambda}+\sum_{n=1}^\infty \frac{\rho^n}{\lambda^{n+1}}
\end{equation}
where the second equality is an asymptotic expansion. The resolvent has poles at the spectrum of $\rho$ so we can think of it as a green function, and all the spectral functions, such as the entropies, can be evaluated using this green function. Explicitly, denoting the trace as $R(\lambda;\rho)=\tr\,\mathbf{R}(\lambda;\rho)$, we can extract the spectral density $D(\lambda;\rho)$ from the imaginary part of $R(\lambda;\rho)$ via the inverse Stieltjes transformation
\begin{equation}
    D(\lambda;\rho) = \lim_{\eps\rightarrow 0^+}\frac{R(\lambda-i\eps)-R(\lambda+i\eps)}{2\pi i}\,,
\end{equation}
and the density function is normalized
\begin{equation}
    \int_{\mathbb{R}^+}\dd\lambda\; D(\lambda;\rho)\lambda = 1, \quad \int_{\mathbb{R}^+}\dd\lambda\; D(\lambda;\rho) = k\,
\end{equation}
where the first equation says $\rho$ has unit trace and the second says they are $k$ spectral values.

The von Neumann entropy and the R\'enyi entropies are given by
\begin{equation}\label{formulaentropy}
   S(\rho) = -\int_{\mathbb{R}^+} \dd\lambda\; D(\lambda;\rho)\lambda\log\lambda , \quad S_n(\rho) = \frac{1}{1-n}\log\int_{\mathbb{R}^+} \dd\lambda\; D(\lambda;\rho)\lambda^n\,.
\end{equation}
The task is to compute the trace of the resolvent matrix in order to obtain the density function. This was done via a ``Feynman diagram" re-summation trick (cf. page 11 in~\cite{penington2019replica}), and one can obtain the following recurrence relation: 
\begin{equation}\label{eq:resum0}
    \lambda R_{ij}(\lambda;\rho) = \delta_{ij}+\sum_{n=1}^\infty \frac{Z_n R(\lambda;\rho)^{n-1}R_{ij}(\lambda;\rho)}{k^n Z_1^n}.
\end{equation}
The partition functions $Z_n$ can be worked out explicitly. In the case of a microcanonical ensemble, the result is very simple: $Z_n/Z_1^n=2^{-(n-1)S}$ where $S$ is the entropy of the microcanonical ensemble. One can take the trace, compute the geometric series and obtain a quadratic equation:
\beq
\lambda R(\lambda;\rho) = k+\frac{2^S R(\lambda;\rho)}{k 2^S - R(\lambda;\rho)}.
\eeq
Solving for $R(\lambda;\rho)$ and then $D(\lambda;\rho)$ yields the \emph{Marchenko–Pastur} distribution for the singular values of large random matrices~\cite{marvcenko1967distribution}. The page curve follows from evaluating \eqref{formulaentropy}, and there is a relatively sharp transition around $S\approx H(R)_\rho$ (cf. the blue curve in Figure~\ref{fig:pagecurve}.) Note that even for the microcanonical ensemble, the sum over saddles introduces an $\O(1)$ (upper bounded by one half) deviation near the transition. This is essentially the same phenomenon as in Page's original result using Haar random states~\cite{page1993average}. This effect is not associated with any parameter and negligible if we consider large entropy values. For more details on this model and the Page curve calculation, refer to section 2 in~\cite{penington2019replica}. 

\subsection{A black hole in superposition}
We'd like to show that the transition is not generally characterized by the von Neumann entropy but rather the one-shot entropies. For simplicity, we leave out the smoothing here, as the transition is sharp enough. To show this, the original setup studied in~\cite{penington2019replica} has to be modified because the bulk state there has a flat spectrum so that the min, max and von Neumann entropies all coincide and thus the refinement is not manifested. We therefore consider modifying the Schmidt coefficients $\{c_i\}_{i=1}^k$,
\begin{equation}\label{eq:generalspectrum}
    \ket{\rho}_{BR}:=\sum_{i=1}^k c_i\ket{\psi_i}_B\ket{i}_R.
\end{equation}
We shall focus on the microcanonical ensemble where the correction to the standard QES prescription is manifested the most in absence of energy fluctuation. For simplicity, we consider an ``L-shaped'' spectrum where the first $m$ coefficients share the same value $\{c_i\}_{i=1}^{m}=\sqrt{\frac{p}{m}}$ and the rest $k-m$ coefficients share the same value $\{c_i\}_{i=m+1}^k=\sqrt{\frac{1-p}{k-m}}$. We assume the coefficients are such that $\frac{p}{m}>\frac{1-p}{k-m}$. Physically, the state represents a black hole in a superposition of two stages of evaporation. In one branch, there has been $\log k$ amount of emitted radiation quanta, and $\log m$ amount of quanta in the other branch. Both branches share the same Bekenstein-Hawking entropy and  the same semiclassical geometry.

The calculation of the spectral density follows the same resolvent method and we aim to compute its trace.  Then the diagrammatic rules used to compute $R(\lambda;\rho)$ needs to be modified by the weights $\{c_i\}$. Because of the weights, it turns out that the following weighted sum
\begin{equation}\label{eq:R1}
    R^{(1)}(\lambda;\rho): = \sum_i^k c^2_i R_{ii}(\lambda;\rho)
\end{equation}
 will also be relevant in the re-summation. Following the same steps as in~\cite{penington2019replica} (cf. Appendix~\ref{sec:JTEOW} for the detailed calculation.), we obtain 
 \begin{equation}\label{eq:resum}
     \lambda R_{ij}(\lambda;\rho) = \delta_{ij}+\frac{c_i^2R_{ij}(\lambda;\rho)}{1-R^{(1)}2^{-S}}\,.
 \end{equation}
In contrast to \eqref{eq:resum0}, simply taking the trace of \eqref{eq:resum} now does not give us an equation of $R(\lambda;\rho)$ as before. We can proceed by considering another variable $R_{11}(\lambda;\rho)$ which is the first diagonal element of the resolvent matrix. Note that it shares the same value as the first $m$ diagonal elements according to \eqref{eq:resum}. Then we can obtain a system of three equations for three unknowns $R(\lambda;\rho),R^{(1)}(\lambda;\rho),R_{11}(\lambda;\rho)$,
\begin{equation}\label{eq:cubicequations}
    \begin{cases}
        \lambda R(\lambda;\rho) =& k+\frac{R^{(1)}(\lambda;\rho)}{1-R^{(1)}(\lambda;\rho)2^{-S}}\,,\\
        \lambda R_{11}(\lambda;\rho) =& 1+\frac{pR_{11}(\lambda;\rho)}{m-mR^{(1)}(\lambda;\rho)2^{-S}}\,,\\
        R^{(1)}(\lambda;\rho) =& \frac{1-p}{k-m}R(\lambda;\rho) + m\left(\frac{p}{m}-\frac{1-p}{k-m}\right)R_{11}(\lambda;\rho)\,
    \end{cases}
\end{equation}
where the first two equations follow from \eqref{eq:resum} and the last follows from the definition of $R^{(1)}(\lambda;\rho)$ \eqref{eq:R1}. We focus on solving $R(\lambda;\rho)$ and the equations reduce to a cubic equation for $R(\lambda;\rho)$ after eliminating $R^{(1)}(\lambda;\rho)$ and $R_{11}(\lambda;\rho)$. This is an elementary but still complicated equation to solve, so we omit here the general expression of the solution. We look for positive real domains of $\lambda$ such that the cubic discriminant is negative, corresponding to the part of of the spectrum with non-zero density. Then the cubic admits one real and two complex solutions conjugate to each other. The imaginary part gives us the spectral density. 
\begin{figure}
  \centering
\includegraphics[width=0.6\linewidth]{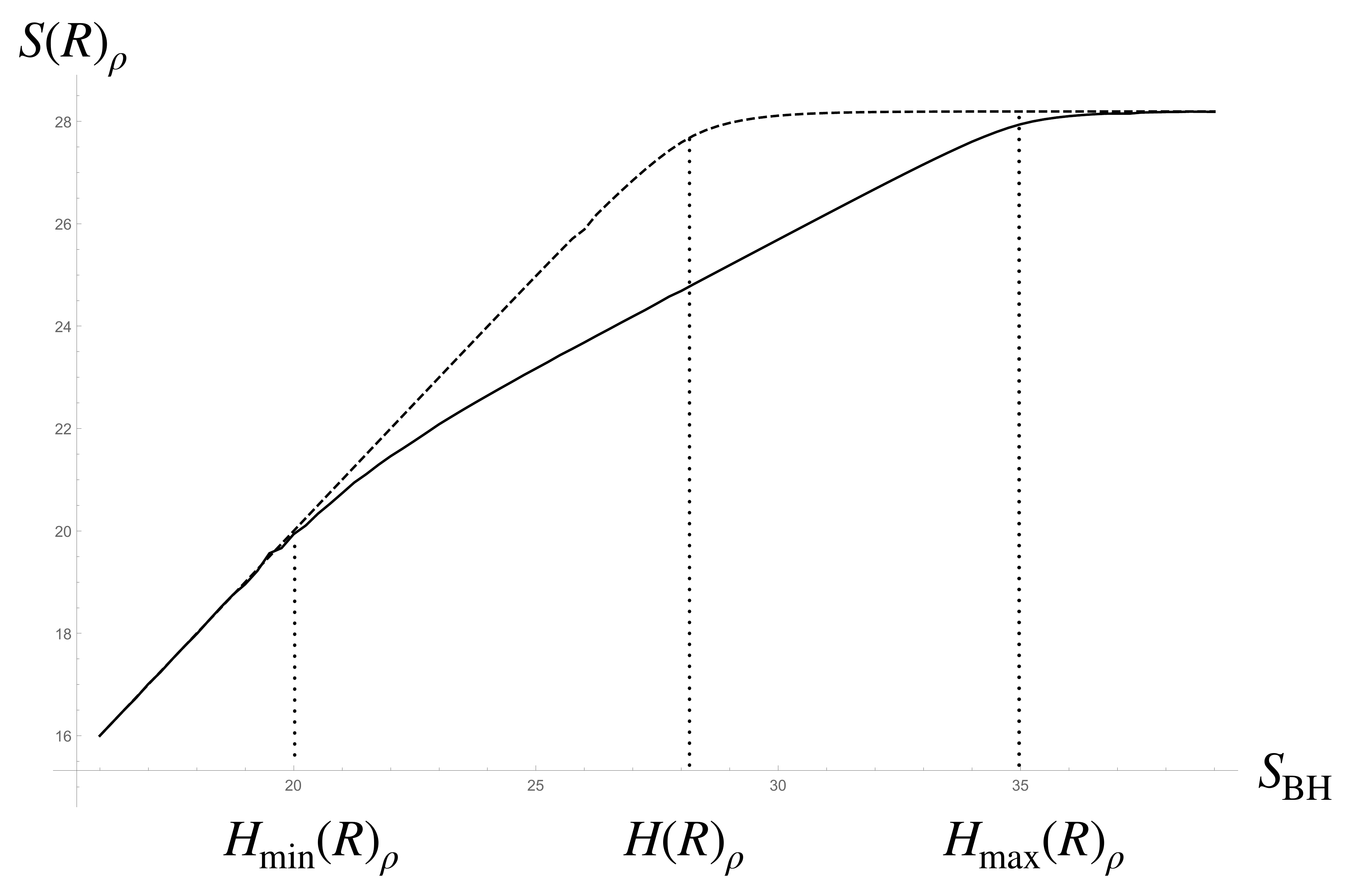}
  \caption{The Page curve we study here depicts the von Neumann entropy of the radiation $S(R)$ against different values of the entropy $S$ of the microcanonical ensemble. We plot an analytically evaluated Page curve (solid) against the prediction of the standard island rule (dashed, i.e. the scenario with a flat Schmidt spectrum). The resulting behaviour is consistent with the refined QES prescription, and here we clearly find three regimes among which two are well described by the RT formula. This example uses with $k=2^{35}$, $m=2^{20}$, $p=0.5$. }
  \label{fig:pagecurve}
\end{figure}

The spectrum is parameterized by the parameters $k,m,p,S$. We fix $k,m,p$ to be some constant ($k=2^{35}, m=2^{20}, p=0.5$) and compute the spectral density as we tune up the entropy of the microcanonical ensemble $S$.  Qualitatively, the density function has three characteristic regimes. For $S<H_{\mathrm{min}}(B)_\rho$, the density is approximately a Dirac delta peak centered at $\lambda=2^{-S}$. For $H_{\mathrm{min}}(B)_\rho\leq S\leq H_{\mathrm{max}}(B)_\rho$, the peak slits to two parts where one stays at $\frac{p}{m}$ and the other moves with $S$. For $H_{\mathrm{max}}(B)_\rho<S$, the density function has two frozen peaks at $\frac{p}{m}$ and $\frac{1-p}{k-m}$. The resulting von Neumann entropy $H$ has therefore three regimes consistent with the refined QES prescription, and we see a large deviation from the standard QES/island rule. 

We plot\footnote{Since the counterpart to the area difference here is $-S$ and conditional entropies are all negative because of the entanglement, we choose to present it in a more intuitive way by flipping every term to positive. For the entropies, we use the duality relation \eqref{eq:duality}, $H_{\mi/\ma}(R)_\rho=H_{\mi/\ma}(I)_\rho=-H_{\ma/\mi}(I|R)_\rho$, to label the transition points as entropies instead of the minus conditional entropies.} the von Neumann entropy of radiation $S(R)$ against different values for the entropy $S$ of the microcanonical ensemble. The result is depicted as the solid curve in Figure~\ref{fig:pagecurve} for some particular choices of the values $k,m,p$.  Since the model involves no dynamic evaporation anyway, we can think of it as a static Page curve capturing a collection of snapshots at different ratios between $S$ and $k$. It captures the same physics as the usual Page curve which varies $k$. 

The behaviour is not sensitive to the chosen parameters, and we can conclude that the Page curve is more consistently described by the refined QES prescription than the QES prescription (dashed curve). This example provides evidence that the entanglement entropy is generically over-estimated by the RT formula in the indefinite regime as expected. The small offsets at the min/max entropy values are negligible if we consider systems of large entropy. \\

Let us finish this section by a few remarks. Generally, given any spectrum $\{c_i\}_{i=1}^k$ with $l$ distinct values, we can associate $l+1$ resolvent variables with $l+1$ equations. The resulting equation for $R(\lambda;\rho)$ of interest would be a polynomial equation of degree $l+1$. We expect the resulting von Neumann entropy behaves the same way as predicted by the refined QES prescription. 

A similar correction in the canonical ensemble was shown in the original work~\cite{penington2019replica}. A more complicated analysis was carried out for a canonical ensemble at some fixed temperature $1/\beta$, and the result was that the Page curve features a smooth transition around a $1/\sqrt{\beta}$ window with a dent of size $ 1/\sqrt{\beta}$. This effect is parametrically subleading as compared to the correction we pointed out, which doesn't depend on the temperature even in a canonical ensemble.  In a canonical ensemble, the Page curve would behave in the same way as in Figure~\ref{fig:pagecurve}, up to the the further smearing at the transitions due to the energy fluctuation at finite temperature. 

A similar resolvent calculation is carried out by AP to analyze a dust ball in a \emph{mixed state} with also an $L$-shaped entanglement spectrum. The resulting spectrum density is very similar to what we find here. Though the conclusion is qualitatively the same, the Page curve we saw here \emph{cannot} be directly inferred from their dust ball example. The difference can be seen from comparing the resolvent equations in~\cite{akers2021leading,penington2019replica} (see also~\cite{marolf2020probing}) with ours, and they are not equivalent.  The reason is that for an coherent superposition between two branches with different amount of entangled pairs, it is a priori unclear how the entropy of the marginal state of the superposed state would be related to the ones of each branch. On the other hand, for an incoherent mixture of the same two branches, the entropy is well approximated (up to one bit of error) by the averaged entanglement entropy of the individual branch. Nonetheless, it can be shown that in the PSSY model, as far as the entanglement spectral density is concerned, the incoherent mixture can be a good approximation of the coherent superposition provided $m\ll k$. It is therefore not surprising that the Page curve we obtained here looks like the average between two standard Page curves that saturates at $S=k$ and $S=m$ respectively.  

\section{Beyond two quantum extremal surfaces}\label{sec:multiqes}

Now we return to AdS/CFT, and study what a general refined QES prescription for more than two QES' should be. We shall generalize our derivation in Section~\ref{sec:derivation} to apply it here.
\subsection{The refined multi-QES prescription}

We have studied a somewhat specific case there are two complementary boundary subregions $B,\overline{B}$ and two candidate QES $\gamma_1,\gamma_2$ that divide the bulk into three subregions $b,b',\b$. What is then the general QES prescription if we have multiple QES candidates associated with some arbitrary disconnected boundary subgregion $B$, such as in Figure~\ref{fig:3qes} ? AP also addresses the general case by defining the max-Entanglement Wedge ($\mathrm{EW}_\ma$) and min-Entanglement Wedge ($\mathrm{EW}_\mi$) respectively as
\begin{equation}\label{maxew}
    \mathrm{EW}_\ma:= \bigcup {\be\sim B}\,\, \mathrm{s.t.}\,\, \forall \be'\subset \be, H^\eps_\ma(\be\setminus \be'|\be')<\frac{A[\partial \be']-A[\partial \be]}{4G_N}
\end{equation}
where $\be\sim B$ means $\partial \be$ is homologous to $B$, and  $\be\setminus \be'$ is the complement of $\be'$ in $\be$. 

\begin{equation}\label{minew}
    \mathrm{EW}_\mi:= \bigcap {\be\sim B}\,\, \mathrm{s.t.}\,\, \forall \be'\subset \overline{\be}, H^\eps_\mi(\be'|\be)>\frac{A[\partial \be]-A[\partial( \be\cup\be')]}{4G_N}
\end{equation}
where $\overline{\be}$ denotes the complement of $\be$ in the bulk.

The existence of $\mathrm{EW}_{\mi/\ma}$ has been proved by AP, and also $\mathrm{EW}_\ma\subseteq\mathrm{EW}_\mi$ (cf. section 7 in AP). The refined QES prescription is then generally stated as: whenever the $\mathrm{EW}_{\mi}$ and $\mathrm{EW}_{\ma}$ coincide with each other and therefore can be identified as the entanglement wedge, the generalized entropy evaluated on its boundary gives the holographic entanglement entropy $S(B)$. Otherwise, it is in the indefinite regime and we cannot obtain the $S(B)$ straightforwardly, and we call the corresponding region $\mathrm{EW}_{\mi}\setminus\mathrm{EW}_{\ma}$ as \emph{``no man's land''}.\\

\begin{figure}
\centering
\begin{subfigure}{.5\textwidth}
  \centering
  \includegraphics[width=0.8\linewidth]{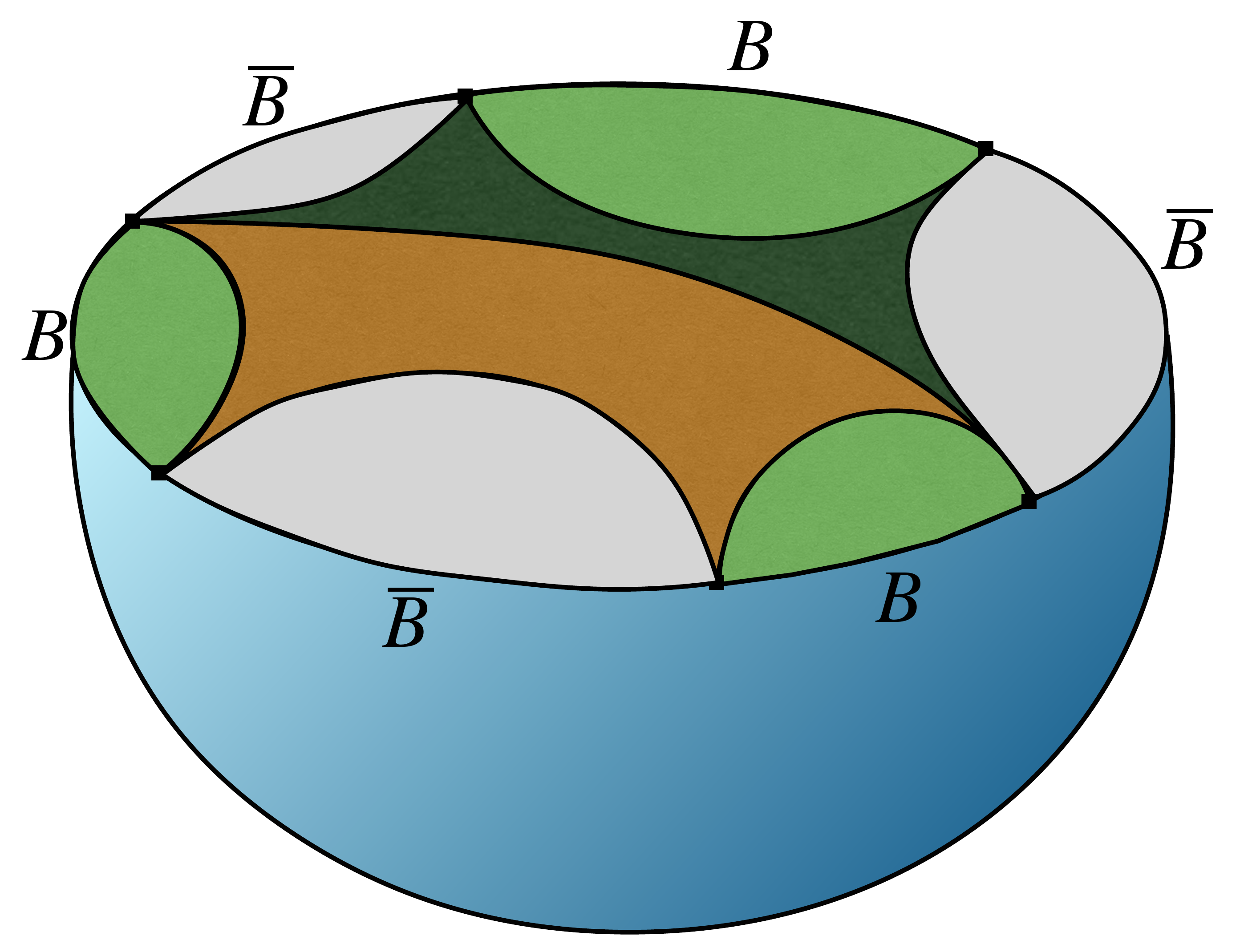}
\end{subfigure}%
\caption{{\bf A multi-QES scenario.} The figure depicts a typical multi-QES setup where the boundary consists of three disconnected components and we identify the bulk subregions as $b_0=$ green, $b_1=$ brown, $b_2=$ olive, $b_3=$ gray. }
\label{fig:3qes}
\end{figure}

Here we prove the validity of this proposal that AP left out. Since the above definitions of the $\mathrm{EW}_{\mi/\ma}$ entail checking all possible continuous variations of the bulk subregion, it is technically difficult to prove the general refined QES prescription via these definitions. Instead, we shall restrict the variational domain by only considering a finite set of non-overlapping QES candidates with fixed areas to simplify the problem. Furthermore, we assume the QES candidates are not crossing each other so we have a natural ordering with respect to the homology condition,\footnote{We can in principle lift this assumption and work with a partially order set of QES candidates in general. However, the derivation would be even more complicated and we shall leave it for simplicity.} as for example in Figure~\ref{fig:3qes}. Hence, given the collection of QES candidates $\Gamma:=\{\gamma_i\}_{i=0}^l$, with areas $\{A_i\}_{i=0}^l$ ($\gamma_l$ is set to be empty and thus $A_l=0$), homologous to $B$ and the associated homology regions $\Sigma:=\{\be_i\}_{i=0}^l$, we can then naturally order the set $\Sigma$ (and also $\Gamma$) as follows: $\be_i\subset \be_j \implies i<j$ and let $l:=|\Sigma|=|\Gamma|$. Then the bulk is divided into $l+1$ regions $b_0,b_1,b_2,\cdots,b_{l-1},b_l$, where
\begin{equation}
    b_i = \be_{i}\setminus \be_{i-1} ,\,\,\,\,\be_i = \bigcup_{j=0}^{i} b_j, \quad i\in\{0,1,2,\cdots,l\}.
\end{equation} 
The ``discretized" $\mathrm{EW}_{\mi/\ma}$ are defined as (cf. Figure~\ref{fig:EWs})
\begin{equation}
\begin{aligned}
    \widetilde{\mathrm{EW}}_\ma:=& \max_{j} \,\be_j\in \Sigma \,\, \mathrm{s.t.}\,\,  \forall \be_i\subset \be_j, H^\eps_\ma(\be_j\setminus \be_i|\be_i)<\frac{A_i-A_j}{4G_N}\,,\\
    \widetilde{\mathrm{EW}}_\mi:=& \min_{j}\,\be_j\in \Sigma \,\, \mathrm{s.t.}\,\, \forall \be_i\supset \be_j, H^\eps_\mi(\be_i\setminus \be_j|\be_j)>\frac{A_j-A_i}{4G_N}\,.
\end{aligned}
\end{equation} 

\begin{figure}[t]
\centering
\begin{subfigure}{.5\textwidth}
  \centering
  \includegraphics[width=0.8\linewidth]{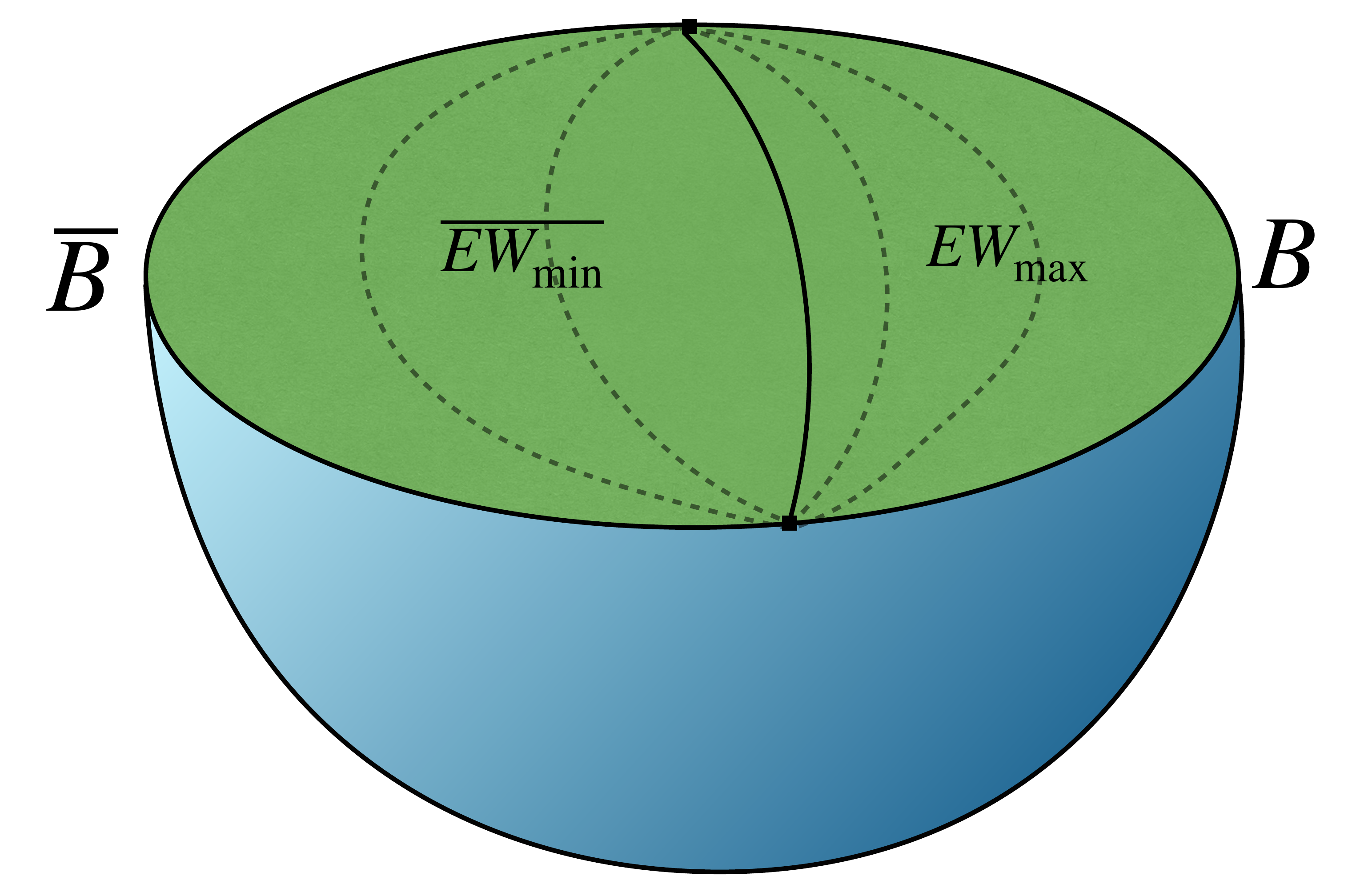}
  \caption{\centering $\text{EW}_\mi=\text{EW}_\ma=\text{EW}$.}
  \label{fig:EW1}
\end{subfigure}%
\begin{subfigure}{.5\textwidth}
  \centering
  \includegraphics[width=0.8\linewidth]{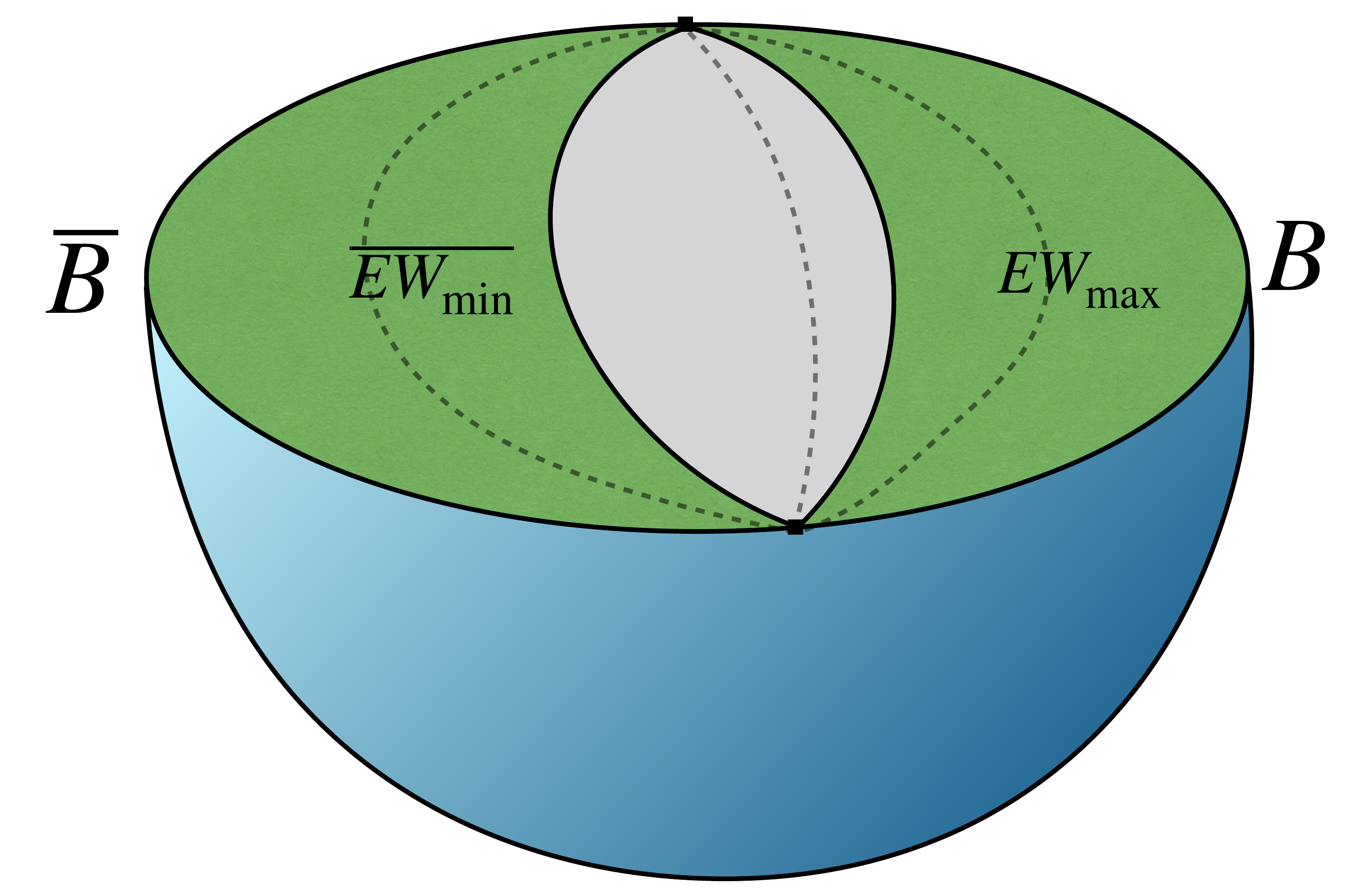}
  \caption{\centering Indefinite regime.}
  \label{fig:EW2}
\end{subfigure}
\caption{{\bf $\text{EW}_\ma$ and $\text{EW}_\mi$.} The green zones depict the $\text{EW}_\ma$ and the complement of $\text{EW}_\mi$. The left figure (a) shows the regime where $\text{EW}_\mi$ and $\text{EW}_\ma$ match, so the task of entanglement wedge reconstruction is achievable and the entanglement entropy $S(B)$ can be identified as the generalized entropy evaluated on the shared QES surface in black; and the right figure (b) depicts the indefinite regime where there is a grey ``no man's land'' region that cannot be reconstructed from either $B$ or $\overline{B}$ alone. }
\label{fig:EWs}
\end{figure}
Then the refined QES prescription claims that \\

\emph{if $\widetilde{\mathrm{EW}}_\ma=\widetilde{\mathrm{EW}}_\mi=\be_k$ for some $k$, then the holographic entanglement entropy is given by $S(B)_\rho=A[\gamma_k]/4G_N + H(\be_k)_\rho$; and is indefinite otherwise.} \\

It translates to a more direct statement as follows: 
\begin{equation}\label{generalqes}
 \mathrm{If}\quad    H^\eps_\ma\left(\bigcup_{i=j+1}^k b_i\Bigg\vert\bigcup_{i=0}^j b_i\right)_\rho < \frac{A_j-A_k}{4G_N},\,\forall j<k \quad\land\quad H^\eps_\mi\left(\bigcup_{i=k+1}^j b_i\Bigg\vert\bigcup_{i=0}^k b_i\right)_\rho > \frac{A_k-A_j}{4G_N},\, \forall j>k\,,
\end{equation}
then $S(B)=A_k/4G_N + H(\be_k)_\rho$\, . \\

Let's use the two-QES setting in Section~\ref{sec:derivation} as an example to illustrate the general refined QES prescription \eqref{generalqes} as stated above. We had two candidate QES', so $l=2$. We can identify the regions as $b_0 = b, b_1 = b', b_2 = \b$, and the areas as $A_0=A_2, A_1=A_1$.  When
\begin{equation}
    H^\eps_\mi(b_1|b_0)_\rho=H^\eps_\mi(b'|b)_\rho>\frac{A_0-A_1}{4G_N}\,,
\end{equation}
that is the second condition in \eqref{generalqes} is satisfied for $k=0$ and the first is trivially satisfied when $k=0$, then we have $S(B)_\rho=A_0/4G_N + H(b)_\rho$. On the other hand, when
\begin{equation}
    H^\eps_\ma(b_1|b_0)_\rho = H^\eps_\ma(b'|b)_\rho<\frac{A_0-A_1}{4G_N}\,,
\end{equation}
that is the first condition in \eqref{generalqes} is satisfied for $k=1$ and the second is trivially satisfied when $k=1$, then we have $S(B)_\rho=A_1/4G_N + H(bb')_\rho$. They are consistent with \eqref{eq:refined_QES}. We shall forestall some potential confusion by noting the following. It appears that each condition in \eqref{eq:refined_QES} corresponds to one regime in our two-QES example, but in fact both conditions in \eqref{eq:refined_QES} are needed for each regime as one is trivially satisfied. In the previous section, we've used two different states for two different QES regimes, whereas generally both the min-entropy and max-entropy conditions in \eqref{generalqes} are needed to single out one QES $\gamma_k$.

\subsection{Derivation}

We shall see that a generalization of the argument in Section~\ref{sec:derivation} leads us to the QES prescription for fixed area states with multiple QES candidates. Although the generalization is straightforward in principle, the actual implementation is more complicated. We shall therefore treat some parts of the derivation with extra care as we proceed. Before we do the AEP replica trick, let's first see what we can obtain from the usual replica trick to compute the R\'enyi entropy of fixed area states. The partition function reads
\begin{equation}
    \frac{Z_n[\{A_i\}^l_0]}{Z_1[\{A_i\}^l_0]^n} = \sum_{\{\pi_i\}\in S_n}2^{\sum_{i=0}^{l}\left(|\pi_{i+1}^{-1}\circ \pi_i|-n\right)\frac{A_i}{4G_N}}\cdot\tr\left(\rho^{\otimes n}_{b_0\cdots b_l}\cdot\bigotimes_{i=0}^l U_{\pi_i|b_i}\right)\,.
\end{equation}@
Under the replica symmetry assumption that is partially justified in Section~\ref{sec:holorenyi}, we only need to keep the dominant contributions from $\pi_{i} = \tau$ or $\mathbf{1}$, which together count $2^l$ terms. Each term corresponds to tracing out some region from the bulk state, with the exponential prefactor consists of different areas. We can ignore terms with more than one area in the sum, leaving only $l$ terms corresponding to $l$ distinct areas,
\begin{equation}\label{dominant}
    \frac{Z_n[\{A_i\}^l_0]}{Z_1[\{A_i\}^l_0]^n} = \sum_{i=0}^{l}2^{(1-n)\frac{A_i}{4G_N}}\cdot\tr\left(\rho^{\otimes n}_{b_0\cdots b_i}\right)\,.
\end{equation}
Similar to \eqref{upperbound}, the standard replica trick for computing the von Neumann entropy via analytic continuation tells us that $S(B)$ is upper-bounded by
\begin{equation}\label{generalupperbound}
    S(B)_\rho\leq \min_i\{A_i/4G_N+ H(b_i)_\rho\}.\\
\end{equation}

\begin{figure}
\centering
\begin{subfigure}{.58\textwidth}
  \centering
  \includegraphics[width=1\linewidth]{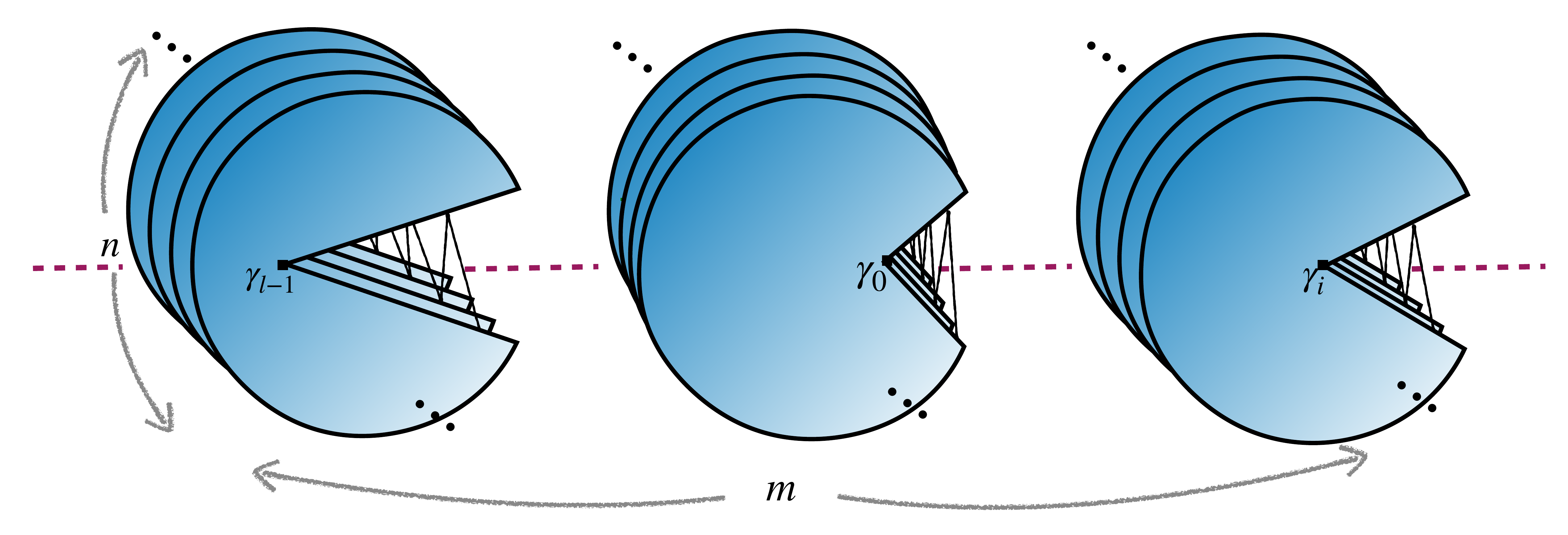}
   \caption{\centering A generic term in the $n\times m$-replica path integral}
  \label{fig:packmank}
\end{subfigure}%
\begin{subfigure}{.42\textwidth}
  \centering
  \includegraphics[width=1\linewidth]{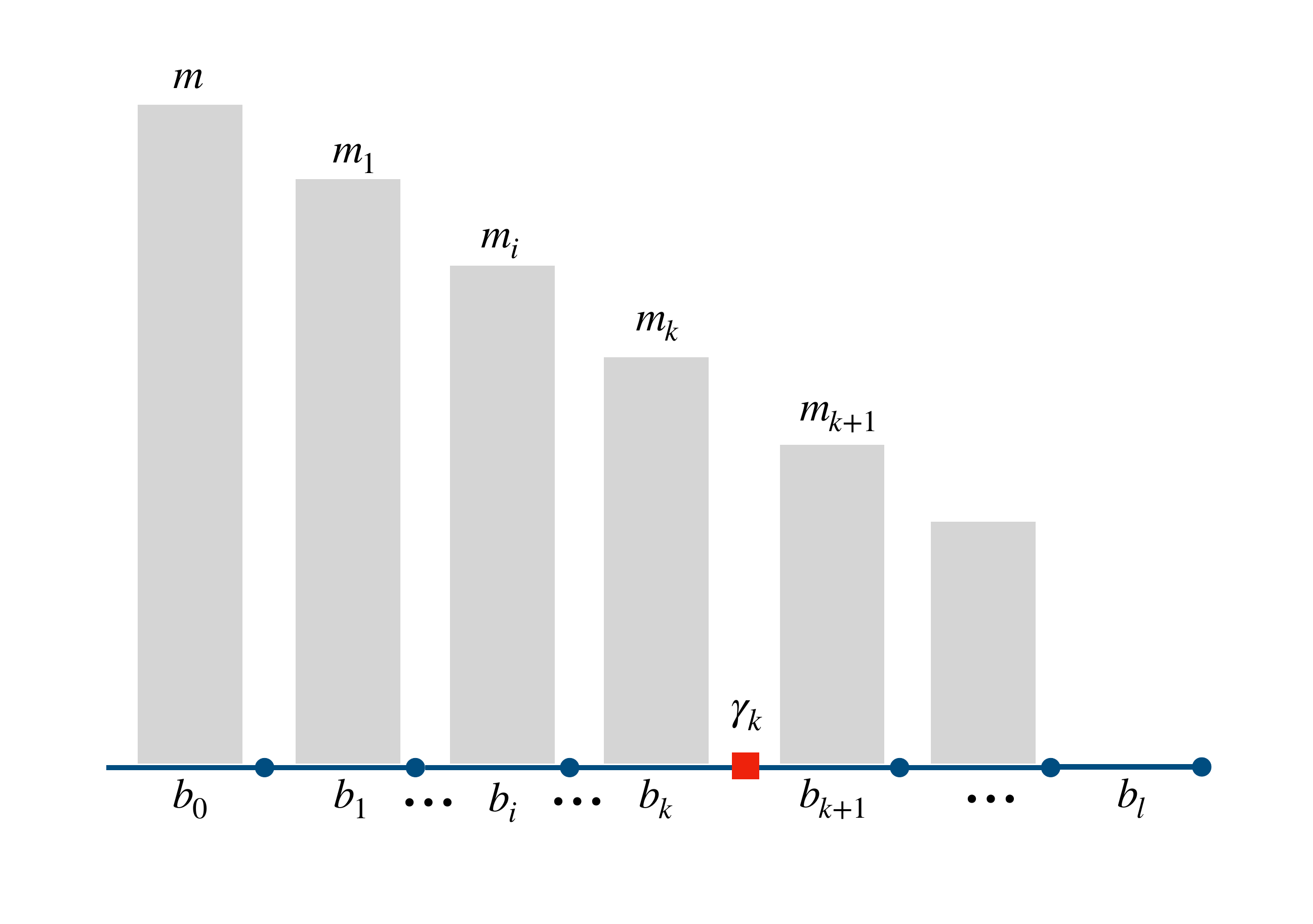}
  \caption{\centering Distribution of the $m$-weights}
  \label{fig:histoterm}
\end{subfigure}
\caption{{\bf The AEP replica trick for multi-QES and $m$-weights. } The $n\times m$-replica path integral computes $Z_{n,m}$ with multiple $\gamma$'s as an expansion over terms, and each such term can be evaluated using the replicas arranged and contracted as depicted in (a). The histogram (b) shows how many terms are cyclically contracted among the $n$-replicas in each family of $m$ families of replicas.}
  \label{fig:generalterm}
\end{figure}

The AEP replica trick is then basically raising the above expression to power $m$ while keeping the correlation in the chosen bulk state.  Now we set up $n\times m$ replicas and expand the partition function for a chosen feasible state $\rho^*_{b_0^m\cdots b_l^m}$, that is $\eps$-close to $\rho_{b_0\cdots b_l}^{\otimes m}$ and $m$-replica-symmetric, as
\begin{equation}\label{partitionfunction2}
\begin{aligned}
    \frac{Z_{n,m}[\{A_i\}^l_0]}{Z_1[\{A_i\}^l_0]^{nm}} = \sum_{\{\pi^{(j)}_{b_i}\} \in S_n}2^{\sum_{i=0}^{l}\left(\sum_{j=1}^m |(\pi^j_{i+1})^{-1}\circ \pi^j_i|-n\right)\frac{A_i}{4G_N}}\cdot\tr\left(\rho^{*\otimes n}_{b_0^m\cdots b_l^m}\cdot\bigotimes_{i=0}^l\bigotimes_{j=1}^m U_{\pi^j_i|b^{(j)}_i}\right)
\end{aligned}
\end{equation}
where we denote the $i^\mathrm{th}$ (among $l$) bulk region on the $j^\mathrm{th}$ (among $m$) replica as $b^{(j)}_i$, and the corresponding permutation that acts on it $\pi^j_i$, and let $\pi^j_0=\tau, \pi^j_l=\mathbf{1}, \,\forall j$. This equation generalizes~\eqref{partitionfunction}.\\

Keeping only the dominant contributions as in \eqref{dominant} under the replica symmetry assumption, the terms in the expansion have the form (cf. Figure~\ref{fig:packmank}),
\begin{equation}\label{terms2}
2^{(1-n)\sum_{i=0}^{l}(m_i - m_{i+1})\frac{A_i}{4G_N}}\cdot\tr\left(\rho^*_{b^{m_0}_0\cdots b^{m_i}_i\cdots b^{m_{l}}_{l}}\right)^n\,.
\end{equation}
where the descending array $\{m_i\}_{i=0}^{l}$ satisfy 
\beq
m=m_0\geq m_i\geq m_j\geq m_l=0, \forall i < j\,,
\eeq  
and we henceforth refer them as the \emph{$m$-weights} (cf. Figure~\ref{fig:histoterm}).   The expansion sums over all distinct $\{m_i\}$'s and different collections of marginal systems for a given $\{m_i\}$. Again, we shall not distinguish them in the notations as the details are irrelevant for the derivation, but we should bear in mind that $\rho^*_{b^{m_0}_0\cdots b^{m_i}_i\cdots b^{m_{l}}_{l}}$ represents one particular instance on the selected number of subsystems.

\begin{figure}
\centering
\begin{subfigure}{.5\textwidth}
  \centering
  \includegraphics[width=1\linewidth]{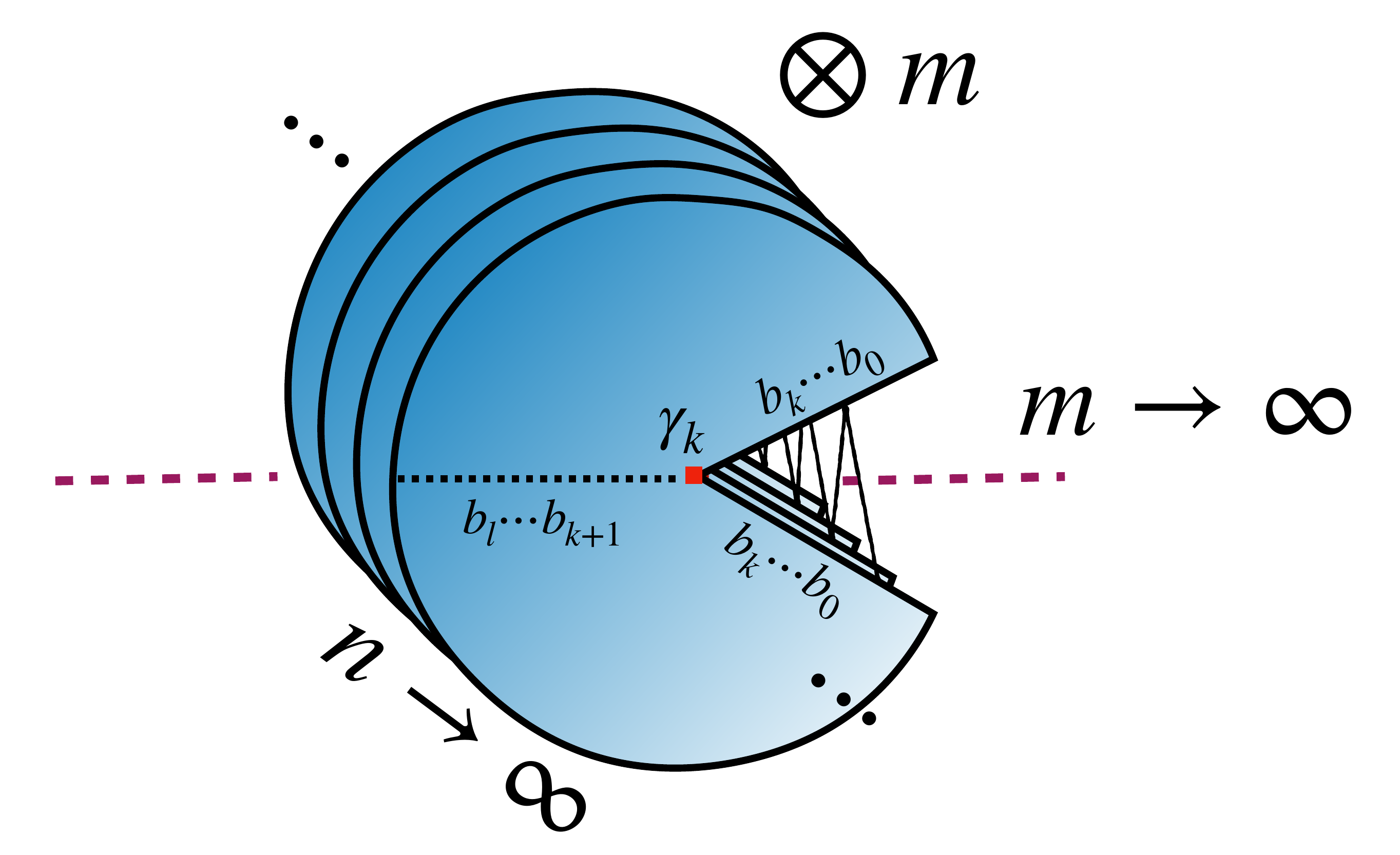}
\end{subfigure}%
\begin{subfigure}{.5\textwidth}
  \centering
  \includegraphics[width=1\linewidth]{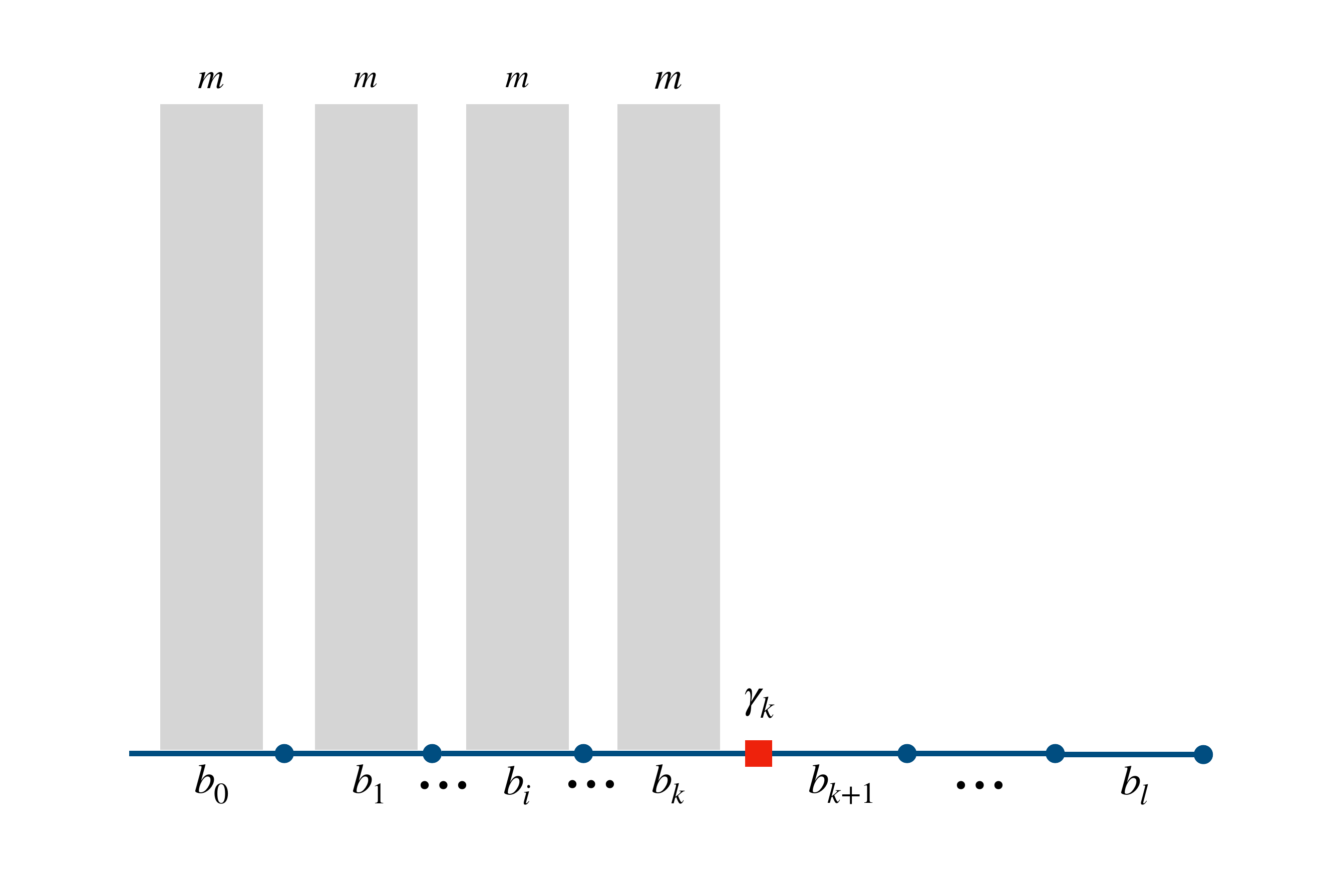}
\end{subfigure}
\caption{{\bf The dominant saddle.} The left figure depicts the dominant term when \eqref{generalqes} holds, and the right histogram shows its corresponding $m$-weights. }
\label{fig:dominantmulti}
\end{figure}

Consider some $\gamma_k$ with area $A_k$ and the associated $\be_k = b_0\cdots b_k$. We'd like to show that the particular term, with $\{m_i\}_{i=0}^k=m\,\, \mathrm{and}\,\, \{m_i\}_{i=k+1}^l=0$, dominates the partition function \eqref{partitionfunction2} if \eqref{generalqes} holds.  Previously in the two QES case, we can consider different states for regime 1 and regime 3 respectively, and in each case one of the two conditions in \eqref{generalqes} is trivially satisfied. One complication we shall deal with is that here for some $\gamma_k$, we need to construct a \emph{single} state  $\rho^*_{b^m_0\cdots b^m_l}$ that satisfies both conditions in \eqref{generalqes}, and argue the following term associated with marginal state $\rho^*_{b^m_0\cdots b^m_k}$ dominates, as depicted in Figure~\ref{fig:dominantmulti}.
\begin{equation}\label{keyterm}
    2^{(1-n)m\frac{A_k}{4G_N}}\tr\left(\rho^*_{b^{m}_0\cdots b^{m}_k}\right)^n.
\end{equation}

In order to find such a feasible state, we construct a particular bulk state $\rho^*_{b_0^m\cdots b_l^m}$ for any integer $m$, $\eps$-close to $\rho^{\otimes m}_{b_0\cdots b_l}$, such that it satisfies the two chain rules. The explicit construction is left in Section~\ref{sec:proof2}. The first chain rule, which concerns terms in \eqref{partitionfunction2} with $\{m_i\}_{i=k+1}^l=0$, states that for any fixed $0<\eps''<\eps$ and any $\{m_i\}_{i=0}^k$ with $m_k\leq\cdots\leq m_0=m$,
\begin{equation}\label{superchain1}
  H_\mi(b^m_0\cdots b^m_k)_{\rho^*} \leq H_\ma(b^{m-m_0}_0\cdots b^{m-m_k}_k|b^{m_0}_0\cdots b^{m_k}_k)_{\rho^{\otimes m}}+  H_\mi(b^{m_0}_0\cdots b^{m_k}_k)_{\rho^*}\,.
\end{equation}
and 
\begin{equation}\label{maximizer}
    H^{\eps''}_\mi(b^m_0\cdots b^m_k)_{\rho^{\otimes m}}=H_\mi(b^m_0\cdots b^m_k)_{\rho^*}.
\end{equation}

Our bulk state $\rho^*_{b^m_0\cdots b^m_l}$ also satisfies that for any $m$-weights $\{m_i\}_{i=0}^l$,
\begin{equation}\label{superchain2}
\begin{aligned}
    H_\mi(b^{m_0}_0\cdots b^{m_{l}}_{l})_{\rho^*} \geq H_\mi(b^{m_{k+1}}_{k+1}\cdots b^{m_{l}}_{l}|b^{m_0}_0\cdots b^{m_k}_k)_{\rho^{\otimes m}}+  H_\mi(b^{m_0}_0\cdots b^{m_k}_k)_{\rho^*}.
\end{aligned}
\end{equation}

Here comes the complication that we did not encounter for the two-QES case. Our goal is to argue that \eqref{keyterm}, corresponding to the bulk state $\rho^*_{b^{m}_0\cdots b^{m}_k}$, dominates over the contribution from any other marginal state of the form $\rho^*_{b^{m_0}_0\cdots b^{m_l}_l}$ with arbitrary $\{m_i\}_{i=0}^l$. Note that the latter is \emph{not} a marginal state of former as in the derivation for the two-QES setup. Consequentially, we cannot hope to establish the inequality among them using one chain rule.  The strategy is to break the problem into two parts by considering a set of intermediate states $\rho^*_{b^{m_0}_0\cdots b^{m_k}_k}$ with arbitrary $\{m_i\}_{i=0}^k$, which is the marginal state of $\rho^*_{b^{m}_0\cdots b^{m}_k}$. Then we compare it with the desired term \eqref{keyterm} using the first chain rule \eqref{superchain1}, and also with any other marginal states that shares the same $\{m_i\}_{i=0}^k$ using the second chain rule \eqref{superchain2}. This procedure is illustrated in Figure~\ref{fig:histoblocks}.  

\begin{figure}[t] 
  \centering
\includegraphics[width=0.6\linewidth]{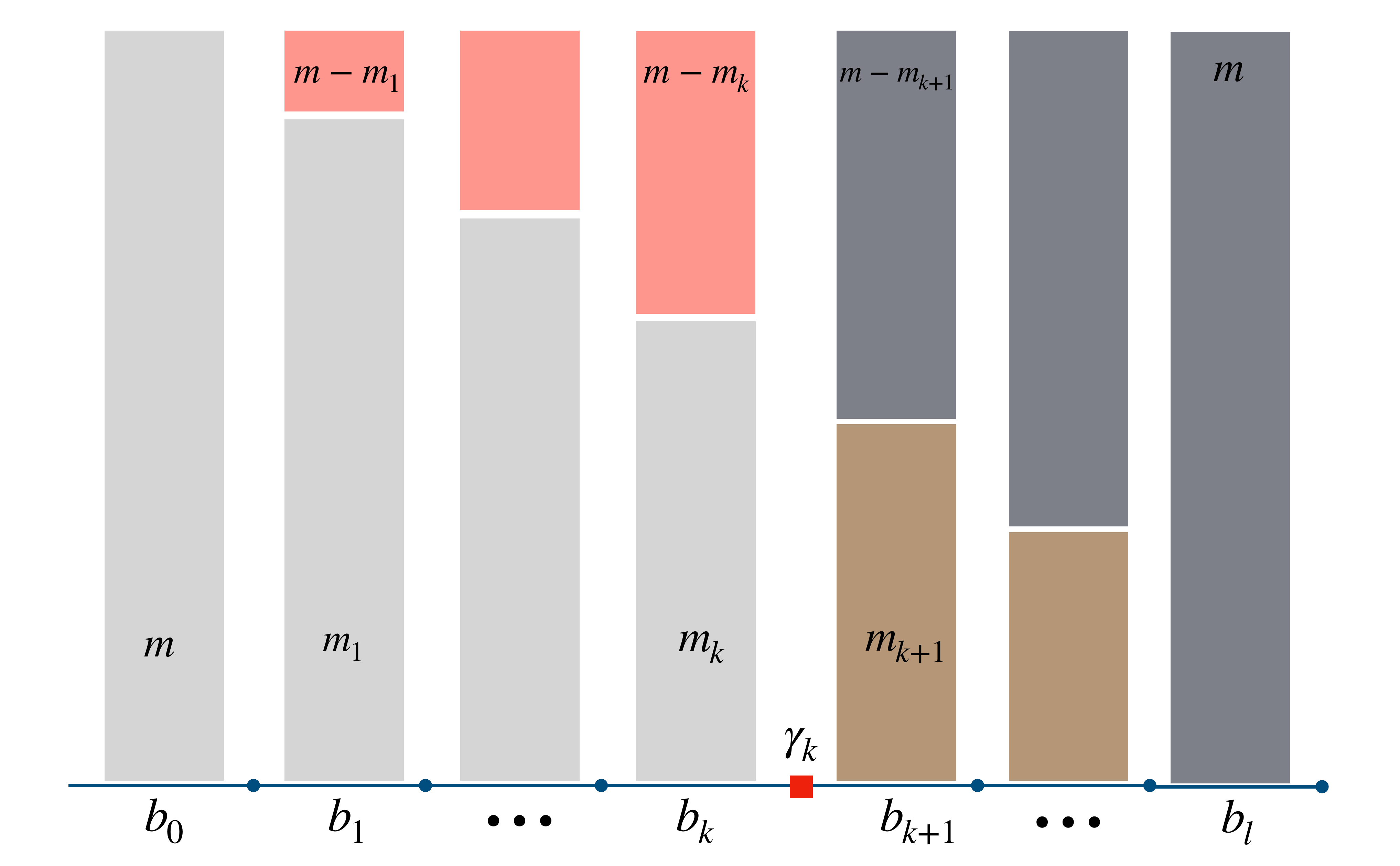}
  \caption{{Illustration of the proof strategy.} The histogram of $m$-weights helps to illustrate our proof strategy. We first show that the contributions from a generic term (brown+white) is dominated by the contribution from the white terms, and then we show that the latter is dominated by the dominant term \eqref{keyterm} represented by white+pink.}
  \label{fig:histoblocks}
\end{figure}

Let us start with the first chain rule. Given the conditional max-entropy condition in \eqref{generalqes},
\begin{equation}
    H_\ma(b_{i+1}\cdots b_k|b_0\cdots b_i)_\rho< \frac{A_i-A_k}{4G_N}\,,\quad\forall i<k\,,
\end{equation}

What we need to do is to break down the conditional max-entropy term on the composite system to parts which we can lower bound using the conditions in \eqref{generalqes}. The steps are better illustrated diagrammatically in Figure~\ref{fig:historeduction}. We have
\begin{equation}\label{longdivision1}
    \begin{aligned}
         &H_\ma(b^{m-m_0}_0\cdots b^{m-m_k}_k|b^{m_0}_0\cdots b^{m_k}_k)_{\rho^{\otimes m}}\\ 
        =&H_\ma(b^{m-m_1}_1\cdots b^{m-m_k}_k|b^{m-m_k}_0\cdots b^{m_{k-2}-m_k}_{k-2}b^{m_{k-1}-m_k}_{k-1})_{\rho^{\otimes m}} \\
        =&H_\ma(b^{m_{k-1}-m_k}_k|b^{m_{k-1}-m_k}_0\cdots b^{m_{k-1}-m_k}_{k-2}b^{m_{k-1}-m_k}_{k-1})_{\rho^{\otimes m}} \\
        &+H_\ma(b^{m-m_1}_1\cdots b^{m-m_{k-1}}_k|b^{m-m_{k-1}}_0\cdots b^{m_{k-2}-m_{k-1}}_{k-2})_{\rho^{\otimes m}}\\
        =&H_\ma(b^{m_{k-1}-m_k}_k|b^{m_{k-1}-m_k}_0\cdots b^{m_{k-1}-m_k}_{k-2}b^{m_{k-1}-m_k}_{k-1})_{\rho^{\otimes m}} \\
        &+H_\ma(b^{m_{k-2}-m_{k-1}}_{k-1} b^{m_{k-2}-m_{k-1}}_k|b^{m_{k-2}-m_{k-1}}_0\cdots b^{m_{k-2}-m_{k-1}}_{k-2})_{\rho^{\otimes m}}\\
        &+H_\ma(b^{m-m_1}_1\cdots b^{m-m_{k-2}}_{k-1}b^{m-m_{k-2}}_k|b^{m-m_{k-2}}_0\cdots b^{m_{k-3}-m_{k-2}}_{k-3})_{\rho^{\otimes m}}\\
        =&\cdots\\
        =&\sum_{i=1}^{k} H_\ma(b^{m_{i-1}-m_{i}}_{i}\cdots b^{m_{i-1}-m_{1}}_{k} | b^{m_{i-1}-m_{i}}_{0}\cdots b^{m_{i-1}-m_{i}}_{i-1})_{\rho^{\otimes m}}\\
        =&\sum_{i=1}^{k} (m_{i-1}-m_{i}) H_\ma(b_{i}\cdots b_{k} | b_{0}\cdots b_{i-1})_{\rho}\\
        <&\sum_{i=1}^{k} (m_{i-1}-m_{i})\frac{A_{i-1}-A_k}{4G_N}\,
    \end{aligned}
\end{equation}
where we applied \eqref{generalqes} in the last step.

Therefore, for large enough $n$,
\begin{equation}\label{dominance1}
\begin{aligned}
    \tr\left(\rho^*_{b^m_0\cdots b^m_k}\right)^n >& 2^{(1-n)\sum_{i=1}^{k}(m_{i-1}-m_{i})\frac{A_{i-1}-A_k}{4G_N}}\tr\left(\rho^*_{b^{m_0}_0\cdots b^{m_k}_k}\right)^n\,,\\
    2^{(1-n)m\frac{A_k}{4G_N}}\tr\left(\rho^*_{b^m_0\cdots b^m_k}\right)^n >& 2^{(1-n)\sum_{i=1}^{k}(m_{i-1}-m_{i})\frac{A_{i-1}}{4G_N}+ (1-n)m_k\frac{A_k}{4G_N}} \tr\left(\rho^*_{b^{m}_0\cdots b^{m_k}_k}\right)^n.
\end{aligned}
\end{equation}
It implies that the desired term \eqref{keyterm} with $b^{m}_0\cdots b^{m}_k$ dominates over these $b^{m_0}_0\cdots b^{m_k}_k$ terms when the bulk state is $\rho^*_{b^m_0\cdots b^m_l}$. These don't yet include all possible terms in the expansion so we need to make use of the min-entropy condition in \eqref{generalqes}. \\

\begin{figure}[t] 
  \centering
\includegraphics[width=0.78\linewidth]{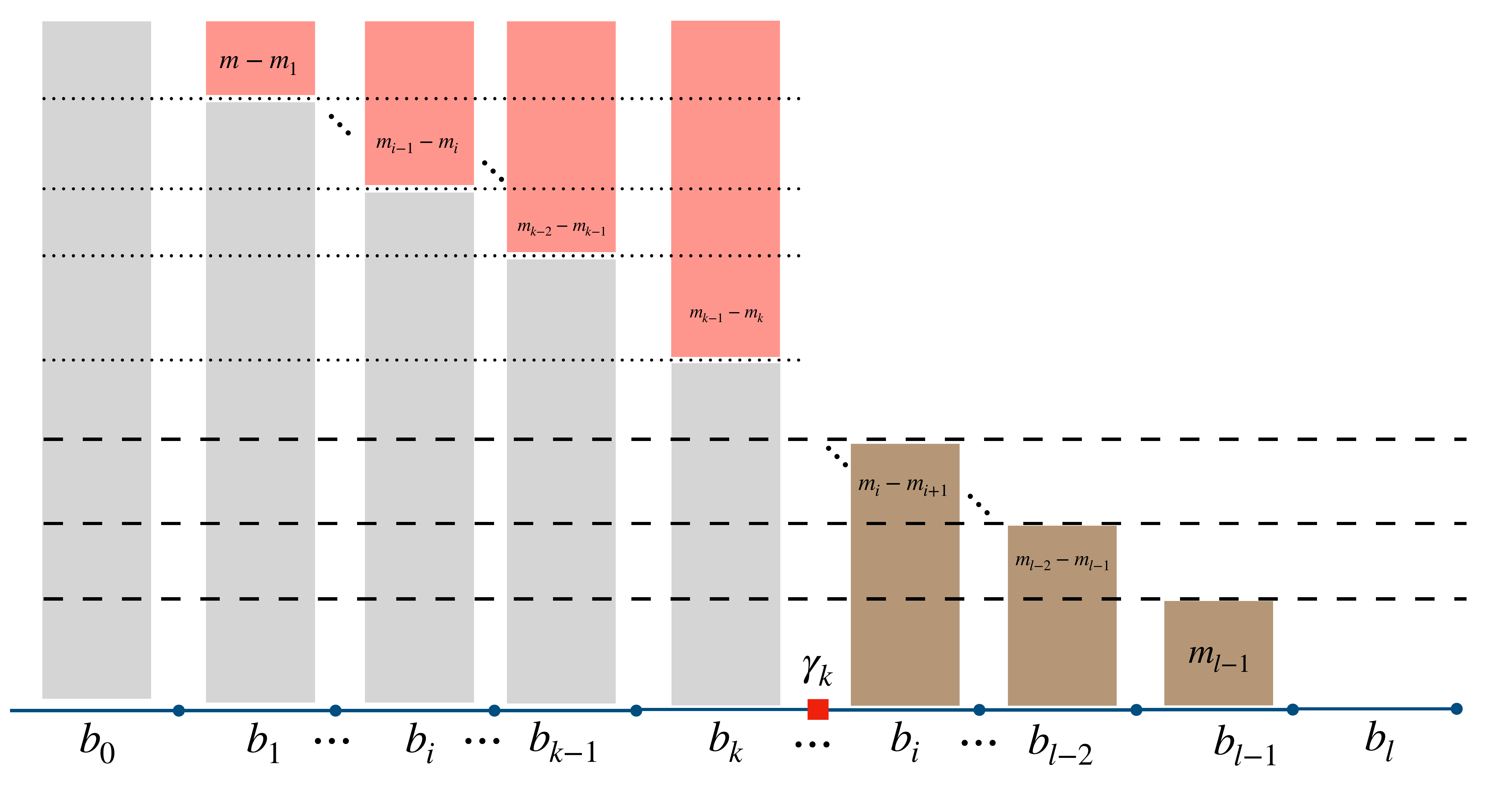}
  \caption{{\bf Illustrations of \eqref{longdivision1} and \eqref{longdivision2}.} The key fact we are using is that there are only correlations among the systems ``horizontally'' but not ``vertically''.  The conditional max-entropy of the pink conditioned on the white \eqref{longdivision1} can be reduced to a sum of the conditional max-entropies of each horizontal block divided by the dotted lines. The conditional min-entropy of the brown conditioned on the white \eqref{longdivision2} can be reduced to a sum of the conditional min-entropies of each horizontal block divided by the dashed lines.}
  \label{fig:historeduction}
  \end{figure}
Consider now the second chain rule \eqref{superchain2}. The conditional min-entropy condition in \eqref{generalqes} gives 
\begin{equation}
   H_\mi(b_{k+1}\cdots b_{j}|b_0\cdots b_k )_\rho>\frac{A_k-A_j}{4G_N}\,,\quad\forall j>k\,.
\end{equation}
Similar to \eqref{longdivision1}, we obtain
\begin{equation}\label{longdivision2}
\begin{aligned}
     &H_\mi(b^{m_{k+1}}_{k+1}\cdots b^{m_{l}}_{l}|b^{m_0}_0\cdots b^{m_k}_k)_{\rho^{\otimes m}}=H_\mi(b^{m_{k+1}}_{k+1}\cdots b^{m_{l-1}}_{l-1}|b^{m_0}_0\cdots b^{m_k}_k)_{\rho^{\otimes m}}\\
    =& H_\mi(b^{m_{k+1}-m_{l-1}}_{k+1}\cdots b^{m_{l-2}-m_{l-1}}_{l-2}|b^{m_0-m_{l-1}}_0\cdots b^{m_k-m_{l-1}}_k)_{\rho^{\otimes m}}+ H_\mi(b^{m_{l-1}}_{k+1}\cdots b^{m_{l-1}}_{l-1}|b^{m_{l-1}}_0\cdots b^{m_{l-1}}_k)_{\rho^{\otimes m}}\\
    =& H_\mi(b^{m_{k+1}-m_{l-2}}_{k+1}\cdots b^{m_{l-3}-m_{l-2}}_{l-3}|b^{m_0-m_{l-2}}_0\cdots b^{m_k-m_{l-2}}_k)_{\rho^{\otimes m}}\\
    &+ H_\mi(b^{m_{l-2}-m_{l-1}}_{k+1}\cdots b^{m_{l-2}-m_{l-1}}_{l-2}|b^{m_{l-2}-m_{l-1}}_0\cdots b^{m_{l-2}-m_{l-1}}_k)_{\rho^{\otimes m}}+ H_\mi(b^{m_{l-1}}_{k+1}\cdots b^{m_{l-1}}_{l-1}|b^{m_{l-1}}_0\cdots b^{m_{l-1}}_k)_{\rho^{\otimes m}}\\
    =&\cdots\\
    =&\sum_{i=k}^{l-1} H_\mi(b^{m_{i}-m_{i+1}}_{k+1}\cdots b^{m_{i}-m_{i+1}}_{i}|b^{m_{i}-m_{i+1}}_0\cdots b^{m_{i}-m_{i+1}}_k)_{\rho^{\otimes m}}>\sum_{i=k}^{l-1} (m_{i}-m_{i+1})\frac{A_k-A_{i}}{4G_N}\,
\end{aligned}
\end{equation}
where we applied \eqref{generalqes} in the last step.\\

Therefore, the term with $b^{m_0}_0\cdots b^{m_k}_k$ dominates over $b^{m_0}_0\cdots b^{m_{l}}_{l}$ at large enough $n$,
\begin{equation}\label{dominance2}
\begin{aligned}
    &2^{(1-n)\sum_{i=0}^{l}(m_i - m_{i+1})\frac{A_i}{4G_N}}\tr\left(\rho^*_{b^{m_0}_0\cdots b^{m_{l}}_{l}}\right)^{n}\\
    <&2^{(1-n)\sum_{i=k}^{l-1} (m_{i}-m_{i+1})\frac{A_k-A_{i}}{4G_N}+(1-n)\sum_{i=0}^{l-1}(m_i - m_{i+1})\frac{A_i}{4G_N}}\tr\left(\rho^*_{b^{m_0}_0\cdots b^{m_k}_k}\right)^n \\
    =&2^{(1-n)\sum_{i=0}^{k-1}(m_i - m_{i+1})\frac{A_i}{4G_N} + (1-n)m_k\frac{A_k}{4G_N}}\tr\left(\rho^*_{b^{m_0}_0\cdots b^{m_k}_k}\right)^n \,
\end{aligned}
\end{equation}

With \eqref{dominance2} we've shown that in the bulk state $\rho^*_{b^m_0\cdots b^m_l}$, any term in \eqref{terms2} with arbitrary array $\{m_i\}_{i=0}^l$ is dominated by the terms with the same array on $\{m_i\}_{i=0}^k$ but with $\{m_i\}_{i=k+1}^l=0$, which is then dominated by the desired term with $\{m_i\}_{i=0}^k=m$ and $\{m_i\}_{i=k+1}^l=0$ \eqref{dominance1}. Thence, we have established that \eqref{keyterm} is the dominant contribution in the partition function \eqref{partitionfunction2} under the conditions \eqref{generalqes}. This term corresponds to the generalized entropy evaluated on $\gamma_k$ and $\be_k$. Since we choose that $\rho^*_{b^m_0\cdots b^m_k}$ is the maximizer for the smooth min-entropy \eqref{maximizer}, then via the AEP \eqref{minAEP} we immediately have as in \eqref{bulkaep1} and \eqref{bulkaep2}.
\begin{equation}
    S(B)\geq \frac{A_k}{4G_N}+H(\be_k)_\rho\,.
\end{equation}

Lastly, as shown by AP, the generalize entropy on $\gamma_k$ is the minimal among all QES' when \eqref{generalqes} holds. Therefore, the upper bound \eqref{generalupperbound} is simply
\begin{equation}
    S(B)\leq \min_i\{A_i/G_N+ H(b_i)_\rho\}=\frac{A_k}{4G_N}+H(\be_k)_\rho\,.
\end{equation}
Lastly, we can smoothen the conditional min/max-entropies as we did in the last section. This concludes the proof of the refined QES prescription for fixed-areas states with multiple QES candidates.

\section{Discussion}\label{sec:discussion}

By adopting a novel AEP replica trick, we derive the refined QES prescription via the gravitational path integral. Our derivation manifests the asymptotic nature of the holographic entanglement entropy and demonstrates the essential roles played by the conditional min/max-entropies in characterizing the QES phase transition. Let us also recall that AP's argument for \eqref{eq:refined_QES} is based on the equivalence between the Haar averaging calculations in the random tensor network model and the standard replica trick calculation~\cite{hayden2016holographic,hayden2019learning}. Then they are able to apply the one-shot decoupling theorem to the tensor network and it brings in the conditional min/max-entropies. In comparison, our approach is more direct as we do not translate the calculation to random tensor network models. \JZ{Therefore, it is also more transparent that the refinement should extend to the island formula and the Page curve, as we have demonstrated.} We also provide a full derivation for the general multi-QES setting and establish the validity of the most general refined QES prescription as proposed by AP. 

It's worth noting that lifting the assumption of the $\mathbb{Z}_n$ replica symmetry as in LM's original derivation was the key to observing the QES refinement in AP's calculation. We, however, derive the refined QES prescription while still holding the replica symmetry assumption and discarding the replica asymmetric saddles nonetheless. At least for pure bulk marginal states, as shown in section~\ref{sec:holorenyi}, this is because the higher R\'enyi entropies are not so sensitive to the replica symmetry breaking as the replica-symmetric saddles provide the dominant contributions. Therefore, the AEP replica trick  circumvents the complication of directly dealing with the replica symmetry breaking. 

On the other hand, the refinement in \eqref{eq:refined_QES} is more than the claim that there exists an indefinite regime within a window determined by the conditional min/max-entropies. It also says that in this regime the entropy can indeed deviate largely from the naive QES predictions. However, the behaviour of $S(B)$ in regime 2 remains unclear and our derivation does not add anything to that. While not necessary in manifesting the relevance of the one-shot entropies in the QES prescription, lifting the replica symmetry assumption, in our point of view, is still the key to observing large corrections in the indefinite regime. \\

Let us finish by discussing some interesting future research directions:\\

\emph{max-AEP replica trick.} The smooth max-entropy also satisfies the AEP, and we can in principle perform the same AEP replica trick using the max-AEP. The difference is that the minimization, instead of the maximization, shall gives us the upper bound instead of the lower bound. The upper bound would be useful, for instance, if we want to argue that the leading order correction is generic in the indefinite regime, because the corrected entropy can only be lower than the naive QES predictions. We don't really need another set of upper bounds because the ones deduced from the standard replica trick is enough for our derivation. However, it is of independent interest to understand how the max-AEP replica trick can work as it might gives us some useful upper bounds when it comes to the indefinite regime.  One complication that makes the max-AEP  tricky to deal with is that eventually we need to perform analytic continuation to $\alpha=\frac12$, whereas for the min-AEP there is no need for analytic continuation. This is more subtle than the common $\alpha\rightarrow 1$ continuation because for $\alpha<1$, we have the sign flipped in the exponents in the partition function and we need to deal with the brane action with negative tension. How such procedure can be justified is beyond the scope of this work but we provide the calculation using the max-AEP replica trick in the Appendix~\ref{sec:maxchainrule} assuming it would actually work. Nonetheless, it turns out that eventually we obtain exactly the same upper bound as in \eqref{upperbound}. \\

\emph{$m$-replica wormholes.} In our construction, the geometries of the $m$ families of replicas are uncorrelated whereas the supported bulk state is entangled. Given a smoothing state should be $\eps$ close to a factorized state and the lack of the cyclically identifying boundary condition, the replica-wormhole geometries among the $m$ families of replicas are suppressed. We therefore pick a convenient feasible state to be such that the correlation necessary for the smoothing is only built through the bulk fields. One interesting alternative is to consider a feasible state with a connected dual geometry as opposed to $\rho^{\otimes m}$. One can consider wormhole spacetimes where the $m$ families of replicas are connected by tiny and narrow wormholes to ensure the $\eps$ distance. Since the optimizer for the smooth min-entropy is permutation-invariant, we can restrict to $m$-replica-symmetric configurations. It is interesting to find out the bounds obtained from the partition function evaluated on these saddles, and generally explore the physics of these geometries.\\

\emph{Entanglement wedge reconstruction.} Harlow has proved that the validity of the quantum Ryu-Takayanagi formula is equivalent to the achievability of the task of entanglement wedge reconstruction with respect to the RT surface~\cite{harlow2017ryu}. Harlow's structural theorem is stated for the whole code subspace, but one can generalize this task of EWR to a state-dependent task~\cite{hayden2019learning,akers2021leading}, where the boundary operator is only required to have the same action as the bulk operator on the particular state. This is referred as state-specific EWR in AP. 

In the Schr\"odinger picture, AP argued that the state-specific EWR should be understood as one-shot state merging and it's characterized by the conditional max-entropy, which matches well with the conditions in the refined QES prescription. It's therefore promising that an analogue of Harlow's theorem can be established formally connecting state-specific EWR and the refined QES prescrition at least for the max-EW and the complement of min-EW. Regarding the indefinite regime and the corresponding ``no man's land'' region ($\mathrm{EW}_\mi\setminus\mathrm{EW}_\ma$), it's important to note that the state-specific EWR could still be achievable in this region because accessing side information from the ``no man's land'' could assist the state transfer (or operator reconstruction) in the $\mathrm{EW}_\ma$, provided this side information is sufficiently ``quantum'', i.e. has sufficient entanglement with $\mathrm{EW}_\ma$. It's less clear how Harlow's theorem can be extended to this regime. 

One potential direction is to consider a different setup that does not involve infinite replicas we considered in this work. The tensor network model and the error correction picture for holography suggest that we can think of the EWR as pushing a bulk operator through layers of the tensors to the boundary. In the Schrodinger picture, this corresponds to generate a bulk state supported over many bulk legs with an input state from the boundary. The conditional min/max-entropies of the bulk state generated as so can be estimated using a novel technique in entropy-calculus called the entropy accumulation theorem (EAT)\cite{dupuis2020entropy,dupuis2019entropy}, which is a generalisation of the AEP. The EAT claims that even though in situations one does not have the i.i.d product structure as in the AEP replica trick, the sequential structure still guarantees that the output min/max-entropies can be estimated as the accumulation of von Neumann entropies on each site. We believe the EAT can help generalize Harlow's theorem beyond the conventional i.i.d. regime to probe the no man's land. \\

\emph{Entanglement wedge stratification.} Following the definitions of $\mathrm{EW}_\ma$ and $\mathrm{EW}_\mi$, it is suggestive that we can also define $\mathrm{EW}_\alpha$ by replacing $H_\ma$ with $H_\alpha$ in \eqref{maxew} when $\alpha<1$, and $H_\mi$ with $H_\alpha$ in \eqref{minew} when $\alpha>1$ respectively. It begs the question if $\mathrm{EW}_\alpha$ is continuously deforming with $\alpha$? If so, it then implies that the ``no man's land'' region is stratified by the boundaries of $\mathrm{EW}_\alpha$'s. Such constructions could be useful in probing the indefinite regime.

\acknowledgments
I would like to thank Chris Akers and Geoff Penington for their feedback on the draft of this work and also for explaining their work to me. I am grateful to Renato Renner for some valuable suggestions and comments. I thank Joe Renes for pointing out the SDP formulation of the smooth min-entropy.  I thank Thomas Faulkner for pointing out a mistake in the proof of Theorem~\ref{thm:dominance} in a previous version. I also thank Victor Gitton and David Sutter for discussions. This work is supported by the National Center of Competence in Research ``SwissMAP'' as well as the Swiss National Science Foundation (SNSF) grant No.200021-188541.

\appendix
\section{The chain rules}\label{sec:proof}

We give the explicit constructions of the desired bulk states that satisfy various chain rules used in the paper. The appendix is divided into two parts dealing with the two-QES setting and multi-QES setting respectively.\\

The following lemma is an important property of the purified distance $P(\cdot , \cdot)$~\cite{tomamichel2010duality,dupuis2014one}. 
\begin{customlemma}{1}\label{lem:lemma2} 
Let $\rho_{AB},\sigma_B$ be two density operators. Then there exists some linear operator $T_B$ with $ \sigma_{AB}= T_B\rho_{AB}T_B $
an extension of $\sigma_B$ such that $P(\rho_B,\sigma_B)=P(\rho_{AB},\sigma_{AB})$. Furthermore, if $\rho_{AB}$ is pure, then there exists a purification $\sigma_{AB}$ of $\sigma_B$ with $P(\rho_B,\sigma_B)=P(\rho_{AB},\sigma_{AB})$.
\end{customlemma}
The above lemmas hold essentially because the purified distance is a fidelity-based metric and we have Uhlmann's theorem to dilate the distance to the purification space. This is the advantage of using the purified distance over the trace distance: we can always find extensions and purifications without increasing the distance.\\

We also need the following lemma to establish the chain rule:
\begin{customlemma}{2}[Lemma 21 in~\cite{tomamichel2011leftover}]\label{lem:lemma1}
Let $\eps>0$ and $\rho_{ABC}$ be some pure state. Then there exists a projector $\Pi_{AC}$ on $\mathcal{H}_{AC}$ and a state $\rrho_{ABC} = \Pi_{AC}\rho_{ABC}\Pi_{AC}$ such that $P(\rrho_{ABC},\rho_{ABC})\leq\eps$ and 
\begin{equation}\label{lemma1}
    -D_\infty (\rrho_{AB}|| I_A\otimes\rho_B)\geq H_\mi (A|B)_\rho - \log \frac{2}{\eps^2}\,.\\
\end{equation}
\end{customlemma}

\subsection{The two-QES setting}\label{sec:proof1}
\subsubsection{Proof of the chain rule \eqref{chainrule11}}

We shall construct two different bulk states $\rho'_{b^mb'^m\b^m}$ and $\rho''_{b^mb'^m\b^m}$, and each obeys a chain rule relating the bulk min-entropies. The argument is based on Lemma A.8 in~\cite{dupuis2014one} and also~\cite{vitanov2013chain}.\\

For each instance of the $n$ replicas, we are given the state $\rho_{bb'\b}^{\otimes m}$ supported on $m$ replicas. Let the $\rrho'_{b^m}\in\mathcal{B}^{\eps'}(\rho_b^{\otimes m})$ such that $H^{\eps'}_\mi(b^m)_{\rho^{\otimes m}}=H_\mi(b^m)_{\rrho'}=-\log\lambda$, that is, $\lambda$ is the minimal value such that $\lambda I_{b^m}\geq \rrho'_{b^m}.$

Lemma~\ref{lem:lemma1} says we can always find the extension of $\rrho'_{b^m}$. 
\begin{equation}
    \rrho'_{b^mb'^m\b^m} = (\Pi_{b'^m\b^m}\otimes T_{b^m}) \rho^{\otimes m}_{bb'\b} (\Pi_{b'^m\b^m}\otimes T_{b^m})
\end{equation}
that is $\eps''$-close to $T_{b^m}\rho^{\otimes m}_{bb'\b}T_{b^m}$, and we choose $T_{b^m}$ according to Lemma~\ref{lem:lemma2} such that 
\begin{equation}
P(T_{b^m}\rho^{\otimes m}_{bb'\b}T_{b^m},\rho^{\otimes m}_{bb'\b})=P(\rrho'_{b^m},\rho^{\otimes m}_{b})=\eps',
\end{equation}
where the exact form of $T_{b^m}$ is irrelevant for us. The triangle inequality then implies
\begin{equation}
P(\rrho'_{b^mb'^m\b^m},\rho^{\otimes m}_{bb'\b})\leq P(T_{b^m}\rho^{\otimes m}_{bb'\b}T_{b^m},\rrho'_{b^mb'^m\b^m})+P(T_{b^m}\rho^{\otimes m}_{bb'\b}T_{b^m},\rho^{\otimes m}_{bb'\b})\leq\eps''+\eps'.
\end{equation}

The extension satisfies \eqref{lemma1},
\begin{equation}\label{middlestep}
    2^{-H_\mi(b'^{m-k}|b^m)_{\rho^{\otimes m}}}\cdot I_{b'^{m-k}}\otimes \rrho'_{b^m}\geq\rrho'_{b'^{m-k}b^m}
\end{equation}
\begin{equation}
    2^{-H_\mi(b'^{m-k}|b^m)_{\rho^{\otimes m}}-H^{\eps'}_\mi(b^m)_{\rho^{\otimes m}}}\cdot I_{b'^{m-k}b^m}\geq\rrho'_{b'^{m-k}b^m}
\end{equation}
where we again drop the small $\log(\eps)$ term. Note that we can identify any $b'^{m-k}$ systems to be the system $A$ as in Lemma~\ref{lem:lemma1}, and the rest $b'^k\b^m$ to be the system $C$. One might worry that the projector $\Pi_{b'^m\b^m}$ differs when one chooses different $A-C$ bipartitions. Fortunately, this is not the case. The proof of Lemma~\ref{lem:lemma1} goes by constructing the dual projector $\Pi_{b^m}$ explicitly and it only depends on the marginals on $b^m$. The projector $\Pi_{b^m}$ dual to $P_{b'^m\b^m}$ is defined via
\begin{equation}
    \Pi_{b'^m\b^m}\otimes \left(\rho^{\otimes m}_{b}\right)^{-\frac12} \ket{\rho}^{\otimes m}_{bb'\b} = \left(\rho_{b'\b}^{\otimes m}\right)^{-\frac12}\otimes \Pi_{b^m} \ket{\rho}^{\otimes m}_{bb'\b}
\end{equation}
where $\ket{\rho}_{bb'\b}$ is the state vector for the pure state $\rho_{bb'\b}$. Hence the projector $\Pi_{b'^m\b^m}$ remains as long as $\rho_{b'\b}$ and $\rho_{b}$ are unchanged. (cf. the proof of Lemma 21 in~\cite{tomamichel2011leftover} for details.)

Therefore, the last inequality implies that there exists a bulk state $\rrho'_{b^mb'^m\b^m}$, which is within the $\eps''$-ball of $\rho_{bb'\b}^{\otimes m}$ with some $\eps''$,  such that for any $k$ out of $m$ replicas and some $0<\eps'<\eps''$,
\begin{equation}\label{appchain11}
    H_\mi(b^mb'^{m-k})_{\rrho'}\geq H_\mi(b'^{m-k}|b^m)_{\rho^{\otimes m}} + H^{\eps'}_\mi(b^m)_{\rho^{\otimes m}}=H_\mi(b'^{m-k}|b^m)_{\rho^{\otimes m}} + H_\mi(b^m)_{\rrho'}.
\end{equation}
Note that the state $\rrho'_{b^mb'^m\b^m}$ we constructed using the projection $\Pi_{b'^m\b^m}$ and $T_{b^m}$ is generally subnormalized, but this is still a legit state to use within the $\eps''$-Ball.

The above statement is almost the chain rule statement we want and is actually enough for our purposes. However, note that $\rrho'_{b^mb'^m\b^m}$ is not promised to be a pure state and for the second chain rule and later applications, it is convenient to have the same statement for a \emph{pure} bulk state.

Since \eqref{appchain11} only entails the marginal $\rrho'_{b^mb'^m}$ rather than the global state $\rrho'_{b^mb'^m\b^m}$, we can find a purification $\rho'_{b^mb'^m\b^m}$ of $\rrho'_{b^mb'^m}$, i.e. such that $\rho'_{b^mb'^m}=\rrho'_{b^mb'^m}$, on par with $\rho^{\otimes m}_{bb'\b}$ being a purification of $\rho^{\otimes m}_{bb'}$. Provided the Hilbert space $\b$ is large enough, this can always be done. Furthermore, Lemma~\ref{lem:lemma2} promises that 
\begin{equation}
    \eps:=P(\rho'_{b^mb'^m\b^m},\rho^{\otimes m}_{bb'\b})=P(\rho'_{b^mb'^m},\rho^{\otimes m}_{bb'})\leq P(\rrho'_{b^mb'^m\b^m},\rho^{\otimes m}_{bb'\b})=\eps'',
\end{equation}
where the inequality follows from the data processing inequality of the purified distance. It's also immediate that $\eps=P(\rho'_{b^mb'^m},\rho^{\otimes m}_{bb'})\geq P(\rho'_{b^m},\rho^{\otimes m}_{b})=\eps'$, so we can obtain the desired chain rule statement:

There exists a \emph{pure} bulk state $\rho'_{b^mb'^m\b^m}$, which is within the $\eps$-ball of $\rho_{bb'\b}^{\otimes m}$ with some $\eps$,  such that for any $k$ out of $m$ replicas and some $0<\eps'<\eps$,
\begin{equation}\label{appchain1}
    H_\mi(b^mb'^{m-k})_{\rho'}\geq H_\mi(b'^{m-k}|b^m)_{\rho^{\otimes m}} + H^{\eps'}_\mi(b^m)_{\rho^{\otimes m}}=H_\mi(b'^{m-k}|b^m)_{\rho^{\otimes m}} + H_\mi(b^m)_{\rho'}.
\end{equation}
Note that this chain rule only concerns the marginal state $\rho'_{b^mb'^m}$.\\

\textbf{Remark}: Before we move on to the second chain rule, we should remark that by tracing out some $\tilde{m}$ of $b^m$ systems in \eqref{middlestep}, we shall obtain
\begin{equation}
    H_\mi(b^{m-\tilde{m}}b'^{m-k})_{\rrho'}\geq H_\mi(b'^{m-k}|b^m)_{\rho^{\otimes m}} + H_\mi(b^{m-\tilde{m}})_{\rrho'},
\end{equation}
 One key difference is that now the marginal $\rrho'_{b^{m-\tilde{m}}}$ does not necessarily maximize the smooth min-entropy $H^{\tilde{\eps}}_\mi(b^{m-\tilde{m}})$. We can bring it closer to the form of \eqref{appchain1}. When $m_0\leq k$, then we have $H_\mi(b'^{m-k}|b^m)_{\rho^{\otimes m}}=H_\mi(b'^{m-k}|b^{m-\tilde{m}})_{\rho^{\otimes m}}$. Applying the same purification step eventually yields
\begin{equation}\label{remark1}
    H_\mi(b^{m-\tilde{m}}b'^{m-k})_{\rho'}\geq H_\mi(b'^{m-k}|b^{m-\tilde{m}})_{\rho^{\otimes m}} + H_\mi(b^{m-\tilde{m}})_{\rho'},
\end{equation}
which generalizes \eqref{appchain11}. We shall later see the applications of this chain rule. \\

\subsubsection{Proof of the chain rule \eqref{chainrule22}}
Having established \eqref{appchain1}, let's consider another pure state $\rho''_{b^mb'^m\b^m}$ with the same properties as $\rho'_{b^mb'^m\b^m}$, except that we swap $b$'s and $\b$'s. That is $\rho''_{b^mb'^m\b^m}$ satisfies for any $k$ out of $m$ replicas and some $0<\eps''<\eps$,
\begin{equation}
    H_\mi(\b^mb'^k)_{\rho''}\geq H_\mi(b'^k|\b^m)_{\rho^{\otimes m}} + H^{\eps''}_\mi(\b^m)_{\rho^{\otimes m}}=H_\mi(b'^k|\b^m)_{\rho^{\otimes m}} + H_\mi(\b^m)_{\rho''}
\end{equation}
where we also replace $k$ by $m-k$. Its existence follows from the same argument as above. We can take the dual of it by exploiting the purity of $\rho''_{b^mb'^m\b^m}, \rho_{bb'\b}^{\otimes m}$ and the duality relation \eqref{eq:duality},
\begin{equation}
\begin{aligned}
    &H_\mi(b'^{k}\b^m)_{\rho''}=H_\mi(b^mb'^{m-k})_{\rho''},\\
    &H_\mi(\b^m)_{\rho''}=H_\mi(b^mb'^m)_{\rho''},\\
    &H_\mi(b'^{k}|\b^m)_{\rho^{\otimes m}}=-H_\ma(b'^{k}|b'^{m-k}b^m)_{\rho^{\otimes m}}.
\end{aligned}
\end{equation}
Put them together,
\begin{equation}
    H_\mi(b^mb'^m)_{\rho''}\leq H_\ma(b'^k|b'^{m-k}b^m)_{\rho^{\otimes m}} + H_\mi(b^mb'^{m-k})_{\rho''}.
\end{equation}
Consider the marginal $\rho''_{\b^m}$ that satisfies $H_\mi(\b^m)_{\rho''}=H^{\eps''}_\mi(\b^m)_{\rho^{\otimes m}}$. The unconditional min-entropy is only a function of the spectrum, so the maximal $\rho''_{\b^m}$ differs from $\rho_{\b}^{\otimes m}$ only in their spectrum. Therefore, we know $P(\rho''_{\b^m},\rho_{\b}^{\otimes m})=\eps''$ and $[\rho''_{b^m},\rho_{\b}^{\otimes m}]=0$. Since the dual complementary state share the same spectrum as them, and both $\rho_{bb'\b}^{\otimes m}, \rho''_{b^mb'^m\b^m}$ share the same Schmidt basis with respect to the bipartition $b^m-b'^m\b^m$, we have
\begin{equation}
    P(\rho''_{b'^mb^m},\rho_{b'b}^{\otimes m})=P(\rho''_{\b^m},\rho_{\b}^{\otimes m})=\eps''.
\end{equation}
Furthermore, the identical spectrum also implies $\rho''_{b'^mb^m}$ maximizes $H^{\eps''}_\mi(b'^mb^m)_{\rho^{\otimes m}}$, just as how $\rho''_{\b^m}$ maximizes $H^{\eps''}_\mi(\b^m)_{\rho^{\otimes m}}$. Thus, we obtain the second chain rule:
\begin{equation}\label{appchain2}
    H_\mi(b^mb'^m)_{\rho''}=H^{\eps''}_\mi(b^mb'^m)_{\rho^{\otimes m}}\leq H_\ma(b'^k|b'^{m-k}b^m)_{\rho^{\otimes m}} + H_\mi(b^mb'^{m-k})_{\rho''}.
\end{equation}
Note that this chain rule only concerns the marginal state $\rho''_{b^mb'^m}$.

\subsubsection{$m$-replica symmetry}

We claim that the above constructed states $\rho'_{b^mb'^m\b^m}$, $\rho''_{b^mb'^m\b^m}$ are symmetric with respect to permuting the $m$ replicas. Recall that we perform several operations provided in Lemmas~\ref{lem:lemma2} \&\ref{lem:lemma1} on the maximizer $\rrho'_{b^m}$ to obtain the state $\rrho'_{b^mb'^m\b^m}$. These operations preserve the permutation symmetry among the $m$ families of replicas as they do not distinguish them by construction, provided $\rrho'_{b^m}$ is permutation-invariant. For example, the $T_{b^m}$ and $\Pi_{b'^m\b^m}$ operators are algebraically constructed from $\rrho'_{b^m}$ and $\rho^{\otimes m}_{bb'\b}$. Therefore, all we need to worry about is if the maximizer $\rrho'_{b^m}$ is permutation-invariant. Note that the maximizer for $H^{\eps'}_\mi(b^m)_{\rho^{\otimes m}}$ might not be unique, but the claim is we can always find a maximizer that preserves the symmetry of $\rho^{\otimes m}$. To see this, we need the following formulation of the smooth min-entropy via a semidefinite program, which is parameterized by $\rho_{bb'\b}^{\otimes m}$. $H^{\eps'}_\mi(b^m)_{\rho^{\otimes m}}$ is given by (cf. equation (6.37) in~\cite{tomamichel2015quantum})
\begin{equation}
\begin{aligned}
\max:& -\log\lambda\\
  \text{subject to}:&\, \lambda\in\mathbb{R},\;\rrho_{b^mb'^m\b^m}\in\s(b^mb'^m\b^m)\\
     &\, \lambda I_{b^m}\geq \tr_{b'^m\b^m} \rrho_{b^mb'^m\b^m}\\
       &\,\tr\left[\rrho_{b^mb'^m\b^m}\rho_{bb'\b}^{\otimes m}\right] \geq 1-\eps'^2
\end{aligned}
\end{equation}
Suppose that the maximum is realized at some $(\lambda^\star,\rrho^\star_{b^mb'^m\b^m})$, and $\rrho^\star_{b^mb'^m\b^m}$ is not necessarily $m$-replica-symmetric. Consider 
\begin{equation}
    \rrho^*_{b^mb'^m\b^m}:=\frac{1}{|S_m|}\sum_{i\in S_m} \pi_i \rrho^\star_{b^mb'^m\b^m} \pi^\dagger_i
\end{equation}
where we apply a random permutation to the maximizer and obtain a permutation-invariant state. We can readily check that the constraints are invariant under the action of the random permutation so $\rrho^*_{b^mb'^m\b^m}$ is also a maximizer with the same optimal value $\lambda^\star$. In particular, its marginal $\rrho^*_{b^m}$ is replica-symmetric, and we denote it as $\rrho'_{b^m}$. This shows that we can assume without loss of generality that the maximizer is always $m$-replica-symmetric. 

Similarly, in the last purification step we take to obtain $\rho'_{b^mb'^m\b^m}$ from $\rrho'_{b^mb'^m}$, it's Uhlmann's theorem which allows us to find the purification $\rho'_{b^mb'^m\b^m}$ with the same purified distance to $\rho^{\otimes m}_{bb'\b}$ as between the marginals. That is $F(\rho^{\otimes m}_{bb'},\rrho'_{b^mb'^m})=|\braket{\rho^{\otimes m}_{bb'b}}{\rho'_{b^mb'^m\b^m}}|^2$. It's therefore clear that if we use a coherent superposition over random permutations 
\begin{equation}
    \ket{\rho'_{b^mb'^m\b^m}}\rightarrow \frac{1}{\sqrt{|S_m|}}\sum_{i\in S_m} \pi_i \ket{\rho'_{b^mb'^m\b^m}}\,,
\end{equation}
the permuted state is as good as the pre-processed one in fidelity. Therefore, the purification $\rho'_{b^mb'^m\b^m}$ can be set to be replica-symmetric without loss of generality. The same argument also applies to $\rho''_{b^mb'^m\b^m}$.

\subsection{The multi-QES setting}\label{sec:proof2}
Since the condition \eqref{generalqes} is basically a collection of the various inequalities comparing min/max-entropies with the area differences, the same construction used in the previous section to give the desired chain rules \eqref{appchain1}\& \eqref{appchain2} can be recycled to give us the desired state $\rho^*_{b^{m_0}_0\cdots b^{m_k}_k}$. However, one complication is that we need to enforce two chain rules on this state, so we better make sure the constructions that lead to the chain rules are compatible with each other. Fortunately for us, this can indeed be achieved. We also remark that the desired state is not guaranteed to be pure as in the previous constructions, and we do not need the purity here anyway. 

Let us start with a pure bulk state $\rho'_{b^m_0\cdots b^m_l}$ that satisfies the multi-regions counterpart of \eqref{appchain1}. According to Lemma~\ref{lem:lemma1}, for some arbitrary $\eps'$, the state $\rho'_{b^m_0\cdots b^m_l}$, which is $\eps'$-close to $\rho^{\otimes m}_{b_0\cdots b_l}$, can be constructed using the projector $\Pi_{b^m_{k+1}\cdots b^m_l}$,
\begin{equation}
    \rho'_{b^m_0\cdots b^m_l} = I_{b^m_{0}\cdots b^m_k}\otimes\Pi_{b^m_{k+1}\cdots b^m_l}\, \rho^{\otimes m}_{b_0\cdots b_l}\, I_{b^m_{0}\cdots b^m_k}\otimes\Pi_{b^m_{k+1}\cdots b^m_l}\,,
\end{equation}
so the marginal state $\rho'_{b^m_0\cdots b^m_k}$ is $\rho^{\otimes m}_{b_0\cdots b_k}$, and it satisfies the operator inequality for arbitrary $\{m_i\}_{i=k+1}^l$ with order $m_{k+1}\geq\cdots\geq m_l=0$.
 \begin{equation}\label{opinequality1}
     2^{-H_\mi(b^{m_{k+1}}_{k+1}\cdots b^{m_{l}}_{l}|b^m_0\cdots b^m_k)_{\rho^{\otimes m}}}\cdot I_{b^{m_{k+1}}_{k+1}\cdots b^{m_{l}}_{l}}\otimes \rho'_{b^m_0\cdots b^m_k}\geq \rho'_{b^m_0\cdots b^m_k b^{m_{k+1}}_{k+1}\cdots b^{m_l}_l}\,.
 \end{equation}

Consider now another pure bulk state $\rho''_{b^m_0\cdots b^m_l}$ that satisfies the multi-regions counterpart of \eqref{appchain2} (that is \eqref{superchain1}) for $0<\eps''<1$ and any $\{m_i\}_{i=0}^k$ with $m_k\leq\cdots\leq m_0=m$: 
\begin{equation}
 H_\mi(b^m_0\cdots b^m_k)_{\rho''} \leq H_\ma(b^{m-m_0}_0\cdots b^{m-m_k}_k|b^{m_0}_0\cdots b^{m_k}_k)_{\rho^{\otimes m}}+  H_\mi(b^{m_0}_0\cdots b^{m_k}_k)_{\rho''}\,,
\end{equation}
and 
\begin{equation}\label{smoothmaximizer}
    H^{\eps''}_\mi(b^m_0\cdots b^m_k)_{\rho^{\otimes m}}=H_\mi(b^m_0\cdots b^m_k)_{\rho''}\,.
\end{equation}

Since this chain rule is a property of the marginal state $\rho''_{b^m_0\cdots b^m_k}$ only, we can make use of Lemma~\ref{lem:lemma1} again to construct the desired bulk state $\rho^*_{b^m_0\cdots b^m_l}$ from $\rho'_{b^m_0\cdots b^m_l}$ via $\rho''_{b^m_0\cdots b^m_k}$. The lemma says there exists an operator $T_{b^m_0\cdots b^m_k}$ such that an \emph{alternative} extension of $\rho''_{b^m_0\cdots b^m_k}$ preserves the purified distance. The exact form of $T_{b^m_0\cdots b^m_k}$ is irrelevant for us.  This extension is defined as
\begin{equation}
    \rho^*_{b^m_0\cdots b^m_l}=T_{b^m_0\cdots b^m_k}\,\rho'_{b^m_0\cdots b^m_l}\,T_{b^m_0\cdots b^m_k}\,,
\end{equation}
which satisfies
\begin{equation}
    \rho''_{b^m_0\cdots b^m_k}=\rho^*_{b^m_0\cdots b^m_k}=T_{b^m_0\cdots b^m_k}\,\rho'_{b^m_0\cdots b^m_k}\,T_{b^m_0\cdots b^m_k}\,,
\end{equation}
and
\begin{equation}
    P( \rho'_{b^m_0\cdots b^m_l}, \rho^*_{b^m_0\cdots b^m_l}) = P(\rho'_{b^m_0\cdots b^m_k}, \rho^*_{b^m_0\cdots b^m_k})=P(\rho^{\otimes m}_{b_0\cdots b_k}, \rho''_{b^m_0\cdots b^m_k})\leq\eps''\,
\end{equation}
where the last inequality follows from \eqref{smoothmaximizer}. Recall that $P(\rho^{\otimes m}_{b_0\cdots b_l},\rho'_{b^m_0\cdots b^m_l})$ by construction via Lemma~\ref{lem:lemma2}. Then the triangle inequality implies
\begin{equation}
    P(\rho^{\otimes m}_{b_0\cdots b_l},\rho^*_{b^m_0\cdots b^m_l})\leq P(\rho^{\otimes m}_{b_0\cdots b_l},\rho'_{b^m_0\cdots b^m_l})+P(\rho'_{b^m_0\cdots b^m_l},\rho^*_{b^m_0\cdots b^m_l})\leq \eps'+\eps''=:\eps\,.
\end{equation}
Therefore, the desired bulk state $\rho^*_{b^m_0\cdots b^m_l}$ is a feasible choice, i.e. it is inside the ball $\mathcal{B}_\eps (\rho^{\otimes m}_{b_0\cdots b_l})$.\\

 The key point now is that since the domain of the operator $T_{b^m_0\cdots b^m_k}$ is orthogonal to the domain of the projector $\Pi_{b^m_{k+1}\cdots b^m_l}$, the bulk state $\rho^*_{b^m_0\cdots b^m_l}$ so defined still obeys the chain rule above concerning $\rho''_{b^m_0\cdots b^m_k}$ only. Also, we thus can apply $T_{b^m_0\cdots b^m_k}$ to both sides of \eqref{opinequality1},
\begin{equation}
    2^{-H_\mi(b^{m_{k+1}}_{k+1}\cdots b^{m_{l}}_{l}|b^m_0\cdots b^m_k)_{\rho^{\otimes m}}}\cdot I_{b^{m_{k+1}}_{k+1}\cdots b^{m_{l}}_{l}}\otimes \rho^*_{b^m_0\cdots b^m_k}\geq \rho^*_{b^m_0\cdots b^m_k b^{m_{k+1}}_{k+1}\cdots b^{m_l}_l}\,.
\end{equation}
As remarked in \eqref{remark1} above, we can trace out some replicas of the $b^m_0\cdots b^m_k$ systems to obtain $b^{m_0}_0\cdots b^{m_k}_k$:

\begin{equation}
    2^{-H_\mi(b^{m_{k+1}}_{k+1}\cdots b^{m_{l}}_{l}|b^m_0\cdots b^m_k)_{\rho^{\otimes m}}}\cdot I_{b^{m_{k+1}}_{k+1}\cdots b^{m_{l}}_{l}}\otimes \rho^*_{b^{m_0}_0\cdots b^{m_k}_k}\geq \rho^*_{b^{m_0}_0\cdots b^{m_k}_k b^{m_{k+1}}_{k+1}\cdots b^{m_l}_l}\,.
\end{equation}

Since $\rho^*_{b^{m_0}_0\cdots b^{m_k}_k}=\rho''_{b^{m_0}_0\cdots b^{m_k}_k}$ maximizes the smooth min-entropy \eqref{smoothmaximizer}, we have
\begin{equation}
    2^{-H_\mi(b^{m_{k+1}}_{k+1}\cdots b^{m_{l}}_{l}|b^m_0\cdots b^m_k)_{\rho^{\otimes m}}-H^{\eps''}_\mi(b^m_0\cdots b^m_k)_{\rho^{\otimes m}}}\cdot I_{b^{m_0}_0\cdots b^{m_l}_l}\geq \rho^*_{b^{m_0}_0\cdots b^{m_l}_l}\,.
\end{equation}

It then implies the chain rule \eqref{superchain2}
\begin{equation}
\begin{aligned}
    H_\mi(b^{m_0}_0\cdots b^{m_{l}}_{l})_{\rho^*} \geq H_\mi(b^{m_{k+1}}_{k+1}\cdots b^{m_{l}}_{l}|b^{m_0}_0\cdots b^{m_k}_k)_{\rho^{\otimes m}}+  H_\mi(b^{m_0}_0\cdots b^{m_k}_k)_{\rho^*}\,.
\end{aligned}
\end{equation}

This bulk state $\rho^*_{b^m_0\cdots b^m_l}$ thus possesses both sets of the chain rules \eqref{superchain1}\&\eqref{superchain2} we want to match with \eqref{generalqes}\,.

\section{The max-entropy AEP replica trick}\label{sec:maxchainrule}

Similarly to \eqref{minAEP}, the max-entropy AEP shall give us an upper bound with some feasible state $\rho'$.
  \begin{equation}
           S(B)_\rho := \lim_{m\rightarrow\infty}\lim_{n\rightarrow\frac12}  \frac{1}{m(1-n)} \min_{\tilde{\rho}_B^m\approx_\varepsilon\rho_B^{\otimes m}} \log\tr\left[\tilde{\rho}_{B}^m\right]^n \leq \lim_{m\rightarrow\infty}\lim_{n\rightarrow\frac12} \frac{1}{m(1-n)} \log\frac{Z_{n,m}}{Z_1^{nm}}\,.
 \end{equation}
Note that we need to analytically continue $Z_{n,m}$ to $Z_{\frac12,m}$.  We're gonna consider the other two chain rules relating the smooth max-entropies. Again, it amounts to make a good choice of the feasible state for any integer $m$ to obtain useful bounds. 

Let us look at the chain rules that relate the max-entropies, which are structurally identical to the chain rules \eqref{chainrule1}\& \eqref{chainrule2} we've seen for the min-entropies. We can choose a pure bulk state $\rho'_{b^mb'^m\b^m}$, which is within the $\eps$-ball of $\rho_{bb'\b}^{\otimes m}$, such that for any $k$ out of $m$ replicas, and some $\eps'<\eps$,
 \begin{equation}\label{appchain3}
    H_\ma(b'^mb^m)_{\rho'}= H^{\eps'}_\ma(b'^mb^m)_{\rho^{\otimes m}}\leq H_\ma(b'^k|b^mb'^{m-k})_{\rho^{\otimes m}}+H_\ma(b^mb'^{m-k})_{\rho'}\,.
\end{equation}
As before, we have 
\begin{equation}
     H_\ma(b'^{m-k}|b^mb'^{m-k})_{\rho^{\otimes m}} = kH_\ma(b'|b)_\rho\,.
\end{equation}
The condition $H_\mi(b'|b)_\rho>\frac{A_2-A_1}{4G_N}$ then implies 
\begin{equation}
    H_\ma(b'^mb^m)_{\rho'}<k\frac{A_2-A_1}{4G_N}+H_\ma(b^mb'^{m-k})_{\rho'}\,,
\end{equation}
which is equivalent to
\begin{equation}
    \tr\sqrt{\rho'_{b'^mb^m}}<2^{\frac{k}{2}\frac{A_2-A_1}{4G_N}}\tr \sqrt{\rho'_{b^mb'^{m-k}}}\,,
\end{equation}
that is
\begin{equation}
    2^{\frac{m}{2}\frac{A_1}{4G_N}}\tr\sqrt{\rho'_{b'^mb^m}}<2^{\frac{k}{2}\frac{A_2}{4G_N}+\frac{m-k}{2}\frac{A_1}{4G_N}}\tr \sqrt{\rho'_{b^mb'^{m-k}}}\,.
\end{equation}
This is the same statement as \eqref{A1dominant} at $n=\frac12$ except with the opposite inequality. However, note that the factor $(1-n)$ also flips the sign as $n$ goes from positive integers to $\frac12$. This means the smaller LHS above shall be the dominant term in the action. It matches with our claim \eqref{claim2} for the chosen bulk state. 

The other chain rule follows from applying \eqref{appchain3} to $\b$'s and the duality relation \eqref{eq:duality}. Following the same arguments used in establishing \eqref{appchain2}, we have for some state $\rho''_{b^mb'^m\b^m}$, any $k$ out of $m$ replicas, and some $\eps''<\eps$,
\begin{equation}\label{chainrule44}
    H_\ma(b^mb'^{m-k})_{\rho''}\geq H_\mi(b'^{m-k}|b^m)_{\rho^{\otimes m}}+H_\ma(b^m)_{\rho''} = H_\mi(b'^{m-k}|b^m)_{\rho^{\otimes m}}+H^{\eps''}_\ma(b^m)_{\rho^{\otimes m}}\,.
\end{equation}
When $H_\mi(b'|b)_\rho>\frac{A_2-A_1}{4G_N}$, the same steps as before lead us to
\begin{equation}
    H_\ma(b^mb'^{m-k})_{\rho'}>(m-k)\frac{A_2-A_1}{4G_N}+H_\ma(b^m)_{\rho''}\,,
\end{equation}
and thus
\begin{equation}
    2^{\frac{m-k}{2}\frac{A_1}{4G_N}+\frac{k}{2}\frac{A_2}{4G_N}}\tr\sqrt{\rho''_{b^mb'^{m-k}}}>2^{\frac{m}{2}\frac{A_2}{4G_N}}\tr \sqrt{\rho''_{b^m}}\,.
\end{equation}
Again, the RHS shall be the dominant term for this chosen bulk state as in \eqref{claim1}.

Since the bulk states $\rho'_{b^mb'^m\b^m}$ and $\rho''_{b^mb'^m\b^m}$ are chosen to minimize the bulk smooth max-entropies for the $b^mb'^m$ and $b^m$ regions respectively, we can apply the AEP on the bulk states.  When $H_\ma(b'|b)_\rho<\frac{A_2-A_1}{4G_N}$,
\begin{equation}
    \begin{aligned}
    S(B)_\rho &\leq \lim_{m\rightarrow\infty}\lim_{n\rightarrow\frac12}\frac{1}{m(1-n)}\log\frac{Z_{n,m}[A_{1,2}]}{Z_1[A_{1,2}]^{nm}} \\
    &= \frac{A_1}{4G_N} + \lim_{m\rightarrow\infty}\lim_{n\rightarrow\frac12}\frac{1}{m(1-n)}\tr(\rho^{\otimes n}_{b^mb'^{m}}\cdot U_{\tau|b_1}\otimes\cdots\otimes U_{\tau|b_m}\otimes U_{\tau|b'_1}\otimes\cdots\otimes U_{\tau|b'_m})\\
    & = \frac{A_1}{4G_N} + \lim_{m\rightarrow\infty}\frac{1}{m}H_\mi^{\eps'}(b^mb'^m)_{\rho^{\otimes m}}\\
    & = \frac{A_1}{4G_N} + H(bb')_\rho\,.
\end{aligned}
\end{equation}

When $H_\mi(b'|b)_\rho>\frac{A_2-A_1}{4G_N}$,
\begin{equation}
  \begin{aligned}
    S(B)_\rho &\leq \lim_{m\rightarrow\infty}\lim_{n\rightarrow\frac12}\frac{1}{m(1-n)}\log\frac{Z_{n,m}[A_{1,2}]}{Z_1[A_{1,2}]^{nm}} \\
    &= \frac{A_2}{4G_N} + \lim_{m\rightarrow\infty}\lim_{n\rightarrow\frac12}\frac{1}{m(1-n)}\tr(\rho''^{\otimes n}_{b^m}\cdot U_{\tau|b_1}\otimes\cdots\otimes U_{\tau|b_m})\\
    & = \frac{A_2}{4G_N} + \lim_{m\rightarrow\infty}\frac{1}{m}H_\mi^{\eps''}(b^m)_{\rho^{\otimes m}}\\
    & = \frac{A_2}{4G_N} + H(b)_\rho\,.
\end{aligned}  
\end{equation}
These upper bounds are identical to \eqref{upperbound}. 

\section{The resolvent calculation}\label{sec:JTEOW}

We would like compute the trace of the resolvent matrix \eqref{eq:resolvent}, which is the Stieltjes transform of the spectral density function, using the expansion in $\lambda$. It involves objects like $\tr \rho_R^n$. Using \eqref{eq:generalspectrum}, they read,
\begin{equation}
    \tr \rho_R^n =\sum_{i_1, i_2,\cdots i_n}c^2_{i_1}\cdots c^2_{i_n} \braket{\psi_{i_1}}{\psi_{i_2}}\braket{\psi_{i_2}}{\psi_{i_3}}\cdots\braket{\psi_{i_n}}{\psi_{i_1}}.
\end{equation}
As we have mentioned, the inner products are no longer orthogonal when $n>1$, we need to compute them using the gravitational path integral. Let us introduce some diagrammatic rules following~\cite{penington2019replica}. The boundary condition for $\braket{\psi_i}{\psi_j}$ is depicted by  
\begin{equation}
    \includegraphics[width=0.3\linewidth]{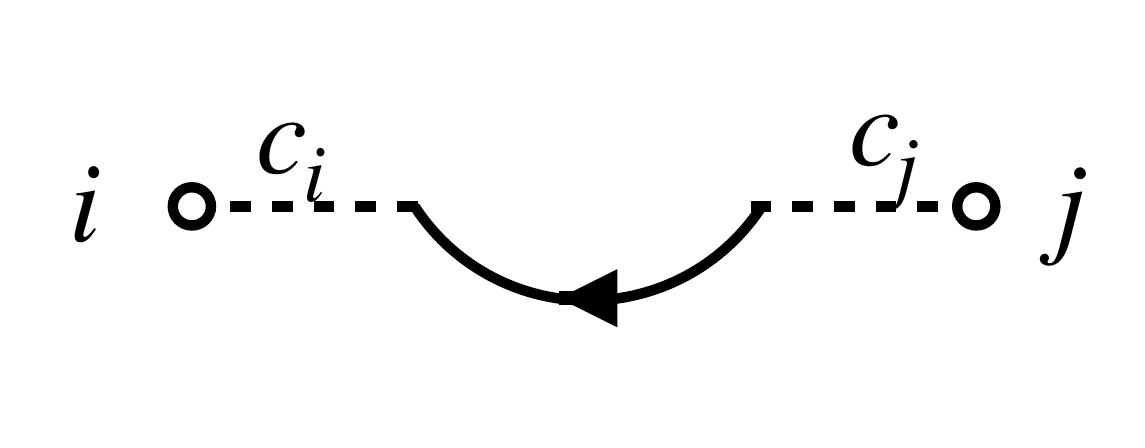}
\end{equation}
where the black line denotes the asymptotic boundary with the arrow pointing the time direction from the ket to the bra; and the boundary is joined with the EOW brane of favour index $i,j$ respectively, depicted as the dashed leg. For each EOW brane leg $i$, we associate with the weights $c_i$ prescribed by the Schmidt coefficient. For computing quantities like $\tr \rho_R^n$, we should contract the EOW brane indices and this gives us a closed boundary condition and the path integral instructs us to sum the on-shell action over all the possible topologies/geometries that can be filled in. We focus on the non-crossing geometries that dominate over the crossing ones as explained in~\cite{penington2019replica}. The rule is that for each connected bulk domain joining $l$ boundaries, one should include a contribution $Z_l/Z_1^l$, which is simply equal to $2^{(1-l)S}$ in the microcanonical ensemble of entropy $S$. Also, for each dashed index loop, we sum over all the weights with a delta function. 

We directly do the diagrammatic sum for the resolvent matrix that illustrates the rules.
\begin{equation}\label{eq:resexpansion}
    \includegraphics[width=0.9\linewidth]{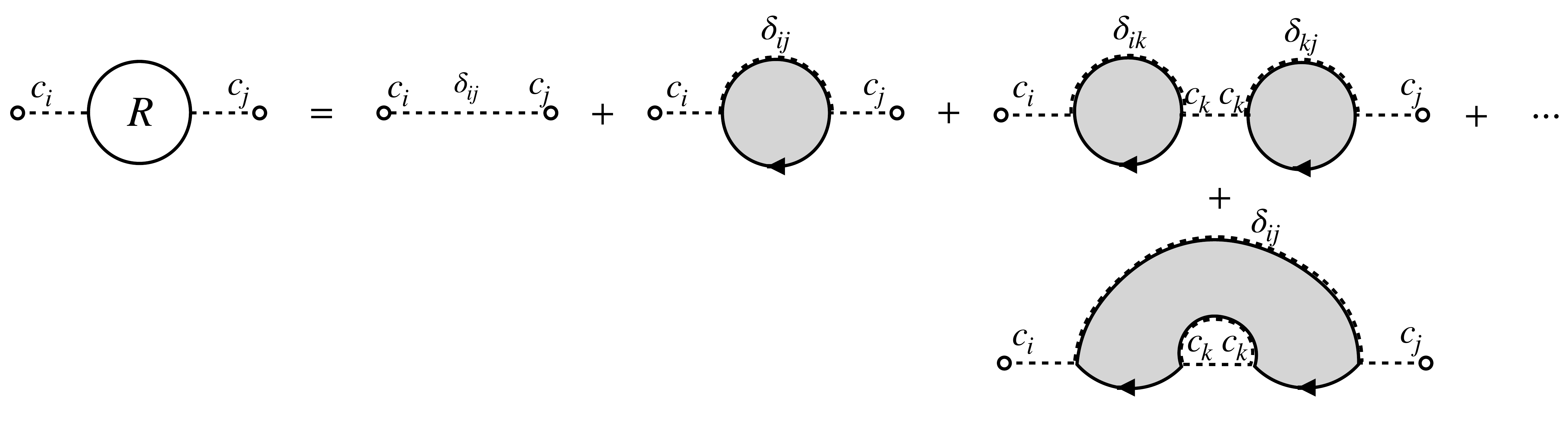}
\end{equation}
where the resolvent matrix is denoted on the left hand side, and the sum of the first few terms are depicted. They translate to 
\begin{equation}
    R_{ij}c_ic_j=\frac{c_ic_j\delta_{ij}}{\lambda}+\frac{c_ic_j\delta_{ij}}{\lambda^2}+\frac{c_i^2c_j^2\delta_{ij}+2^{-S}c_ic_j\delta_{ij}}{\lambda^3}+\cdots.
\end{equation}
where for each index line we need to add in another factor of $1/\lambda$, and the repeated indices in the middle (those not associated with the uncontracted ends) are summed over. We have also used that $\sum_k c_k^2=1$. Now we exploit the re-summation trick by rearranging the sum as follows,
\begin{equation}
    \includegraphics[width=1.0\linewidth]{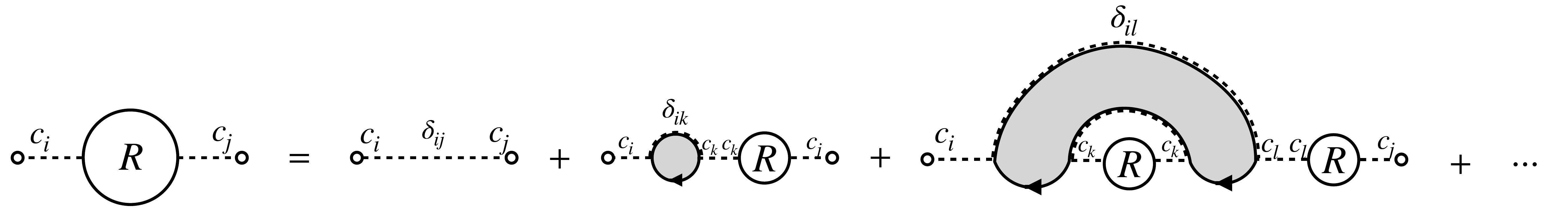}
\end{equation}
where the second term includes all the summands in \eqref{eq:resexpansion} with the first boundary contracted with itself, and the third terms includes all the summands in \eqref{eq:resexpansion} with the first boundary contracted with one other boundary, and so on. Note that in the third term, the contracted loop with $R_{ij}$ gives
\begin{equation}
    \sum_{i,j} R_{ij}c_ic_j\delta_{ij} =: R^{(1)}
\end{equation}
as we've defined in the main text. Also, the uncontracted dash line with $R_{ij}$ gives
\begin{equation}
    \sum_{l} c_lc_jR_{lj}c_ic_l\delta_{il} =: R_{ij}c_i^3c_j.
\end{equation}

The sum then reads,
\begin{equation}
    R_{ij}c_ic_j=\frac{c_ic_j\delta_{ij}}{\lambda}+\frac{R_{ij}c_i^3c_j}{\lambda}\sum_{l=1}^\infty 2^{(1-l)S}\left(R^{(1)}\right)^{l-1}.
\end{equation}

Finally we can compute the geometric series and obtain \eqref{eq:resum},
 \begin{equation}
     \lambda R_{ij} = \delta_{ij}+\frac{c_i^2R_{ij}}{1-R^{(1)}2^{-S}}\,,
 \end{equation}
from which we derive the coupled equations \eqref{eq:cubicequations} and $R(\lambda;\rho)$ can be analytically solved.

\bibliographystyle{jhep}
\bibliography{qes}

\providecommand{\href}[2]{#2}\begingroup\raggedright\begin{thebibliography}{100}

\bibitem{bekenstein1972black}
J.~Bekenstein, \emph{Black holes and the second law},
  \href{https://doi.org/10.1007/BF02757029}{\emph{Lettere Al Nuovo Cimento
  (1971--1985)} {\bfseries 4} (1972) 737}.

\bibitem{Bekenstein1973}
J.~D. Bekenstein, \emph{Black holes and entropy},
  \href{https://doi.org/10.1103/PhysRevD.7.2333}{\emph{Physical Review D}
  {\bfseries 7} (1973) 2333}.

\bibitem{bekenstein1974generalized}
J.~D. Bekenstein, \emph{Generalized second law of thermodynamics in black-hole
  physics}, \href{https://doi.org/10.1103/PhysRevD.9.3292}{\emph{Physical
  Review D} {\bfseries 9} (1974) 3292}.

\bibitem{hawking1975particle}
S.~W. Hawking, \emph{Particle creation by black holes},
  \href{https://doi.org/10.1007/BF02345020}{\emph{Communications in
  Mathematical Physics} {\bfseries 43} (1975) 199}.

\bibitem{hawking1976breakdown}
S.~W. Hawking, \emph{Breakdown of predictability in gravitational collapse},
  \href{https://doi.org/10.1103/PhysRevD.14.2460}{\emph{Physical Review D}
  {\bfseries 14} (1976) 2460}.

\bibitem{unruh1976notes}
W.~G. Unruh, \emph{Notes on black-hole evaporation},
  \href{https://doi.org/10.1103/PhysRevD.14.870}{\emph{Physical Review D}
  {\bfseries 14} (1976) 870}.

\bibitem{bisognano1976duality}
J.~J. Bisognano and E.~H. Wichmann, \emph{On the duality condition for quantum
  fields}, \href{https://doi.org/10.1063/1.522898}{\emph{Journal of
  Mathematical Physics} {\bfseries 17} (1976) 303}.

\bibitem{hooft1985quantum}
G.~Hooft, \emph{On the quantum structure of a black hole},
  \href{https://doi.org/10.1016/0550-3213(85)90418-3}{\emph{Nuclear Physics B}
  {\bfseries 256} (1985) 727}.

\bibitem{bombelli1986quantum}
L.~Bombelli, R.~K. Koul, J.~Lee and R.~D. Sorkin, \emph{Quantum source of
  entropy for black holes},
  \href{https://doi.org/10.1103/PhysRevD.34.373}{\emph{Physical Review D}
  {\bfseries 34} (1986) 373}.

\bibitem{srednicki1993entropy}
M.~Srednicki, \emph{Entropy and area},
  \href{https://doi.org/10.1103/PhysRevLett.71.666}{\emph{Physical Review
  Letters} {\bfseries 71} (1993) 666}.

\bibitem{susskind1998some}
L.~Susskind, \emph{Some speculations about black hole entropy in string
  theory}, .

\bibitem{callan1994geometric}
C.~Callan and F.~Wilczek, \emph{On geometric entropy},
  \href{https://doi.org/10.1016/0370-2693(94)91007-3}{\emph{Physics Letters B}
  {\bfseries 333} (1994) 55}.

\bibitem{holzhey1994geometric}
C.~Holzhey, F.~Larsen and F.~Wilczek, \emph{Geometric and renormalized entropy
  in conformal field theory},
  \href{https://doi.org/10.1016/0550-3213(94)90402-2}{\emph{Nuclear Physics B}
  {\bfseries 424} (1994) 443}.

\bibitem{calabrese2004entanglement}
P.~Calabrese and J.~Cardy, \emph{Entanglement entropy and quantum field
  theory},
  \href{https://doi.org/10.1088/1742-5468/2004/06/P06002}{\emph{Journal of
  Statistical Mechanics: Theory and Experiment} {\bfseries 2004} (2004)
  P06002}.

\bibitem{calabrese2005evolution}
P.~Calabrese and J.~Cardy, \emph{Evolution of entanglement entropy in
  one-dimensional systems},
  \href{https://doi.org/10.1088/1742-5468/2005/04/P04010}{\emph{Journal of
  Statistical Mechanics: Theory and Experiment} {\bfseries 2005} (2005)
  P04010}.

\bibitem{calabrese2006entanglement}
P.~Calabrese and J.~Cardy, \emph{Entanglement entropy and quantum field theory:
  a non-technical introduction},
  \href{https://doi.org/10.1142/S021974990600192X}{\emph{International Journal
  of Quantum Information} {\bfseries 4} (2006) 429}.

\bibitem{hastings2006solving}
M.~B. Hastings, \emph{Solving gapped hamiltonians locally},
  \href{https://doi.org/10.1103/PhysRevB.73.085115}{\emph{Physical Review b}
  {\bfseries 73} (2006) 085115}.

\bibitem{kitaev2006topological}
A.~Kitaev and J.~Preskill, \emph{Topological entanglement entropy},
  \href{https://doi.org/10.1103/PhysRevLett.96.110404}{\emph{Physical Review
  Letters} {\bfseries 96} (2006) 110404}.

\bibitem{wolf2008area}
M.~M. Wolf, F.~Verstraete, M.~B. Hastings and J.~I. Cirac, \emph{Area laws in
  quantum systems: mutual information and correlations},
  \href{https://doi.org/10.1103/PhysRevLett.100.070502}{\emph{Physical Review
  Letters} {\bfseries 100} (2008) 070502}.

\bibitem{li2008entanglement}
H.~Li and F.~D.~M. Haldane, \emph{Entanglement spectrum as a generalization of
  entanglement entropy: Identification of topological order in non-abelian
  fractional quantum hall effect states},
  \href{https://doi.org/10.1103/PhysRevLett.101.010504}{\emph{Physical Review
  Letters} {\bfseries 101} (2008) 010504}.

\bibitem{mezard1987spin}
M.~M{\'e}zard, G.~Parisi and M.~A. Virasoro, \emph{Spin glass theory and
  beyond: An Introduction to the Replica Method and Its Applications}, vol.~9.
  World Scientific Publishing Company, 1987,
  \href{https://doi.org/10.1142/0271}{10.1142/0271}.

\bibitem{lewkowycz2013generalized}
A.~Lewkowycz and J.~Maldacena, \emph{Generalized gravitational entropy},
  \href{https://doi.org/10.1007/JHEP08(2013)090}{\emph{Journal of High Energy
  Physics} {\bfseries 2013} (2013) 90}.

\bibitem{casini2004finite}
H.~Casini and M.~Huerta, \emph{A finite entanglement entropy and the
  c-theorem},
  \href{https://doi.org/10.1016/j.physletb.2004.08.072}{\emph{Physics Letters
  B} {\bfseries 600} (2004) 142}.

\bibitem{casini2015mutual}
H.~Casini, M.~Huerta, R.~C. Myers and A.~Yale, \emph{Mutual information and the
  f-theorem}, \href{https://doi.org/10.1007/JHEP10(2015)003}{\emph{Journal of
  High Energy Physics} {\bfseries 2015} (2015) 1}.

\bibitem{casini2017markov}
H.~Casini, E.~Test{\'e} and G.~Torroba, \emph{Markov property of the conformal
  field theory vacuum and the a-theorem},
  \href{https://doi.org/10.1103/PhysRevLett.118.261602}{\emph{Physical Review
  Letters} {\bfseries 118} (2017) 261602}.

\bibitem{ryu2006holographic}
S.~Ryu and T.~Takayanagi, \emph{Holographic derivation of entanglement entropy
  from the anti--de sitter space/conformal field theory correspondence},
  \href{https://doi.org/10.1103/PhysRevLett.96.181602}{\emph{Physical Review
  Letters} {\bfseries 96} (2006) 181602}.

\bibitem{ryu2006aspects}
S.~Ryu and T.~Takayanagi, \emph{Aspects of holographic entanglement entropy},
  \href{https://doi.org/10.1088/1126-6708/2006/08/045}{\emph{Journal of High
  Energy Physics} {\bfseries 2006} (2006) 045}.

\bibitem{hubeny2007covariant}
V.~E. Hubeny, M.~Rangamani and T.~Takayanagi, \emph{A covariant holographic
  entanglement entropy proposal},
  \href{https://doi.org/10.1088/1126-6708/2007/07/062}{\emph{Journal of High
  Energy Physics} {\bfseries 2007} (2007) 062}.

\bibitem{headrick2007holographic}
M.~Headrick and T.~Takayanagi, \emph{A holographic proof of the strong
  subadditivity of entanglement entropy},
  \href{https://doi.org/10.1103/PhysRevD.76.106013}{\emph{Physical Review D}
  {\bfseries 76} (2007) 106013}.

\bibitem{casini2011towards}
H.~Casini, M.~Huerta and R.~C. Myers, \emph{Towards a derivation of holographic
  entanglement entropy},
  \href{https://doi.org/10.1007/JHEP05(2011)036}{\emph{Journal of High Energy
  Physics} {\bfseries 2011} (2011) 1}.

\bibitem{swingle2012entanglement}
B.~Swingle, \emph{Entanglement renormalization and holography},
  \href{https://doi.org/10.1103/PhysRevD.86.065007}{\emph{Physical Review D}
  {\bfseries 86} (2012) 065007}.

\bibitem{faulkner2013quantum}
T.~Faulkner, A.~Lewkowycz and J.~Maldacena, \emph{Quantum corrections to
  holographic entanglement entropy},
  \href{https://doi.org/10.1007/JHEP11(2013)074}{\emph{Journal of High Energy
  Physics} {\bfseries 2013} (2013) 74}.

\bibitem{engelhardt2015quantum}
N.~Engelhardt and A.~C. Wall, \emph{Quantum extremal surfaces: holographic
  entanglement entropy beyond the classical regime},
  \href{https://doi.org/10.1007/JHEP01(2015)073}{\emph{Journal of High Energy
  Physics} {\bfseries 2015} (2015) 73}.

\bibitem{dong2016gravity}
X.~Dong, \emph{The gravity dual of r{\'e}nyi entropy},
  \href{https://doi.org/10.1038/ncomms12472}{\emph{Nature Communications}
  {\bfseries 7} (2016) 1}.

\bibitem{harlow2017ryu}
D.~Harlow, \emph{The ryu--takayanagi formula from quantum error correction},
  \href{https://doi.org/10.1007/s00220-017-2904-z}{\emph{Communications in
  Mathematical Physics} {\bfseries 354} (2017) 865}.

\bibitem{dong2018entropy}
X.~Dong and A.~Lewkowycz, \emph{Entropy, extremality, euclidean variations, and
  the equations of motion},
  \href{https://doi.org/10.1007/JHEP01(2018)081}{\emph{Journal of High Energy
  Physics} {\bfseries 2018} (2018) 1}.

\bibitem{dong2019flat}
X.~Dong, D.~Harlow and D.~Marolf, \emph{Flat entanglement spectra in fixed-area
  states of quantum gravity},
  \href{https://doi.org/10.1007/JHEP10(2019)240}{\emph{Journal of High Energy
  Physics} {\bfseries 2019} (2019) 1}.

\bibitem{akers2019holographic}
C.~Akers and P.~Rath, \emph{Holographic renyi entropy from quantum error
  correction}, \href{https://doi.org/10.1007/JHEP05(2019)052}{\emph{Journal of
  High Energy Physics} {\bfseries 2019} (2019) 1}.

\bibitem{rangamani2017holographic}
M.~Rangamani and T.~Takayanagi, \emph{Holographic entanglement entropy},
  \href{https://doi.org/10.1007/978-3-319-52573-0}{\emph{Springer Lecture Notes
  in Physics} (2017) 35}.

\bibitem{nishioka2018entanglement}
T.~Nishioka, \emph{Entanglement entropy: holography and renormalization group},
  \href{https://doi.org/10.1103/RevModPhys.90.035007}{\emph{Reviews of Modern
  Physics} {\bfseries 90} (2018) 035007}.

\bibitem{headrick2019lectures}
M.~Headrick, \emph{Lectures on entanglement entropy in field theory and
  holography},
  {\emph{\href{https://arxiv.org/abs/1907.08126}{arXiv:1907.08126}} (2019) }.

\bibitem{dhw}
X.~Dong, D.~Harlow and A.~C. Wall, \emph{Reconstruction of bulk operators
  within the entanglement wedge in gauge-gravity duality},
  \href{https://doi.org/10.1103/PhysRevLett.117.021601}{\emph{Physical Review
  Letter} {\bfseries 117} (2016) 021601}.

\bibitem{faulkner2017bulk}
T.~Faulkner and A.~Lewkowycz, \emph{Bulk locality from modular flow},
  \href{https://doi.org/10.1007/JHEP07(2017)151}{\emph{Journal of High Energy
  Physics} {\bfseries 2017} (2017) 1}.

\bibitem{harlow2018tasi}
D.~Harlow, \emph{{TASI} lectures on the emergence of the bulk in {AdS/CFT}},
  {\emph{\href{https://arxiv.org/abs/1802.01040}{arXiv:1802.01040}} (2018) }.

\bibitem{almheiri2015bulk}
A.~Almheiri, X.~Dong and D.~Harlow, \emph{Bulk locality and quantum error
  correction in {AdS/CFT}},
  \href{https://doi.org/10.1007/JHEP04(2015)163}{\emph{Journal of High Energy
  Physics} {\bfseries 2015} (2015) 163}.

\bibitem{wald1993black}
R.~M. Wald, \emph{Black hole entropy is the noether charge},
  \href{https://doi.org/10.1103/PhysRevD.48.R3427}{\emph{Physical Review D}
  {\bfseries 48} (1993) R3427}.

\bibitem{iyer1994some}
V.~Iyer and R.~M. Wald, \emph{Some properties of the noether charge and a
  proposal for dynamical black hole entropy},
  \href{https://doi.org/10.1103/PhysRevD.50.846}{\emph{Physical Review D}
  {\bfseries 50} (1994) 846}.

\bibitem{dong2014holographic}
X.~Dong, \emph{Holographic entanglement entropy for general higher derivative
  gravity}, \href{https://doi.org/10.1007/JHEP01(2014)044}{\emph{Journal of
  High Energy Physics} {\bfseries 2014} (2014) 1}.

\bibitem{camps2014generalized}
J.~Camps, \emph{Generalized entropy and higher derivative gravity},
  \href{https://doi.org/10.1007/JHEP03(2014)070}{\emph{Journal of High Energy
  Physics} {\bfseries 2014} (2014) 70}.

\bibitem{almheiri2019entropy}
A.~Almheiri, N.~Engelhardt, D.~Marolf and H.~Maxfield, \emph{The entropy of
  bulk quantum fields and the entanglement wedge of an evaporating black hole},
  \href{https://doi.org/10.1007/JHEP12(2019)063}{\emph{Journal of High Energy
  Physics} {\bfseries 2019} (2019) 1}.

\bibitem{penington2020entanglement}
G.~Penington, \emph{Entanglement wedge reconstruction and the information
  paradox}, \href{https://doi.org/10.1007/JHEP09(2020)002}{\emph{Journal of
  High Energy Physics} {\bfseries 2020} (2020) 1}.

\bibitem{almheiri2020page}
A.~Almheiri, R.~Mahajan, J.~Maldacena and Y.~Zhao, \emph{The page curve of
  hawking radiation from semiclassical geometry},
  \href{https://doi.org/10.1007/JHEP03(2020)149}{\emph{Journal of High Energy
  Physics} {\bfseries 2020} (2020) 1}.

\bibitem{almheiri2020replica}
A.~Almheiri, T.~Hartman, J.~Maldacena, E.~Shaghoulian and A.~Tajdini,
  \emph{Replica wormholes and the entropy of hawking radiation},
  \href{https://doi.org/10.1007/JHEP05(2020)013}{\emph{Journal of High Energy
  Physics} {\bfseries 2020} (2020) 1}.

\bibitem{penington2019replica}
G.~Penington, S.~H. Shenker, D.~Stanford and Z.~Yang, \emph{Replica wormholes
  and the black hole interior},
  {\emph{\href{https://arxiv.org/abs/1911.11977}{arXiv:1911.11977}} (2019) }.

\bibitem{almheiri2020entropy}
A.~Almheiri, T.~Hartman, J.~Maldacena, E.~Shaghoulian and A.~Tajdini, \emph{The
  entropy of hawking radiation},
  \href{https://doi.org/10.1103/RevModPhys.93.035002}{\emph{Review of Modern
  Physics} {\bfseries 93} (2021) 035002}.

\bibitem{d2021alternative}
E.~D’Hoker, X.~Dong and C.-H. Wu, \emph{An alternative method for extracting
  the von neumann entropy from r{\'e}nyi entropies},
  \href{https://doi.org/10.1007/JHEP01(2021)042}{\emph{Journal of High Energy
  Physics} {\bfseries 2021} (2021) 1}.

\bibitem{boas2011entire}
R.~P. Boas, \emph{Entire functions}. Academic Press, 2011,
  \href{https://doi.org/10.1016/c2013-0-12422-1}{10.1016/c2013-0-12422-1}.

\bibitem{calabrese2009entanglement}
P.~Calabrese, J.~Cardy and E.~Tonni, \emph{Entanglement entropy of two disjoint
  intervals in conformal field theory},
  \href{https://doi.org/10.1088/1742-5468/2009/11/P11001}{\emph{Journal of
  Statistical Mechanics: Theory and Experiment} {\bfseries 2009} (2009)
  P11001}.

\bibitem{calabrese2011entanglement}
P.~Calabrese, J.~Cardy and E.~Tonni, \emph{Entanglement entropy of two disjoint
  intervals in conformal field theory: Ii},
  \href{https://doi.org/10.1088/1742-5468/2011/01/P01021}{\emph{Journal of
  Statistical Mechanics: Theory and Experiment} {\bfseries 2011} (2011)
  P01021}.

\bibitem{de2015entanglement}
C.~De~Nobili, A.~Coser and E.~Tonni, \emph{Entanglement entropy and negativity
  of disjoint intervals in {CFT}: Some numerical extrapolations},
  \href{https://doi.org/10.1088/1742-5468/2015/06/P06021}{\emph{Journal of
  Statistical Mechanics: Theory and Experiment} {\bfseries 2015} (2015)
  P06021}.

\bibitem{ruggiero2018entanglement}
P.~Ruggiero, E.~Tonni and P.~Calabrese, \emph{Entanglement entropy of two
  disjoint intervals and the recursion formula for conformal blocks},
  \href{https://doi.org/10.1088/1742-5468/aae5a8}{\emph{Journal of Statistical
  Mechanics: Theory and Experiment} {\bfseries 2018} (2018) 113101}.

\bibitem{shannon1948mathematical}
C.~E. Shannon, \emph{A mathematical theory of communication},
  \href{https://doi.org/10.1002/j.1538-7305.1948.tb01338.x}{\emph{The Bell
  system technical journal} {\bfseries 27} (1948) 379}.

\bibitem{schumacher1995quantum}
B.~Schumacher, \emph{Quantum coding},
  \href{https://doi.org/10.1103/PhysRevA.51.2738}{\emph{Physical Review A}
  {\bfseries 51} (1995) 2738}.

\bibitem{bennett1996concentrating}
C.~H. Bennett, H.~J. Bernstein, S.~Popescu and B.~Schumacher,
  \emph{Concentrating partial entanglement by local operations},
  \href{https://doi.org/10.1103/PhysRevA.53.2046}{\emph{Physical Review A}
  {\bfseries 53} (1996) 2046}.

\bibitem{bennett1996purification}
C.~H. Bennett, G.~Brassard, S.~Popescu, B.~Schumacher, J.~A. Smolin and W.~K.
  Wootters, \emph{Purification of noisy entanglement and faithful teleportation
  via noisy channels},
  \href{https://doi.org/10.1103/PhysRevLett.76.722}{\emph{Physical Review
  Letters} {\bfseries 76} (1996) 722}.

\bibitem{schumacher1997sending}
B.~Schumacher and M.~D. Westmoreland, \emph{Sending classical information via
  noisy quantum channels},
  \href{https://doi.org/10.1103/PhysRevA.56.131}{\emph{Physical Review A}
  {\bfseries 56} (1997) 131}.

\bibitem{holevo1998capacity}
A.~S. Holevo, \emph{The capacity of the quantum channel with general signal
  states}, \href{https://doi.org/10.1109/18.651037}{\emph{IEEE Transactions on
  Information Theory} {\bfseries 44} (1998) 269}.

\bibitem{renner2004smooth}
R.~Renner and S.~Wolf, \emph{Smooth r{\'e}nyi entropy and applications},
  \href{https://doi.org/10.1109/ISIT.2004.1365269}{\emph{International
  Symposium on Information Theory, 2004. ISIT 2004. Proceedings.} (2004) 233}.

\bibitem{renner2008security}
R.~Renner, \emph{Security of quantum key distribution},
  \href{https://doi.org/10.1142/S0219749908003256}{\emph{International Journal
  of Quantum Information} {\bfseries 6} (2008) 1}.

\bibitem{wang2012one}
L.~Wang and R.~Renner, \emph{One-shot classical-quantum capacity and hypothesis
  testing},
  \href{https://doi.org/10.1103/PhysRevLett.108.200501}{\emph{Physical Review
  Letters} {\bfseries 108} (2012) 200501}.

\bibitem{muller2013quantum}
M.~M{\"u}ller-Lennert, F.~Dupuis, O.~Szehr, S.~Fehr and M.~Tomamichel, \emph{On
  quantum r{\'e}nyi entropies: A new generalization and some properties},
  \href{https://doi.org/10.1063/1.4838856}{\emph{Journal of Mathematical
  Physics} {\bfseries 54} (2013) 122203}.

\bibitem{wilde2014strong}
M.~M. Wilde, A.~Winter and D.~Yang, \emph{Strong converse for the classical
  capacity of entanglement-breaking and hadamard channels via a sandwiched
  r{\'e}nyi relative entropy},
  \href{https://doi.org/10.1007/s00220-014-2122-x}{\emph{Communications in
  Mathematical Physics} {\bfseries 331} (2014) 593}.

\bibitem{dupuis2014generalized}
F.~Dupuis, L.~Kraemer, P.~Faist, J.~M. Renes and R.~Renner, \emph{Generalized
  entropies}, \href{https://doi.org/10.1142/9789814449243_0008}{\emph{XVIIth
  International Congress on Mathematical Physics} (2014) 134}.

\bibitem{tomamichel2015quantum}
M.~Tomamichel, \emph{Quantum information processing with finite resources:
  mathematical foundations}, vol.~5. Springer, 2015,
  \href{https://doi.org/10.1007/978-3-319-21891-5}{10.1007/978-3-319-21891-5}.

\bibitem{berta2009single}
M.~Berta, \emph{Single-shot quantum state merging},
  {\emph{\href{https://arxiv.org/abs/0912.4495}{arXiv:0912.4495}} (2009) }.

\bibitem{konig2009operational}
R.~Konig, R.~Renner and C.~Schaffner, \emph{The operational meaning of min-and
  max-entropy}, \href{https://doi.org/10.1109/TIT.2009.2025545}{\emph{IEEE
  Transactions on Information theory} {\bfseries 55} (2009) 4337}.

\bibitem{tomamichel2011leftover}
M.~Tomamichel, C.~Schaffner, A.~Smith and R.~Renner, \emph{Leftover hashing
  against quantum side information},
  \href{https://doi.org/10.1109/TIT.2011.2158473}{\emph{IEEE Transactions on
  Information Theory} {\bfseries 57} (2011) 5524}.

\bibitem{tomamichel2009fully}
M.~Tomamichel, R.~Colbeck and R.~Renner, \emph{A fully quantum asymptotic
  equipartition property},
  \href{https://doi.org/10.1109/TIT.2009.2032797}{\emph{IEEE Transactions on
  information theory} {\bfseries 55} (2009) 5840}.

\bibitem{akers2021leading}
C.~Akers and G.~Penington, \emph{Leading order corrections to the quantum
  extremal surface prescription},
  \href{https://doi.org/10.1007/JHEP04(2021)062}{\emph{Journal of High Energy
  Physics} {\bfseries 2021} (2021) 1}.

\bibitem{marolf2020probing}
D.~Marolf, S.~Wang and Z.~Wang, \emph{Probing phase transitions of holographic
  entanglement entropy with fixed area states},
  \href{https://doi.org/10.1007/JHEP12(2020)084}{\emph{Journal of High Energy
  Physics} {\bfseries 2020} (2020) 1}.

\bibitem{dong2020enhanced}
X.~Dong and H.~Wang, \emph{Enhanced corrections near holographic entanglement
  transitions: a chaotic case study},
  \href{https://doi.org/10.1007/JHEP11(2020)007}{\emph{Journal of High Energy
  Physics} {\bfseries 2020} (2020) 1}.

\bibitem{akers2019large}
C.~Akers, A.~Levine and S.~Leichenauer, \emph{Large breakdowns of entanglement
  wedge reconstruction},
  \href{https://doi.org/10.1103/PhysRevD.100.126006}{\emph{Physical Review D}
  {\bfseries 100} (2019) 126006}.

\bibitem{boes2019neumann}
P.~Boes, J.~Eisert, R.~Gallego, M.~P. M{\"u}ller and H.~Wilming, \emph{Von
  neumann entropy from unitarity},
  \href{https://doi.org/10.1103/PhysRevLett.122.210402}{\emph{Physical Review
  Letters} {\bfseries 122} (2019) 210402}.

\bibitem{wilming2020entropy}
H.~Wilming, \emph{Entropy and reversible catalysis},
  \href{https://doi.org/10.1103/PhysRevLett.127.260402}{\emph{Physical Review
  Letters} {\bfseries 127} (2021) 260402}.

\bibitem{beaudry2012intuitive}
N.~J. Beaudry and R.~Renner, \emph{An intuitive proof of the data processing
  inequality}, \href{https://doi.org/10.5555/2230996.2231000}{\emph{Quantum
  Information \& Computation} {\bfseries 12} (2012) 432}.

\bibitem{fawzi2015quantum}
O.~Fawzi and R.~Renner, \emph{Quantum conditional mutual information and
  approximate markov chains},
  \href{https://doi.org/10.1007/s00220-015-2466-x}{\emph{Communications in
  Mathematical Physics} {\bfseries 340} (2015) 575}.

\bibitem{berta2010uncertainty}
M.~Berta, M.~Christandl, R.~Colbeck, J.~M. Renes and R.~Renner, \emph{The
  uncertainty principle in the presence of quantum memory},
  \href{https://doi.org/10.1038/nphys1734}{\emph{Nature Physics} {\bfseries 6}
  (2010) 659}.

\bibitem{tomamichel2010duality}
M.~Tomamichel, R.~Colbeck and R.~Renner, \emph{Duality between smooth min-and
  max-entropies}, \href{https://doi.org/10.1109/TIT.2010.2054130}{\emph{IEEE
  Transactions on information theory} {\bfseries 56} (2010) 4674}.

\bibitem{vitanov2013chain}
A.~Vitanov, F.~Dupuis, M.~Tomamichel and R.~Renner, \emph{Chain rules for
  smooth min-and max-entropies},
  \href{https://doi.org/10.1109/TIT.2013.2238656}{\emph{IEEE Transactions on
  Information Theory} {\bfseries 59} (2013) 2603}.

\bibitem{gibbons1977action}
G.~Gibbons and S.~Hawking, \emph{Action integrals and partition functions in
  quantum gravity},
  \href{https://doi.org/10.1103/PhysRevD.15.2752}{\emph{Physical Review D}
  {\bfseries 15} (1977) 2752}.

\bibitem{hartle1976path}
J.~Hartle and S.~Hawking, \emph{Path-integral derivation of black-hole
  radiance}, \href{https://doi.org/10.1103/PhysRevD.13.2188}{\emph{Physical
  Review D} {\bfseries 13} (1976) 2188}.

\bibitem{furrer2011min}
F.~Furrer, J.~{\AA}berg and R.~Renner, \emph{Min-and max-entropy in infinite
  dimensions},
  \href{https://doi.org/10.1007/s00220-011-1282-1}{\emph{Communications in
  Mathematical Physics} {\bfseries 306} (2011) 165}.

\bibitem{berta2016smooth}
M.~Berta, F.~Furrer and V.~B. Scholz, \emph{The smooth entropy formalism for
  von neumann algebras}, \href{https://doi.org/10.1063/1.4936405}{\emph{Journal
  of Mathematical Physics} {\bfseries 57} (2016) 015213}.

\bibitem{dutta2019canonical}
S.~Dutta and T.~Faulkner, \emph{A canonical purification for the entanglement
  wedge cross-section},
  {\emph{\href{https://arxiv.org/abs/1905.00577}{arXiv:1905.00577}} (2019) }.

\bibitem{dong2020effective}
X.~Dong, X.-L. Qi, Z.~Shangnan and Z.~Yang, \emph{Effective entropy of quantum
  fields coupled with gravity},
  \href{https://doi.org/10.1007/JHEP10(2020)052}{\emph{Journal of High Energy
  Physics} {\bfseries 2020} (2020) 1}.

\bibitem{engelhardt2021free}
N.~Engelhardt, S.~Fischetti and A.~Maloney, \emph{Free energy from replica
  wormholes}, \href{https://doi.org/10.1103/PhysRevD.103.046021}{\emph{Physical
  Review D} {\bfseries 103} (2021) 046021}.

\bibitem{fannes1973continuity}
M.~Fannes, \emph{A continuity property of the entropy density for spin lattice
  systems}, \href{https://doi.org/10.1007/BF01646490}{\emph{Communications in
  Mathematical Physics} {\bfseries 31} (1973) 291}.

\bibitem{audenaert2007sharp}
K.~M. Audenaert, \emph{A sharp continuity estimate for the von neumann
  entropy}, \href{https://doi.org/10.1088/1751-8113/40/28/S18}{\emph{Journal of
  Physics A: Mathematical and Theoretical} {\bfseries 40} (2007) 8127}.

\bibitem{czech2015information}
B.~Czech, P.~Hayden, N.~Lashkari and B.~Swingle, \emph{The information
  theoretic interpretation of the length of a curve},
  \href{https://doi.org/10.1007/jhep06(2015)157}{\emph{Journal of High Energy
  Physics} {\bfseries 2015} (2015) 1}.

\bibitem{bao2019beyond}
N.~Bao, G.~Penington, J.~Sorce and A.~C. Wall, \emph{Beyond toy models:
  distilling tensor networks in full {AdS/CFT}},
  \href{https://doi.org/10.1007/JHEP11(2019)069}{\emph{Journal of High Energy
  Physics} {\bfseries 2019} (2019) 1}.

\bibitem{hayden2019learning}
P.~Hayden and G.~Penington, \emph{Learning the alpha-bits of black holes},
  \href{https://doi.org/10.1007/JHEP12(2019)007}{\emph{Journal of High Energy
  Physics} {\bfseries 2019} (2019) 1}.

\bibitem{kreweras1972partitions}
G.~Kreweras, \emph{Sur les partitions non crois{\'e}es d'un cycle},
  \href{https://doi.org/10.1016/0012-365X(72)90041-6}{\emph{Discrete
  Mathematics} {\bfseries 1} (1972) 333}.

\bibitem{biane1997some}
P.~Biane, \emph{Some properties of crossings and partitions},
  \href{https://doi.org/10.1016/S0012-365X(96)00139-2}{\emph{Discrete
  Mathematics} {\bfseries 175} (1997) 41}.

\bibitem{watrous2018theory}
J.~Watrous, \emph{The theory of quantum information}. Cambridge University
  Press, 2018,
  \href{https://doi.org/10.1017/9781316848142}{10.1017/9781316848142}.

\bibitem{maldacena2004wormholes}
J.~Maldacena and L.~Maoz, \emph{Wormholes in ads},
  \href{https://doi.org/10.1007/JHEP01(2020)050}{\emph{Journal of High Energy
  Physics} {\bfseries 2004} (2004) 053}.

\bibitem{arkani2007}
N.~Arkani-Hamed, J.~Orgera and J.~Polchinski, \emph{Euclidean wormholes in
  string theory},
  \href{https://doi.org/10.1088/1126-6708/2007/12/018}{\emph{Journal of High
  Energy Physics} {\bfseries 2007} (2007) 018}.

\bibitem{saad2018semiclassical}
P.~Saad, S.~H. Shenker and D.~Stanford, \emph{A semiclassical ramp in syk and
  in gravity},
  {\emph{\href{https://arxiv.org/abs/1806.06840}{arXiv:1806.06840}} (2018) }.

\bibitem{saad2019jt}
P.~Saad, S.~H. Shenker and D.~Stanford, \emph{{JT} gravity as a matrix
  integral}, {\emph{\href{https://arxiv.org/abs/1903.11115}{arXiv:1903.11115}}
  (2019) }.

\bibitem{saad2019late}
P.~Saad, \emph{Late time correlation functions, baby universes, and {ETH} in
  {JT} gravity},
  {\emph{\href{https://arxiv.org/abs/1910.10311}{arXiv:1910.10311}} (2019) }.

\bibitem{stanford2019jt}
D.~Stanford and E.~Witten, \emph{{JT} gravity and the ensembles of random
  matrix theory},
  {\emph{\href{https://arxiv.org/abs/1907.03363}{arXiv:1907.03363}} (2019) }.

\bibitem{bousso2020unitarity}
R.~Bousso and M.~Toma{\v{s}}evi{\'c}, \emph{Unitarity from a smooth horizon?},
  \href{https://doi.org/10.1103/PhysRevD.102.106019}{\emph{Physical Review D}
  {\bfseries 102} (2020) 106019}.

\bibitem{bousso2020gravity}
R.~Bousso and E.~Wildenhain, \emph{Gravity/{E}nsemble duality},
  \href{https://doi.org/10.1103/PhysRevD.102.066005}{\emph{Physical Review D}
  {\bfseries 102} (2020) 066005}.

\bibitem{pollack2020eigenstate}
J.~Pollack, M.~Rozali, J.~Sully and D.~Wakeham, \emph{Eigenstate thermalization
  and disorder averaging in gravity},
  \href{https://doi.org/10.1103/PhysRevLett.125.021601}{\emph{Physical Review
  Letters} {\bfseries 125} (2020) 021601}.

\bibitem{marolf2020transcending}
D.~Marolf and H.~Maxfield, \emph{Transcending the ensemble: baby universes,
  spacetime wormholes, and the order and disorder of black hole information},
  \href{https://doi.org/10.1007/JHEP08(2020)044}{\emph{Journal of High Energy
  Physics} {\bfseries 2020} (2020) 1}.

\bibitem{marolf2021observations}
D.~Marolf and H.~Maxfield, \emph{Observations of hawking radiation: the page
  curve and baby universes},
  \href{https://doi.org/10.1007/JHEP08(2020)044}{\emph{Journal of High Energy
  Physics} {\bfseries 2021} (2021) 1}.

\bibitem{stanford2020more}
D.~Stanford, \emph{More quantum noise from wormholes},
  {\emph{\href{https://arxiv.org/abs/2008.08570}{arXiv:2008.08570}} (2020) }.

\bibitem{cotler2021ads3}
J.~Cotler and K.~Jensen, \emph{Ads3 gravity and random cft},
  \href{https://doi.org/10.1007/JHEP04(2021)033}{\emph{Journal of High Energy
  Physics} {\bfseries 2021} (2021) 1}.

\bibitem{liu2021entanglement}
H.~Liu and S.~Vardhan, \emph{Entanglement entropies of equilibrated pure states
  in quantum many-body systems and gravity},
  \href{https://doi.org/10.1103/PRXQuantum.2.010344}{\emph{PRX Quantum}
  {\bfseries 2} (2021) 010344}.

\bibitem{afkhami2021free}
N.~Afkhami-Jeddi, H.~Cohn, T.~Hartman and A.~Tajdini, \emph{Free partition
  functions and an averaged holographic duality},
  \href{https://doi.org/10.1007/JHEP01(2021)130}{\emph{Journal of High Energy
  Physics} {\bfseries 2021} (2021) 1}.

\bibitem{chen2021bra}
Y.~Chen, V.~Gorbenko and J.~Maldacena, \emph{Bra-ket wormholes in
  gravitationally prepared states},
  \href{https://doi.org/10.1007/JHEP02(2021)009}{\emph{Journal of High Energy
  Physics} {\bfseries 2021} (2021) 1}.

\bibitem{belin2021random}
A.~Belin and J.~de~Boer, \emph{Random statistics of ope coefficients and
  euclidean wormholes},
  \href{https://doi.org/10.1088/1361-6382/ac1082}{\emph{Classical and Quantum
  Gravity} {\bfseries 38} (2021) 164001}.

\bibitem{maloney2020averaging}
A.~Maloney and E.~Witten, \emph{Averaging over narain moduli space},
  \href{https://doi.org/10.1007/JHEP10(2020)187}{\emph{Journal of High Energy
  Physics} {\bfseries 2020} (2020) 1}.

\bibitem{blommaert2020dissecting}
A.~Blommaert, \emph{Dissecting the ensemble in {JT} gravity},
  {\emph{\href{https://arxiv.org/abs/2006.13971}{arXiv:2006.13971}} (2020) }.

\bibitem{giddings2020wormhole}
S.~B. Giddings and G.~J. Turiaci, \emph{Wormhole calculus, replicas, and
  entropies}, \href{https://doi.org/10.1007/JHEP09(2020)194}{\emph{Journal of
  High Energy Physics} {\bfseries 2020} (2020) 1}.

\bibitem{van2021comments}
M.~Van~Raamsdonk, \emph{Comments on wormholes, ensembles, and cosmology},
  \href{https://doi.org/10.1007/JHEP12(2021)156}{\emph{Journal of High Energy
  Physics} {\bfseries 2021} (2021) 1}.

\bibitem{eberhardt2021summing}
L.~Eberhardt, \emph{Summing over geometries in string theory},
  \href{https://doi.org/10.1007/JHEP05(2021)233}{\emph{Journal of High Energy
  Physics} {\bfseries 2021} (2021) 1}.

\bibitem{saad2021wormholes}
P.~Saad, S.~H. Shenker, D.~Stanford and S.~Yao, \emph{Wormholes without
  averaging}, {\emph{\href{https://arxiv.org/abs/2103.16754}{arXiv:2103.16754}}
  (2021) }.

\bibitem{verlinde2021deconstructing}
H.~Verlinde, \emph{Deconstructing the wormhole: Factorization, entanglement and
  decoherence},
  {\emph{\href{https://arxiv.org/abs/2105.02142}{arXiv:2105.02142}} (2021) }.

\bibitem{kudler2021relative}
J.~Kudler-Flam, \emph{Relative entropy of random states and black holes},
  \href{https://doi.org/10.1103/PhysRevLett.126.171603}{\emph{Physical Review
  Letters} {\bfseries 126} (2021) 171603}.

\bibitem{marvcenko1967distribution}
V.~A. Mar{\v{c}}enko and L.~A. Pastur, \emph{Distribution of eigenvalues for
  some sets of random matrices},
  \href{https://doi.org/10.1070/SM1967v001n04ABEH001994}{\emph{Mathematics of
  the USSR-Sbornik} {\bfseries 1} (1967) 457}.

\bibitem{page1993average}
D.~N. Page, \emph{Average entropy of a subsystem},
  \href{https://doi.org/10.1103/PhysRevLett.71.1291}{\emph{Physical Review
  Letters} {\bfseries 71} (1993) 1291}.

\bibitem{hayden2016holographic}
P.~Hayden, S.~Nezami, X.-L. Qi, N.~Thomas, M.~Walter and Z.~Yang,
  \emph{Holographic duality from random tensor networks},
  \href{https://doi.org/10.1007/JHEP11(2016)009}{\emph{Journal of High Energy
  Physics} {\bfseries 2016} (2016) 1}.

\bibitem{dupuis2020entropy}
F.~Dupuis, O.~Fawzi and R.~Renner, \emph{Entropy accumulation},
  \href{https://doi.org/10.1007/s00220-020-03839-5}{\emph{Communications in
  Mathematical Physics} {\bfseries 379} (2020) 867}.

\bibitem{dupuis2019entropy}
F.~Dupuis and O.~Fawzi, \emph{Entropy accumulation with improved second-order
  term}, \href{https://doi.org/10.1109/TIT.2019.2929564}{\emph{IEEE
  Transactions on Information Theory} {\bfseries 65} (2019) 7596}.

\bibitem{dupuis2014one}
F.~Dupuis, M.~Berta, J.~Wullschleger and R.~Renner, \emph{One-shot decoupling},
  \href{https://doi.org/10.1007/s00220-014-1990-4}{\emph{Communications in
  Mathematical Physics} {\bfseries 328} (2014) 251}.

\end{thebibliography}\endgroup

\end{document}